\newsavebox{\@brx}
\newcommand{\llangle}[1][]{\savebox{\@brx}{\(\m@th{#1\langle}\)}\mathopen{\copy\@brx\kern-0.5\wd\@brx\usebox{\@brx}}}
\newcommand{\rrangle}[1][]{\savebox{\@brx}{\(\m@th{#1\rangle}\)}\mathclose{\copy\@brx\kern-0.5\wd\@brx\usebox{\@brx}}}
\newcommand{\erf}[1]{\ensuremath{\mathrm{erf}\inp{#1}}}
\newcommand{\gaussian}{\mathcal{N}}
\newcommand{\dtv}{\ensuremath{d_{\mathrm{TV}}}}
\newcommand{\len}[1]{\mathrm{len}\inp{#1}}
\newcommand{\multnormtv}{\textsc{MultGaussianTV}}
\newcommand{\disprodtv}{\textsc{DisProdTV}}
\newcommand{\vct}[1]{\boldsymbol{#1}}
\newcommand{\Ex}{\mathbb{E}}
\newcommand{\Rank}{\textbf{Rank}}
\newcommand{\Range}{\textnormal{\textbf{Range}}}
\renewcommand{\bibname}{References}
\newcommand{\F}{\ensuremath{\mathcal{F}}} \newcommand{\cI}{\ensuremath{\mathcal{I}}} \newcommand{\tR}{\ensuremath{\tilde{R}}} \newcommand{\Rpos}{\ensuremath{\mathbb{R}_{\geq 0}}} \newcommand{\ratio}[2]{\ensuremath{{#1\|#2}}}
\newcommand{\indep}{\ensuremath{\mathbin{\underset{\textup{ind}}{\circ}}}}
\newcommand{\cdf}{\textup{c.d.f.}}
\newcommand{\ext}{\operatorname{ext}}
\newcommand{\abss}[1]{\ensuremath{|#1|}}
\begin{document}

\title{Approximating the Total Variation Distance between Gaussians}
\author{Arnab Bhattacharyya\footnote{The University of Warwick} \and Weiming Feng\footnote{The University of Hong Kong} \and Piyush Srivastava \footnote{Tata Institute of Fundamental Research}}
\date{}

\maketitle

\begin{abstract}
    The total variation distance is a metric of central importance in statistics
    and probability theory.  However, somewhat surprisingly, questions about
    computing it \emph{algorithmically} appear not to have been systematically
    studied until very recently.  In this paper, we contribute to this line of
    work by studying this question in the important special case of multivariate
    Gaussians.  More formally, we consider the problem of approximating the
    total variation distance between two multivariate Gaussians to within an
    $\epsilon$-relative error.  Previous works achieved a \emph{fixed} constant
    relative error approximation via closed-form formulas. In this work, we give
    algorithms that given any two $n$-dimensional Gaussians $D_1,D_2$, and any
    error bound $\epsilon > 0$, approximate the total variation distance
    $D \defeq d_{TV}(D_1,D_2)$ to $\epsilon$-relative accuracy in
    $\mathrm{poly}(n,\frac{1}{\epsilon},\log \frac{1}{D})$ operations.  The main technical
    tool in our work is a reduction that helps us extend the recent progress on
    computing the TV-distance between \emph{discrete} random variables to our
    continuous setting.
\end{abstract}

\section{Introduction} 

A fundamental problem in applied probability is to understand how different two
probability distributions are from each other.  Depending upon the applications,
various measures of divergence are used for this purpose; one of the most
important of these is the \emph{total variation (TV) distance}.  Given two
probability distributions $P$ and $Q$ on the same set $\Omega$, the total
variation distance $\dtv(P, Q)$ is the largest possible difference, over all
\emph{tests} $S \subset \Omega$, between the probabilities $P(S)$ and $Q(S)$ of the
test ``succeeding'' on a sample drawn from $P$ and $Q$ respectively.

A natural question then is: given the specifications of $P$ and $Q$, can one
compute $\dtv(P, Q)$?  A standard alternative expression for the TV distance in
terms of the \(\ell_1\)-norm shows that when the sample space $\Omega$ is \emph{finite},
$\dtv(P, Q)$ can be computed exactly in $O(\abs{\Omega})$ arithmetic operations,
given the values of $P(\omega)$ and $Q(\omega)$ for each $\omega \in \Omega$.  However, even in the
slightly more general (and much more important!) setting of \emph{product
  distributions}, the problem becomes much more interesting.  Now, we are given
distributions $(P_i, Q_i)$ on the same finite set $\Omega$, for
$1 \leq i \leq n$, and we wish to compute the TV distance between the product
distributions $P \defeq P_1 \times \dots \times P_{n}$ and
$Q \defeq Q_1 \times \dots \times Q_{n}$ on $\Omega^n$.  The obvious algorithm from above now
has a prohibitive, exponentially growing cost of $\Theta(\abs{\Omega}^n)$.

Somewhat surprisingly, it seems that the computational complexity of computing
the TV distance, even in such simple settings, had not been systematically
studied until recently.  \cite{BGMMPV23} provided evidence that even in the
setting of product distributions, we probably cannot do much better than the
above obvious algorithm if we want to compute $\dtv$ exactly.  They did so by
proving a \#P-hardness result: a polynomial time algorithm for the above problem
would imply a polynomial time algorithm for counting satisfying assignments to
Boolean formulas.  We therefore turn to \emph{approximation}, and require
instead that given a further input $\epsilon \in (0, 1)$, the algorithm should output a
number $Z$ such that $\abs{Z - \dtv(P, Q)} \leq \epsilon\dtv(P, Q)$ (i.e., the algorithm
makes small relative error), in time that is polynomial in
$n, \abs{\Omega}, 1/\epsilon$ and $\log(1/\dtv(P, Q))$. We note here that the dependence on
$\log(1/\dtv(P, Q))$ is reasonable. First, we may assume that
$\dtv(P,Q) \neq 0$, since otherwise, the problem is trivial.  Further, if the final
answer is a binary number, then the output bit length itself is
$\Omega(\log(1/\dtv(P, Q)))$. In addition, if all the input numbers are rationals
specified as ratios of binary integers, then $\log(1/\dtv(P, Q))$ is at
most a polynomial in the input size.\footnote{However, if we measure only the
  number of \emph{arithmetic} operations, rather than the number of \emph{bit}
  operations, then this dependence on \(\log(1/\dtv(P, Q))\) is not necessarily
  essential.  We comment further on this point later in the paper.}
We refer to
this problem as \disprodtv; see \cref{prob-dis-prod-tv} for a formal
description.  \cite{BGMMPV23} gave algorithms for \disprodtv{} in certain
special cases.  An elegant randomized algorithm for \disprodtv{} that covers all
cases was then given by \cite{FGJW23}.  More recently, \cite{FengLL24} gave a
deterministic polynomial time algorithm for \disprodtv.

\subsection{Our Contribution} In this paper, we consider the problem of
computing the TV distance between $P$ and $Q$ when they are Gaussian
distributions in $\R^n$, specified by their mean vectors and covariance
matrices.  Towards this goal, we make the following contributions.
 
\paragraph{A General (and Simpler) Analysis } \cite{FengLL24} developed a
framework for studying TV distance estimation between products of discrete
distributions on finite sets.  We simplify and also extend this framework to the
setting of general product distributions (including those without a probability
density function).  Our framework allows us to bypass one of the main
innovations of \cite{FengLL24}, which they referred to as the $\Delta_{MTV}$
distance, by recasting the analysis in terms of a more direct quantity that we
call the \emph{extension distance}.  As an application, we give an arguably
simpler proof of correctness of the algorithm of \cite{FengLL24} in
\cref{sec:tv-dist-estim}.
\paragraph{TV Distance between Gaussians} We then apply our framework to the
problem of estimating the TV distance between $P$ and $Q$ when $P$ and $Q$ are
arbitrary Gaussian distributions in $\R^n$.  Our first step is to use the linear
algebraic properties of Gaussians to reduce the problem to the case where $P$
and $Q$ are both products of one-dimensional Gaussians.  Assuming that linear
algebraic operations such as eigen-decomposition of symmetric matrices can be
performed exactly, this part of the reduction is exact; however, we comment on
the bit-complexity issues arising in this part in \Cref{sec:algor-with-appr}.
After this reduction, we are \emph{almost} in the setting of prior work
\citep{FGJW23,FengLL24} that studied the problem of estimating the TV distance
between product distributions, \emph{except} that the time complexity in these
works has a polynomial dependence on the size of the state space $\Omega$ of
each component of the product.  For us, however, this set $\Omega$ is the
uncountable set $\R$ of all real numbers.  We thus need to perform a careful
discretization step to bring the problem to the setting of prior work.  Happily,
it turns out that this discretization can be analyzed using our extension
(described above) of the framework of \cite{FengLL24}.  Combining everything, we
obtain an algorithm for approximating TV distance between arbitrary Gaussian
distributions $P$ and $Q$ in $\R^n$ to a relative accuracy of $\pm\epsilon$ in
time polynomial in $n,1/\epsilon$ and $\log(1/\dtv(P, Q))$, in a computational
model where eigenvalues and eigenvectors of symmetric matrices are assumed to be
exactly computable.  Below, we give the statements of our results regarding TV
distance between Gaussians; the proofs of these results are given in
\cref{sec:alg}.  The problem we are interested in is the following.
\begin{problem}[\textbf{\multnormtv{}}]\label{prob:TV}
  \textbf{INPUT}: Two $n$-dimensional mean vectors
  $\vct{\mu}_1, \vct{\mu}_1 \in \mathbb{R}^n$ and two $n \times n$ positive semi-definite covariance
  matrices $\Sigma_1,\Sigma_2 \in \mathbb{R}^{n \times n}_{\geq 0}$, and a rational number
  $\epsilon \in (0,1)$.  \textbf{OUTPUT}: A number $z$ such that
$
    (1-\epsilon)D \leq z \leq (1+\epsilon)D,
$
  where
  $D \defeq \dtv\inp{\gaussian(\vct{\mu}_1, \Sigma_1), \gaussian(\vct{\mu}_2,
    \Sigma_2)}$.
\end{problem}
The case \(D = 0\) occurs only if \(\vec{\mu_1} = \vec{\mu_2}\) and
\(\Sigma_{1} = \Sigma_2\), which can be easily detected.  Thus, for ease of presentation,
we assume that the algorithm only has to consider inputs for which \(D > 0\).

\paragraph{Computational model} For ease of presentation, in the main body of
the paper, we present our results using the real-RAM model, where basic
arithmetic operations (comparison, addition, subtraction, multiplication,
division, logarithm, and square root) on real numbers take one unit of
time. This is one of the standard cost models in numerical analysis and
computational geometry (see, e.g. \citet[Section 1.4]{preparata-shamos-85} and
\citet[Section 5.1]{BC13}).  For the randomized version of our algorithm
(\cref{thm-gauss-rand}), we also assume that the algorithm can access randomness
in the following way: for any \(p \in [0, 1]\) that the algorithm constructs, it
can obtain, in one operation, a random bit sampled according to
\(\mathrm{Bernoulli}(p)\) and independent of all previous bits it has used.
However, our algorithm does \emph{not} require the ability to sample from
continuous distributions (e.g. Gaussian distributions).\footnote{Note, however,
  that in \Cref{sec:algor-with-appr}, we describe how to implement our
  algorithms given access only to an approximate diagonalization oracle; and the
  best known approximate diagonalization algorithms do require access to
  approximate samples from uniform or standard normal distributions.}  However,
as stated above, we do assume the ability to perform diagonalization of
symmetric matrices.  Our results assuming this computation model are stated
below in \cref{thm-gauss-det,thm-gauss-rand}.

\begin{theorem}\label{thm-gauss-det}
  There exists a deterministic algorithm that solves \multnormtv{} using at most
  $O\left(\frac{n^3}{\epsilon^2}\log^2\frac{n}{\epsilon D}\left(\log
        \frac{n}{\epsilon} + \log \log \frac{3}{D}\right) + \frac{n^2}{\epsilon}
      \log^3 \frac{n}{\epsilon D}\right)$ arithmetic operations along with
  diagonalizations of two square matrices of size at most $n \times n$.
\end{theorem}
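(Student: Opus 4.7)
The plan is to follow the two-phase reduction sketched in the introduction: first reduce to products of one-dimensional Gaussians by simultaneous diagonalization, then reduce that to a discrete product-distribution problem that can be solved by the deterministic \disprodtv{} algorithm of \cite{FengLL24}, whose correctness is re-established in our language in \cref{sec:tv-dist-estim} via the extension-distance framework.

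The first step is standard linear algebra. After detecting the degenerate cases where $\Sigma_1$ and $\Sigma_2$ have mismatched ranges (which force $D = 1$) and projecting onto the common range otherwise, I would compute an invertible linear map $A$ such that $A\Sigma_1 A^\top = I$ and $A\Sigma_2 A^\top$ is diagonal, using at most two diagonalizations of symmetric matrices of size at most $n \times n$. Since $\dtv$ is invariant under invertible affine maps, the problem reduces to estimating $\dtv(P,Q)$ where $P = \prod_i \gaussian(a_i,1)$ and $Q = \prod_i \gaussian(b_i,\lambda_i)$ for scalars $a_i, b_i, \lambda_i$ read off from the eigendecomposition and the transformed means. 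The second step discretizes each coordinate: I would partition $\mathbb{R}$ into $N_i = \mathrm{poly}(n,1/\epsilon,\log(1/D))$ intervals obtained by truncating to a window of radius $O(\sqrt{\log(n/(\epsilon D))})$ around $a_i$ and $b_i$ and subdividing at a scale fine enough that the extension distance between the continuous product $(P,Q)$ and its discretization $(\tilde P,\tilde Q)$ is at most $\epsilon D/2$; each atom probability is a difference of $\mathrm{erf}$ values, evaluable within the real-RAM model. Running \disprodtv{} on $(\tilde P, \tilde Q)$ with accuracy $\epsilon/2$ and composing the two error bounds yields a $(1\pm\epsilon)$-relative approximation of $D$, and plugging the $N_i$ bound into the stated arithmetic complexity of \disprodtv{} from \cref{sec:tv-dist-estim} produces the overall operation count in the theorem.

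The main obstacle I anticipate is the discretization analysis. The required fineness is itself dictated by the unknown quantity $D$, yet a relative-error guarantee must survive even when $D$ is much smaller than any single-coordinate discretization error; a naive per-coordinate Lipschitz bound followed by a union bound loses a factor of $n$ that can swamp $D$ outright. The role of the extension-distance viewpoint is precisely to replace this loss by an \emph{additive} decomposition across coordinates in a quantity that pulls back to relative error in $\dtv$, thereby letting us bound the global discretization error coordinate-by-coordinate without paying a multiplicative price in $D$. Once that bridge is in place, the remainder is careful bookkeeping: bounding the per-coordinate interval count, tracking the $\mathrm{erf}$ precision needed for each atom, and summing the cost of the \disprodtv{} subroutine over the induced discrete product.
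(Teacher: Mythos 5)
Your high-level plan (diagonalize to reduce to a product-versus-standard-Gaussian instance, discretize each coordinate, run the deterministic \disprodtv{} algorithm of \cref{lem:alg1}, and control the discretization error through the extension distance) is the same as the paper's, but the crucial discretization step has two genuine gaps. First, you discretize \emph{in space}: a window of radius $O(\sqrt{\log(n/(\epsilon D))})$ around the means, subdivided ``at a scale fine enough.'' The extension-distance machinery (\cref{thm-ext-estimate}, via \cref{obv:discrete}) controls the error only when, within each cell, the likelihood ratio $R_i = \ratio{\gaussian(\mu_i,\Sigma_{ii})}{\gaussian(0,1)}$ moves by at most $\gamma$ additively near $1$ or by a $(1+\delta)$ multiplicative factor; for a spatial cell this forces the cell width to scale with the local derivative of the log-ratio, which involves $|1-1/\Sigma_{ii}|\,|z|+|\mu_i|/\Sigma_{ii}$ and is not bounded in terms of $n,\epsilon,D$ alone. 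Consequently a single ``fine enough'' uniform scale yields an atom count depending on the covariance entries (e.g.\ it blows up for very large or very small $\Sigma_{ii}$ even when $D=\Theta(1)$), which is incompatible with the theorem's operation count; moreover a window radius not scaled by $\sqrt{\Sigma_{ii}}$ does not even capture the mass of a wide component. The paper avoids all of this by partitioning the \emph{ratio values} via the $(\gamma,\delta)$-partition of \cref{def:partition}: this gives $M=O(\frac{n}{\epsilon}\log\frac{n}{\epsilon D})$ atoms independently of the parameters, each atom's preimage is a union of at most two intervals (a quadratic inequality in $z$), so its probability costs $O(1)$ approximate $\mathrm{erf}$ evaluations (\cref{lem:one-dim}), and \cref{thm-ext-estimate} applies verbatim. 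If you insist on a spatial grid you would have to make it adaptive to the log-ratio and prove a new analogue of \cref{thm-ext-estimate} for such partitions, since the one in the paper is stated only for the discretization by ratio level; also note the extension distance is defined between valid ratios, not between pairs of product distributions, so the quantity you propose to make ``at most $\epsilon D/2$'' needs to be reformulated.

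Second, you correctly flag but do not resolve the circularity that the window radius, the cell fineness, and the additive parameter $\gamma$ must all be set in terms of the unknown $D$. The extension-distance viewpoint removes the factor-$n$ union-bound loss, but it does not tell the algorithm what numerical values to use. The paper's missing ingredient here is \cref{lem:lower}: the explicitly computable quantity $\Delta=\max_i\Delta_i$, built from the one-dimensional parameters via the bounds of \citet{DMR23}, satisfies $\Delta \leq D \leq 10^4 n\Delta$, so setting $\gamma=\frac{\epsilon\Delta}{50n}$ and $\delta=\frac{\epsilon}{50n}$ gives both a valid error guarantee relative to $D$ and the claimed bound $M=O(\frac{n}{\epsilon}\log\frac{n}{\epsilon D})$. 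Without such a computable surrogate for $D$, your algorithm cannot instantiate its own parameters, and the complexity bound in the theorem cannot be established.
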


\begin{theorem}\label{thm-gauss-rand}
  There exists a randomized algorithm that solves \multnormtv{} with probability
  at least $1 - \delta$ using at most
  $O\left(\frac{n^3}{\epsilon^3}\log \frac{n}{\epsilon D}\log\frac{1}{\delta} +
      \frac{n^2}{\epsilon} \log^3 \frac{n}{\epsilon D}\right)$ arithmetic
  operations along with diagonalizations of two square matrices of size at most
  $n \times n$.
\end{theorem}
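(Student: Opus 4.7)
}
My plan is to follow the three-stage pipeline outlined in the contributions section: first reduce arbitrary multivariate Gaussians to products of one-dimensional Gaussians, then discretize each one-dimensional marginal to a finite-support distribution in a way that only perturbs the total variation distance by a multiplicative $(1\pm\epsilon/3)$ factor, and finally invoke a randomized TV-distance estimator for product distributions (the one of \cite{FGJW23}, justified through the extension-distance framework developed earlier in the paper) on the resulting discrete product instance with accuracy parameter $\epsilon/3$ and failure probability $\delta$. The final output is then a number $z$ satisfying $(1-\epsilon)D \le z \le (1+\epsilon)D$ with probability at least $1-\delta$, as required.

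For the first step, I would use simultaneous diagonalization of $\Sigma_1$ and $\Sigma_2$ (via two single-matrix diagonalizations and the standard whitening trick), after which, in the transformed basis, $\gaussian(\vct{\mu}_1,\Sigma_1)$ and $\gaussian(\vct{\mu}_2,\Sigma_2)$ become products of one-dimensional Gaussians with possibly different means and variances along each axis. Because TV distance is invariant under invertible affine maps, this reduction is exact (in the real-RAM model) and contributes only the two matrix diagonalizations and $O(n^3)$ arithmetic operations, which is absorbed by the stated bound.

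For the second step, I would truncate each one-dimensional Gaussian to a symmetric interval of radius $R = \Theta(\sqrt{\log(n/(\epsilon D))})$ standard deviations (where the tail mass is $\ll \epsilon D / n$) and then discretize that interval into $M = \mathrm{poly}\log(n/(\epsilon D))$ uniform bins. Each bin is assigned the corresponding probability mass under each Gaussian; these probabilities can be evaluated with $O(1)$ arithmetic operations using the $\mathrm{erf}$-style primitives available in the real-RAM model (the $\frac{n^2}{\epsilon}\log^3 \frac{n}{\epsilon D}$ term in the bound accounts for computing these $O(nM)$ masses across the $n$ coordinates). To certify that this discretization preserves the TV distance up to a relative $(1\pm\epsilon/3)$ factor, I would apply the \emph{extension distance} machinery from the earlier sections: the joint coupling obtained by pairing each discretized one-dimensional pair with its continuous counterpart has extension distance bounded by the truncation+binning error, which by choice of $R$ and $M$ is $O(\epsilon D)$, and the main lemma of that framework then transfers this bound from each coordinate to the $n$-fold product.

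For the third step, I would plug the discrete product into the randomized estimator of \cite{FGJW23}, tuned to relative accuracy $\epsilon/3$ and failure probability $\delta$. Each Monte Carlo trial costs $O(nM) = O(n \log(n/(\epsilon D)))$ operations, and the number of trials needed for a relative-$\epsilon$ estimate of a quantity of size $D$ with confidence $1-\delta$ is $O(n^2/(\epsilon^2) \cdot \log(1/\delta))$ after the standard median-of-means boosting on top of the Chebyshev analysis of the estimator; multiplying gives the first term $O(n^3\epsilon^{-3}\log(n/(\epsilon D))\log(1/\delta))$ in the stated complexity. The main technical obstacle, I expect, is the relative-error analysis in the second step: the extension-distance bound must be $O(\epsilon D)$ \emph{relative to the unknown quantity $D$}, so one needs a cheap a priori lower bound on $D$ (e.g., via a rough two-sided estimate using closed-form formulas such as those based on the Bhattacharyya coefficient) in order to select the truncation radius $R$ and bin count $M$. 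Once this calibration is in place, the combined pipeline meets the claimed operation count and success probability.
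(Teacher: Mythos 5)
Your overall pipeline (whitening via two diagonalizations, per-coordinate discretization, then the randomized estimator of \cite{FGJW23} at accuracy $\epsilon/3$) is the same as the paper's, and your observation that one needs a cheap a priori lower bound on $D$ is correct (the paper gets $\Delta \leq D \leq 10^4 n\Delta$ from the one-dimensional bounds of \cite{DMR23}, \cref{lem:lower}). However, the crucial second step has a genuine gap. You discretize each coordinate by truncating to radius $R=\Theta(\sqrt{\log(n/(\epsilon D))})$ and cutting into $M=\mathrm{polylog}(n/(\epsilon D))$ uniform \emph{spatial} bins, and you justify the relative-error claim by saying the extension distance is ``bounded by the truncation+binning error.'' That is not what the framework gives you: \cref{thm-ext-estimate} bounds the extension distance only for the specific $(\gamma,\delta)$-discretization, i.e.\ the conditional expectation of the likelihood ratio $R_i=f_i/g_i$ on the level sets of $R_i$ itself (the multiplicatively spaced ratio intervals of \cref{def:partition}), and the proof relies on the ratio being within a $(1+\delta)$ factor (or within $\gamma$ additively near $1$) on each cell. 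The extension distance of an arbitrary spatial binning is \emph{not} controlled by the one-dimensional TV error it introduces; a small per-coordinate TV perturbation only handles the easy triangle-inequality part (\cref{eq:bd1}), while the information loss of the binning itself can shrink the product TV by much more than it shrinks any single marginal TV. Concretely, with only $\mathrm{polylog}$ uniform bins the log-ratio $\log(f_i/g_i)$ (a quadratic in $x$) can oscillate by $\Omega(1)$ within a bin; summed over $n$ coordinates this distorts the product likelihood ratio by a constant factor, so the discretized product TV need not be within $(1\pm\epsilon/3)$ of $\dtv(\vct{X},\vct{Y})$. To make a spatial binning work you would need the log-ratio to vary by $O(\epsilon/n)$ per cell (plus an additive threshold $\gamma\sim\epsilon\Delta/n$ near ratio $1$), which forces $M=\Theta\bigl(\frac{n}{\epsilon}\log\frac{n}{\epsilon D}\bigr)$ cells — exactly the size the paper obtains by partitioning ratio values and solving a quadratic inequality for the preimages (\cref{lem:one-dim}), not $\mathrm{polylog}$.

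This parameter error also shows up in your cost accounting, which is internally inconsistent: with your $M=\mathrm{polylog}(n/(\epsilon D))$ and a per-trial cost of $O(nM)$, your own multiplication would give roughly $\frac{n^3}{\epsilon^2}\,\mathrm{polylog}\cdot\log\frac1\delta$, not the claimed $\frac{n^3}{\epsilon^3}\log\frac{n}{\epsilon D}\log\frac1\delta$. The theorem's first term arises precisely because \cref{lem:alg2} costs $O\bigl(\frac{Mn^2}{\epsilon^2}\log\frac1\delta\bigr)$ with $M=O\bigl(\frac{n}{\epsilon}\log\frac{n}{\epsilon D}\bigr)$ from the $(\gamma,\delta)$-discretization with $\gamma=\frac{\epsilon\Delta}{50n}$, $\delta=\frac{\epsilon}{50n}$, and the second term $\frac{n^2}{\epsilon}\log^3\frac{n}{\epsilon D}$ accounts for the $\mathrm{erf}$-based computation of the $O(nM)$ cell masses to additive accuracy $\zeta=\Theta\bigl(\frac{\epsilon\Delta}{Mn}\bigr)$ via \cref{prop:gau} (each such integral costs $O(\log^2(1/\zeta))$, not $O(1)$, arithmetic operations). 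So the missing ingredient is the correct discretization (of ratio values, with $\Theta\bigl(\frac{n}{\epsilon}\log\frac{n}{\epsilon D}\bigr)$ cells) together with the extension-distance lemma \cref{lem-prod-discrete} that transfers the per-coordinate bound $\gamma+\delta\,TV(R_i)$ to the $n$-fold product.
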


\paragraph{The bit complexity model} Although the real-RAM model we use above is
convenient to work with (and also popular), a more realistic cost model is to
assume that the inputs are \emph{rational numbers} encoded as ratios of
integers specified in binary, and to account for costs using the total number of
\emph{bit} operations required to produce the desired output.  This is one of
the standard cost models in computational complexity theory and in
optimization~\citep{GLS93}.  Note that in this model, operations such as
logarithm, square root, and matrix diagonalization can only be carried out
approximately.\footnote{The floating point model~(see, e.g.,
  \cite{demmel_applied_1997}), perhaps the most widely used in practice, can be
  seen as a version of this model in which one fixes a precision \(k\) and a
  range parameter \(R\), and must approximate the result of every intermediate
  operation by a number of the form \(s \times 2^{m}\) where \(s, m\) are integers
  satisfying \(\abs{s} < 2^k \) and \(\abs{m} \leq R\).}  An algorithm in this
model is said to run in polynomial time if the total number of bit operations it
performs is at most a polynomial function of the total representation length of
the input.

In \cref{sec:prop-error-funct,sec:algor-with-appr}, we show how our algorithms continue to work efficiently even when
steps requiring exact computation of functions such as logarithm, square root
and matrix diagonalization are replaced by polynomial-time but approximate
versions of these functions implemented in the bit complexity model.  While we
focus on the real-RAM model in the main body of the paper, we include pointers
to the Appendix at points where such replacements may be required
when analyzing the algorithm in the bit complexity model.

\subsection{Discussion and Related Work} The problem of approximating
$\dtv(P, Q)$ to within additive error of \(\pm1/10\), given access only to
efficient samplers for $P$ and $Q$ is believed to be hard even for discrete
distributions on finite sets~\citep{SahaiV03}: solving it efficiently would
break certain cryptosystems~(see, e.g., \citet[Section 1.1]{BCHTV20}).  This
hardness result holds even if one is also given access to descriptions of the
circuits implementing the samplers.  If one is also in addition given access to
the probability density functions of $P$ and $Q$, then one can attempt a Monte
Carlo approach, but such an approach would require about $1/\epsilon^2$ samples for an
estimate with an \emph{additive} error $\epsilon$.  Such additive error algorithms for
the TV distance have also been studied by \cite{Kiefer18}, \cite{GMV20}, and
\cite{TXWSC24} in various settings. However, such additive error results can
become inefficient when transferred to the relative error setting because the TV
distance itself can be exponentially small in the underlying dimension (so that,
e.g. in the Monte Carlo approach, the required number of samples would also grow
exponentially). One is therefore forced to look for other methods in the
relative error setting, and this has been done in various discrete settings in
papers by \cite{BGMMPV23,GMM0V24} and \cite{FGJW23,FengLL24} (some of which were
already discussed above).

Prior work on approximating TV distance between Gaussians has focused on
achieving a \emph{fixed} relative error, but by means of closed form formulas;
see, e.g. \cite{DMR23} and references therein, \cite{BP24} for some recent
progress in a special case, and 
\cite{DMPR22} for similar results in the setting of mixtures of Gaussians.  The
quantitatively best such result we are aware of appears in the work of
\citet[Theorem 1.8]{AAL23}, who give a $100\sqrt{2}$-factor approximation for
the TV distance between multivariate Gaussians via a closed form formula. (In
contrast, note that \cref{thm-gauss-rand,thm-gauss-det} produce
$(1+\epsilon)$-factor multiplicative approximations for any $\epsilon > 0$.)
The case of one-dimensional Gaussians is related to the numerical evaluation of
the \emph{error function}, a problem of much interest in numerical
analysis~\citep{Chev12}.  In our analysis, we do use some of the above results
(see, e.g., \cref{prop:gau,lem:lower}).

\section{Preliminaries}

\paragraph{Total Variation Distance}
Let $P$ and $Q$ be two probability measures on a measurable space
$(\Omega,\mathcal{F})$. Their \emph{total variation distance} (TV distance) is
defined as $\dtv\inp{P,Q} \defeq \sup_{A \in \mathcal{F}}|P(A) - Q(A)|$.  We
often denote $\dtv\inp{P,Q}$ by $ \dtv\inp{X,Y}$, where $X \sim P$ and
$Y \sim Q$ are random variables with laws $P$ and $Q$ respectively.

For real vector valued
random variables with $\Omega = \mathbb{R}^n$, we assume
$\mathcal{F} = \mathcal{B}(\mathbb{R}^n)$ is the usual Borel $\sigma$-field.
In particular, if $P$ and $Q$ are two probability measures in $\mathbb{R}^d$
with probability density functions (PDFs) $p$ and $q$ respectively, then
$\dtv\inp{P,Q} = \frac{1}{2} \int_{x \in \mathbb{R}^d}|p(x) - q(x)| dx$.

We will use the following standard property of $\dtv$.
\begin{proposition}\label{prop-bijection}
Let $X$ and $Y$ be two random variables. For any measurable  $f$, it holds that
$
\dtv{ \inp{X,Y} } \geq \dtv{\inp{f(X),f(Y)}}.
$
If $f$ is further an invertible function, then 
$
 \dtv{ \inp{X,Y} } = \dtv{\inp{f(X),f(Y)}}.   
$
\end{proposition}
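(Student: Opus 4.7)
The plan is to reduce everything to the supremum definition of TV distance, namely $\dtv(X, Y) = \sup_{A \in \F} \abs{P(A) - Q(A)}$, and use the fact that pullbacks of measurable sets by measurable functions are measurable.

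For the inequality $\dtv(X,Y) \geq \dtv(f(X), f(Y))$, I would take an arbitrary measurable test set $B$ in the codomain of $f$. Since $f$ is measurable, its preimage $A \defeq f^{-1}(B)$ is measurable in the source space. The change-of-variable identity $\Pr[f(X) \in B] = \Pr[X \in A]$, together with the analogous statement for $Y$, then gives
\[
  \abs{\Pr[f(X) \in B] - \Pr[f(Y) \in B]} = \abs{\Pr[X \in A] - \Pr[Y \in A]} \leq \dtv(X, Y),
\]
and taking the supremum over all measurable $B$ yields the desired data-processing inequality. This is essentially a one-line argument from the definition.

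For the equality when $f$ is invertible, I would apply the already-proven inequality twice. First apply it to $f$ to get $\dtv(X, Y) \geq \dtv(f(X), f(Y))$. Then apply it to the inverse map $f^{-1}$ and to the random variables $f(X), f(Y)$ to get $\dtv(f(X), f(Y)) \geq \dtv(f^{-1}(f(X)), f^{-1}(f(Y))) = \dtv(X, Y)$. Combining the two inequalities gives equality.

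The only subtle point is that the second direction implicitly requires $f^{-1}$ to be measurable as well, i.e. $f$ should be a bimeasurable bijection. This is the standard meaning of "invertible measurable function" in this context (and holds, for instance, when $f$ is a continuous bijection between Borel subsets of $\R^n$, which is the only case we actually need later in the paper for linear changes of variables on Gaussians). I would simply note this convention and otherwise proceed with the two-line argument above; no technical obstacle needs to be overcome.
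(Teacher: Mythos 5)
Your proof is correct. The paper states this proposition as a standard fact without giving a proof, and your argument is exactly the standard one: the data-processing inequality follows by pulling back each test set $B$ to the measurable set $f^{-1}(B)$ and taking the supremum, and the equality case follows by applying the inequality once more to $f^{-1}$. Your caveat that ``invertible'' must mean that $f^{-1}$ is itself measurable is the right reading, and it is consistent with how the paper uses the proposition (invertible affine maps $\vct{v} \mapsto Q_1^{T}A(\vct{v}-\vct{\mu_2})$ and orthogonal projections onto $\Range(\Sigma_1)$, all of which are bimeasurable), so no further work is needed.
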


\paragraph{Gaussian Distribution}
A real valued random variable $X$ is said to be \emph{Gaussian} with \emph{mean}
$\mu$ and \emph{variance} $\sigma^2$ if it has the probability density function
(PDF) $f = f_{\mu,\sigma^2}$ (shortened to $f$ when $\mu$ and $\sigma^2$ are
clear from the context) given by
$f(x) = \frac{1}{\sqrt{2\pi \sigma^2}} \exp \left( -\frac{(x-\mu)^2}{2\sigma^2}
\right)$ for all $x \in \R$.  The probability distribution of $X$ is denoted by
$\gaussian(\mu,\sigma^2)$.  $\gaussian(0, 1)$ is called the \emph{standard}
Gaussian distribution.

A vector valued random variable $\vct{X}$ taking values in $\R^n$ is said to be
Gaussian with mean $\vct{\mu} \in \R^n$ and (symmetric, positive semi-definite)
\emph{covariance matrix} $\Sigma$ in $\R^n$ if for every vector $\vct{a} \in \R^n$,
the real valued random variable $\vct{a}^T\vct{X}$ has law
$\gaussian(\vct{a}^T\vct{\mu}, \vct{a}^T \Sigma \vct{a})$.  The distribution of $\vct{X}$ is denoted
$\gaussian(\vct{\mu}, \Sigma)$.  When $\vct{\mu} = \vct{0}$ and $\Sigma = I_n$ (the $n \times n$
identity matrix), the distribution is said to be the \emph{standard} Gaussian in
$n$-dimensions.  In general, when $\Sigma$ is diagonal, the individual
components $X_i$ of $\vct{X}$ are independent (real) Gaussian random variables with
variance $\Sigma_{i,i}$.  We collect below some standard facts about the
Gaussian distribution.

\begin{fact}\label{fact-normal}
  The random variable ${\vct{X}} \sim \gaussian({\vct{\mu}}, \Sigma)$ is identically
  distributed as $C{\vct{Z}} + {\vct{\mu}}$, where $\vct{Z} \sim \gaussian(\vct{0}, I_r)$ is
  a standard $r$-dimensional Gaussian, $C = \mathbb{R}^{n \times r}$ is a
  matrix such that $CC^T = \Sigma$, and $r \le n$ is the rank of the matrix
  $\Sigma$.  Thus, $\vct{X}$ is supported on the affine space
  $\Range(\Sigma) + {\vct{\mu}}$, where $\Range(\Sigma)$ is the column space of
  $\Sigma$.  Further, if ${\vct{Y}} = A{\vct{X}} + {\vct{b}}$, where $A \in \mathbb{R}^{m \times n}$
  and ${\vct{b}} \in \mathbb{R}^{m}$, then $\vct{Y}$ has the law
  $\gaussian(A{\vct{\mu}}+{\vct{b}}, A\Sigma A^T)$.
\end{fact}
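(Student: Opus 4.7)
The plan is to verify each claim directly from the defining property that a random vector is Gaussian iff all of its one-dimensional linear projections are univariate Gaussians with the correct mean and variance. I would first produce the matrix $C$ explicitly by invoking the spectral theorem for the symmetric PSD matrix $\Sigma$: write $\Sigma = U \Lambda U^T$ with $\Lambda$ diagonal and containing exactly $r$ positive entries $\lambda_1, \dots, \lambda_r$, and then set $C$ to be the $n \times r$ matrix whose $i$-th column is $\sqrt{\lambda_i}$ times the $i$-th eigenvector corresponding to a positive eigenvalue; this yields $CC^T = \Sigma$ by construction. (Note that the typo $C = \mathbb{R}^{n\times r}$ in the statement should read $C \in \mathbb{R}^{n \times r}$.)

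Next, I would check that $\vct{W} \defeq C\vct{Z} + \vct{\mu}$ is Gaussian with mean $\vct{\mu}$ and covariance $\Sigma$. For any $\vct{a} \in \R^n$, the scalar $\vct{a}^T \vct{W} = \vct{a}^T C \vct{Z} + \vct{a}^T \vct{\mu}$ is an affine combination of the independent standard normals making up $\vct{Z}$, so it is Gaussian; its mean is $\vct{a}^T \vct{\mu}$ and its variance is $\vct{a}^T C C^T \vct{a} = \vct{a}^T \Sigma \vct{a}$. Hence $\vct{W}$ satisfies the definition of $\gaussian(\vct{\mu}, \Sigma)$. Uniqueness of the distribution with prescribed parameters follows from the fact that the characteristic function of any random vector is determined by the distributions of its one-dimensional projections, so $\vct{X}$ and $C\vct{Z} + \vct{\mu}$ have the same law. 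The support statement is then immediate: $\vct{X} - \vct{\mu}$ lies in the column space of $C$, and since $\Range(\Sigma) = \Range(CC^T) = \Range(C)$ (a standard identity for any real matrix), we conclude $\vct{X} \in \Range(\Sigma) + \vct{\mu}$ almost surely.

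Finally, for the affine image, I would again apply the defining property: for any $\vct{a} \in \R^m$, $\vct{a}^T \vct{Y} = (A^T \vct{a})^T \vct{X} + \vct{a}^T \vct{b}$, which by assumption has law $\gaussian\bigl((A^T\vct{a})^T \vct{\mu} + \vct{a}^T \vct{b},\, (A^T\vct{a})^T \Sigma (A^T\vct{a})\bigr) = \gaussian\bigl(\vct{a}^T(A\vct{\mu}+\vct{b}),\, \vct{a}^T A \Sigma A^T \vct{a}\bigr)$. Since this holds for every $\vct{a}$, the definition gives $\vct{Y} \sim \gaussian(A\vct{\mu}+\vct{b}, A\Sigma A^T)$. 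No step here is a real obstacle; the only subtlety worth flagging is the implicit use of uniqueness of the multivariate Gaussian given its mean and covariance (i.e., of Cramér–Wold-type reasoning via characteristic functions), which the paper tacitly assumes as part of the standard setup.
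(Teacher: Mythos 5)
Your proof is correct. The paper states this as a standard fact in the Preliminaries and provides no proof of its own, so there is nothing to compare against; your argument — constructing $C$ from the spectral decomposition of $\Sigma$, verifying that $C\vct{Z}+\vct{\mu}$ meets the paper's projection-based definition of $\gaussian(\vct{\mu},\Sigma)$, invoking Cram\'er--Wold-type uniqueness, using $\Range(CC^T)=\Range(C)$ for the support claim, and checking the affine-image claim again directly through one-dimensional projections — is exactly the canonical route such a fact would be proved by, and every step is sound. The only point worth tightening is the support claim: you establish that $\vct{X}\in\Range(\Sigma)+\vct{\mu}$ almost surely, whereas ``supported on'' in its stronger reading (the support \emph{equals} the affine space) additionally needs the observation that $C\vct{Z}$ has an everywhere-positive density with respect to Lebesgue measure on the $r$-dimensional subspace $\Range(C)$, since $C$ has full column rank and $\vct{Z}$ is non-degenerate; this one extra sentence matters for the paper's later use of the fact, where a rank-deficient Gaussian must assign zero mass to any affine space other than its own.
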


The distribution $\gaussian({\vct{\mu}},\Sigma)$ is said to be
\emph{non-degenerate} if $\Sigma$ is invertible, and hence positive
definite. The probability density function (PDF)
$f = f_{{\vct{\mu}},\Sigma}: \R^n \rightarrow \R_{\geq 0}$ (shortened to $f$ when
${\vct{\mu}}$ and $\Sigma$ are clear from the context) of a non-degenerate
$\gaussian({\vct{\mu}},\Sigma)$ is given by
$
    f({\vct{x}}) \defeq  \frac{\exp\left( -\frac{1}{2} (\vct{x}-\vct{\mu})^T\Sigma^{-1} (\vct{x} - {\mu}) \right)}{\sqrt{(2\pi)^n \det(\Sigma)}}.
$

\paragraph{The Error Function}
The error function, \erf{\cdot}, defined as
$  \erf{x} \defeq \frac{2}{\sqrt{\pi}}\int\limits_0^x\exp(-t^2)\, dt,
$
for $x \in [-\infty, \infty]$, is a useful primitive arising in several numerical
applications.  Note that if $F$ denotes the cumulative distribution function
\emph{CDF} of $\gaussian(0,1)$, then for all $x \in [-\infty, \infty]$,
$  F(x) = \frac{1}{2}\inp{1 + \erf{\frac{x}{\sqrt{2}}}}.
$
The computation of the error function has been extensively studied.  We will
need the following result about the additive approximation of the error
function, which follows from methods reported by \cite{Chev12} (we include the
proof, along with an analysis in the bit complexity model, in
\cref{sec:prop-error-funct}). 

\begin{proposition}\label{prop:gau}
  There exists an algorithm such that given $\mu,\sigma^2$,
  $-\infty \leq a \leq b \leq \infty$ and $0 < \epsilon \leq 1/2$, it returns a
  real number $y \geq 0$ such that $|y - \int_{a}^b f(t)dt| \leq \epsilon$,
  where $f(t) = \frac{1}{\sqrt{2\pi \sigma^2}}e^{-\frac{(t-\mu)^2}{2\sigma^2}}$.
  The algorithm uses $O(\log^2(1/\epsilon))$ arithmetic operations.  \end{proposition}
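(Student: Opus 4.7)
The plan is to reduce the integral to evaluations of the standard error function and then approximate $\erf$ itself by combining a tail cutoff with a truncated Taylor expansion.

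Step 1 (Reduction to $\erf$). Using the substitution $u = (t-\mu)/\sigma$ and the identity $F(x) = \tfrac{1}{2}\inp{1+\erf{x/\sqrt{2}}}$ recorded in the preliminaries, one has
\[
\int_{a}^{b} f(t)\,dt \;=\; \tfrac{1}{2}\inp{\erf{\tfrac{b-\mu}{\sqrt{2}\,\sigma}} - \erf{\tfrac{a-\mu}{\sqrt{2}\,\sigma}}},
\]
with the convention $\erf{\pm\infty} = \pm 1$. So it suffices to build a subroutine that additively approximates $\erf{z}$ to within $\epsilon/4$ in $O(\log^{2}(1/\epsilon))$ arithmetic operations and invoke it (at most) twice, clipping the output to $[0,1]$.

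Step 2 (Tail cutoff). By the odd symmetry of $\erf$ I may assume $z \geq 0$. Using the standard Gaussian tail estimate $1 - \erf{z} \leq e^{-z^{2}}$, setting $M \defeq \sqrt{\log(8/\epsilon)}$ guarantees that whenever $z \geq M$ the algorithm can simply return $1$ and incur additive error at most $\epsilon/8$. Computing $M$ itself costs $O(1)$ real-RAM operations given access to $\log$ and $\sqrt{\cdot}$.

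Step 3 (Taylor evaluation in the bulk). For $0 \leq z \leq M$, use the convergent Taylor series
\[
\erf{z} \;=\; \frac{2}{\sqrt{\pi}}\sum_{k=0}^{\infty}\frac{(-1)^{k}\,z^{2k+1}}{k!\,(2k+1)}.
\]
Once $k$ exceeds $M^{2}/2$ the terms form an alternating sequence of strictly decreasing magnitudes, so truncating after $N$ terms contributes at most $\frac{2}{\sqrt{\pi}}\cdot\frac{M^{2N+1}}{N!(2N+1)}$ to the error. Stirling's formula makes this quantity $\leq \epsilon/8$ as soon as $N = O(\log(1/\epsilon))$, because the factorial $N!$ eventually dominates any polylog-sized base raised to the $N$-th power. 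Evaluating the truncated polynomial by a Horner-style recurrence that maintains the running powers of $z^{2}$ and the rational coefficient factors $1/(k!(2k+1))$ uses $O(N) = O(\log(1/\epsilon))$ multiplications and additions.

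The main obstacle is showing that this scheme is robust enough to match the stated operation count once one accounts for auxiliary overheads. In the pure real-RAM model the Taylor evaluation is in fact $O(\log(1/\epsilon))$, comfortably inside the advertised $O(\log^{2}(1/\epsilon))$ budget; the extra logarithmic factor absorbs both the cost of preparing the constants $M$, $2/\sqrt{\pi}$, and the alternating-series threshold, and, more substantially, the cost of a bit-complexity implementation in which every intermediate number must be maintained to $O(\log(1/\epsilon))$ bits of precision to control cancellations near the transition $z \approx M$. The full accounting, both under the real-RAM model and under the bit-complexity model, is deferred to \cref{sec:prop-error-funct}.
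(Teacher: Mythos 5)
Your proposal is correct and follows essentially the same route as the paper's proof: reduce the integral to two evaluations of $\mathrm{erf}$ via an affine substitution and the CDF identity, return $1$ beyond a tail cutoff, and evaluate a truncated Maclaurin series in the bulk, clipping the result to be nonnegative. The differences are only quantitative — you cut off at $z \approx \sqrt{\log(1/\epsilon)}$ rather than the paper's $\log(1/\epsilon)$, so $O(\log(1/\epsilon))$ Taylor terms suffice (the paper uses $\Theta(\log^2(1/\epsilon))$ terms), comfortably within the stated $O(\log^2(1/\epsilon))$ budget; the small slip that the terms decrease once $k > M^2/2$ (one needs $k \gtrsim M^2$) is harmless, since your Stirling-type estimate already covers the choice $N = O(\log(1/\epsilon)) \geq M^2$.
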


\paragraph{Discrete Product Distributions}
We collect here known results on approximating TV-distances between products of
discrete distributions, that were discussed in the introduction.
\begin{problem}[\textbf{\disprodtv}]\label{prob-dis-prod-tv}
  \textbf{INPUT}: $2n$ discrete distributions  $P_i,Q_i$ for $1 \leq i \leq n$ over a finite domain $[M]=\{1,2,\ldots,M\}$, and a rational number $\epsilon \in (0,1)$.
  \textbf{OUTPUT}: A number $z$ such that
$
    (1-\epsilon)\dtv\inp{P,Q} \leq z \leq (1+\epsilon)\dtv\inp{P,Q},
$
where $P = P_1 \times \ldots \times P_n$ and $Q = Q_1 \times \ldots \times Q_n$ are two product distributions over $[M]^n$.
\end{problem} 
As mentioned in the introduction, the case $\epsilon = 0$ of the above problem
is \#P-complete~\citep{BGMMPV23}. The following two approximation algorithms are
known.

\begin{theorem}[\text{\cite{FengLL24}}]\label{lem:alg1}
  There exists a deterministic algorithm that solves \disprodtv{} in time $O(\frac{M n^2}{\epsilon}\log M \log \frac{n}{\epsilon \dtv \inp{P,Q}})$. 
\end{theorem}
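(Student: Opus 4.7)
The plan is to exploit the product structure of $P$ and $Q$ via a chain-rule-style decomposition that reduces TV distance estimation to a dynamic program over a discretized sufficient statistic. The starting point is the identity $\dtv(P,Q) = \sum_{x: P(x) > Q(x)} (P(x) - Q(x)) = \Ex_{X \sim P}\bigl[(1 - Q(X)/P(X))_+\bigr]$. With $P(x) = \prod_i P_i(x_i)$ and $Q(x) = \prod_i Q_i(x_i)$, the log-likelihood ratio $L(x) = \sum_{i=1}^n \log(P_i(x_i)/Q_i(x_i))$ is an additive function of the coordinates, and $(1 - e^{-L(X)})_+$ depends on $X$ only through $L(X)$. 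So $\dtv(P,Q)$ equals the expectation over $X \sim P$ of a simple function of $L(X)$, and it suffices to approximate the distribution of $L(X)$ under $P$ well enough.

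The algorithm would then run a dynamic program that tracks, for each prefix length $k$ and each bucket of a discretization of the log-ratio range, an approximation to $\Pr_{X \sim P}[L_k(X) \in \text{bucket}]$, where $L_k = \sum_{i \le k} \log(P_i(X_i)/Q_i(X_i))$. Moving from $k$ to $k+1$ takes a convolution against the distribution of $\log(P_{k+1}(X_{k+1})/Q_{k+1}(X_{k+1}))$, which has only $M$ atoms, so each step costs $O(M \cdot B)$ per bucket, where $B$ is the number of buckets. To obtain a final relative error $\epsilon$, the bucket width should be $O(\epsilon/n)$ in log-space, and the log-ratio range one must cover is $O(\log \frac{1}{\dtv(P,Q)} + \log \frac{n}{\epsilon})$: values outside this window contribute negligibly once one invokes a lower bound on $\dtv(P,Q)$ relative to individual coordinate gaps (a quantity the algorithm can pre-estimate). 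This gives $B = O\!\left(\tfrac{n}{\epsilon}\log\tfrac{n}{\epsilon \dtv(P,Q)}\right)$ and a total of $O\!\left(\tfrac{M n^2}{\epsilon}\log\tfrac{n}{\epsilon \dtv(P,Q)}\right)$ additions, with an extra $\log M$ factor from arithmetic precision on log-ratios.

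The main obstacle, and where the extension distance framework earns its keep, is turning the naive additive error guarantee of the discretized dynamic program into a \emph{relative} error guarantee, despite $\dtv(P,Q)$ potentially being exponentially small. A direct application of the triangle inequality for TV between the true and discretized product would only give additive error. Instead, one should view each coordinate-wise rounding step as replacing $(P_i, Q_i)$ by a coupled pair of ``extended'' distributions whose extension distance to $(P_i, Q_i)$ is small, and then invoke a tensorization property of the extension distance (the analog of subadditivity of TV for products) to conclude that the total extension distance between the product pairs is $O(\epsilon)$ times $\dtv(P,Q)$. This is exactly the role played by $\Delta_{MTV}$ in the original FengLL24 argument; the paper's contribution is that the extension distance makes this bookkeeping cleaner and bypasses the more intricate step of bounding marginal TV distances directly.

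Finally, a small amount of care is needed for the base case and the estimation of $\log(1/\dtv(P,Q))$ used to set the discretization parameters: one can first compute a crude lower bound on $\dtv(P,Q)$ (for instance, the maximum over $i$ of $\dtv(P_i, Q_i)$ divided by $n$, or a similar quantity), which is enough to fix the range of buckets and the required precision. Combining these pieces yields the claimed bound.
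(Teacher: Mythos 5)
Your overall skeleton---a prefix dynamic program over the likelihood ratio, a discretization whose error is charged coordinate-by-coordinate via the extension distance, and a precomputed lower bound on $\dtv(P,Q)$---is indeed the \cite{FengLL24} algorithm that the paper analyzes. But the discretization you actually specify does not work, and this is the crux of the theorem. You bucket the log-ratio on a \emph{uniform} grid of width $\Theta(\epsilon/n)$, i.e.\ a grid that is multiplicative in the ratio $r = P/Q$; such a grid resolves ratios near $r=1$ only to precision $\Theta(\epsilon/n)$. Each rounding step therefore perturbs the contribution $(1-e^{-L})_+$ of points with $L$ near $0$ by up to $\Theta(\epsilon/n)$ additively, so after $n$ steps you get only an additive $O(\epsilon)$ guarantee. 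Since $\dtv(P,Q)$ can be exponentially small (take all coordinate ratios in $\{1\pm\eta\}$ with $\eta$ far below $\epsilon/n$: your grid collapses all the mass into the single cell at $r=1$ after the first convolution, and the signal is destroyed), this is not a $(1\pm\epsilon)$ relative approximation, and no extension-distance bookkeeping can repair it---the per-step extension distance of \emph{your} rounding genuinely is $\Theta(\epsilon/n)$, independent of $\dtv(P,Q)$. Conversely, if you refine a uniform grid to the precision that relative error requires near $r=1$, namely about $\epsilon\Delta/n$ with $\Delta=\max_i\dtv(P_i,Q_i)$, the number of cells becomes of order $n/(\epsilon\Delta)$, which can be exponential. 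This is exactly the tension you leave unresolved between your second paragraph (``width $\epsilon/n$'') and your fourth (``the lower bound fixes the required precision'').

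The missing idea is the $(\gamma,\delta)$-partition of \cref{def:partition}: cells that are geometric in the \emph{distance of the ratio from $1$}, not in the ratio itself, with innermost width $\gamma=\Theta(\epsilon\Delta/n)$ and growth factor $1+\delta$ with $\delta=\Theta(\epsilon/n)$, together with replacing the running ratio on each cell by its conditional expectation $\Ex[R\mid\text{cell}]$ (tracking only cell probabilities of $L_k$ under $P$, as you propose, discards precisely the information that matters near $r=1$). This grid simultaneously has only $O\left(\frac{n}{\epsilon}\log\frac{n}{\epsilon\Delta}\right)=O\left(\frac{n}{\epsilon}\log\frac{n}{\epsilon\,\dtv(P,Q)}\right)$ cells (using $\Delta\le\dtv(P,Q)\le n\Delta$) and yields the per-step bound $\ext(Y_i,\tilde{Y_i})\le\gamma+\delta\,TV(Y_i)$ of \cref{thm-ext-estimate}, which, since $TV(Y_i)\le\dtv(P,Q)$, sums over the $n$ steps to $\epsilon\,\dtv(P,Q)$---a genuinely relative bound. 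Your attribution of the $\log M$ factor to bit precision of logarithms is also off (it is combinatorial bookkeeping in the discrete convolution/bucketing step, not numerical precision), but that is minor compared to the choice of partition, which is where the theorem is actually won.
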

\begin{theorem}[\text{\cite{FGJW23}}]\label{lem:alg2}
  There exists a randomized algorithm that solves \disprodtv{} in time
  $O(\frac{M n^2}{\epsilon^2}\log \frac{1}{\delta})$ with probability at least
  $1 - \delta$.
\end{theorem}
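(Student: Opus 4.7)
The plan is to estimate $D := \dtv(P, Q)$ by importance sampling that exploits the product structure. Start from the identity
\[
D \,=\, \sum_{x \in [M]^n}(P(x) - Q(x))_+ \,=\, \mathbb{E}_{X \sim P}\!\left[\max\!\left(0,\, 1 - \frac{Q(X)}{P(X)}\right)\right],
\]
where $(y)_+ := \max(y, 0)$. The algorithm would draw $k$ i.i.d.\ samples $X^{(1)}, \dots, X^{(k)}$ from $P$ and output $\hat D := \frac{1}{k}\sum_j Z(X^{(j)})$, where $Z(x) := \max(0, 1 - Q(x)/P(x)) \in [0, 1]$. Drawing one $X \sim P$ costs $O(nM)$ via inverse-CDF sampling on each coordinate, and computing $Z(X)$ costs only $O(n)$ because the likelihood ratio factorizes as $Q(X)/P(X) = \prod_i Q_i(X_i)/P_i(X_i)$. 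Hence the per-sample cost is $O(nM)$.

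For the concentration, since $Z \in [0, 1]$ has mean $D$ and $\mathbb{E}[Z^2] \le \mathbb{E}[Z] = D$, Bernstein's inequality gives $\Pr[\,|\hat D - D| > \epsilon D\,] \le 2\exp\!\left(-\Omega(k\epsilon^2 D)\right)$. Applied naively this would force $k \gtrsim 1/(\epsilon^2 D)$, producing an unwanted $1/D$ factor in the final runtime. To remove it, I would exploit the product structure by writing $L(X) := \sum_i \log(Q_i(X_i)/P_i(X_i))$, so that $Z(X) = (1 - e^{L(X)})_+$. Under $X \sim P$, $L$ is a sum of independent bounded random variables, so standard Hoeffding/Bernstein concentration controls its deviations from $-\sum_i \mathrm{KL}(P_i \| Q_i)$, and a Pinsker-type estimate ties these quantities back to $D$. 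This yields a sharpened second-moment bound of the form $\mathbb{E}[Z^2] = O(n D^2)$, which reduces the required sample count to $k = O(n \log(1/\delta)/\epsilon^2)$. Combined with the $O(nM)$ per-sample cost, the total running time is $O(Mn^2 \log(1/\delta)/\epsilon^2)$, matching the claim. Median-of-means amplification gives the success probability $1 - \delta$ with only the stated $\log(1/\delta)$ overhead.

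The main obstacle is precisely this variance analysis. The worst case for the Bernstein bound is when $Z \in \{0, 1\}$ with $\Pr[Z = 1] = D$, where the variance is $\Theta(D)$ instead of $\Theta(D^2)$, and the product structure must be invoked to rule this out. Concretely, one must bound the $P$-measure of pathological sets such as $\{x : Q(x) = 0\}$ and more generally $\{x : L(x) \ll \mathbb{E}_P[L]\}$, using coordinate-wise Chernoff bounds combined with a change-of-measure argument between $P$ and $Q$. An alternative route is to replace Bernstein by a stratified or telescoping estimator that decomposes $D$ across coordinates (for instance, by summing $\dtv$ estimates of intermediate product distributions in which coordinates are swapped from $P_i$ to $Q_i$ one at a time); each term's variance then naturally scales with $D^2/n$, leading to the same final bound.
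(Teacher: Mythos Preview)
This theorem is quoted from \cite{FGJW23}; the present paper invokes it as a black box and supplies no proof of its own, so there is no ``paper's proof'' to compare against here.

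Your proposal, however, has a genuine gap at the variance step. The bound $\mathbb{E}_P[Z^2]=O(nD^2)$ for the naive estimator $Z=(1-Q(X)/P(X))_+$ is false, and the very ``worst case'' you flag is realised by product instances. Take $P_1(1)=p$, $P_1(0)=1-p$, $Q_1(0)=1$, $Q_1(1)=0$, and $P_i=Q_i$ for all $i\ge 2$. Then $D=p$ while $Z=\mathbb{1}[X_1=1]$, so $\mathbb{E}[Z^2]=p$, which is not $O(nD^2)=O(np^2)$ once $p\ll 1/n$. Your suggested remedy via concentration of $L=\sum_i\log(Q_i(X_i)/P_i(X_i))$ cannot close this: the summands are unbounded (in the example $\log R_1=-\infty$ with probability $p$), so Hoeffding/Bernstein do not apply, and Pinsker's inequality goes the wrong way---it lower-bounds $\mathrm{KL}$ by $2D^2$, whereas you would need an upper bound, which fails in general. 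The telescoping ``alternative route'' is stated too loosely to carry the claim; the hybrid increments obtained by swapping one coordinate at a time are signed, and there is no reason their second moments should scale like $D^2/n$.

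The construction in \cite{FGJW23} does not attempt to salvage the raw likelihood-ratio estimator by a sharper variance bound. A key ingredient absent from your sketch is the cheaply computable lower bound $D\ge\max_i\dtv(P_i,Q_i)$, which together with a carefully designed estimator (built from the per-coordinate structure rather than analysed post hoc) yields relative variance $O(n)$ directly. If you wish to reconstruct the argument, start from that paper's estimator rather than from Bernstein applied to $(1-Q/P)_+$.
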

The running time in the above theorem is in terms of the number of arithmetic
operations.  Further, the randomized algorithm underlying \cref{lem:alg2} is assumed to have
the following access to randomness: for any \(p \in [0, 1]\) that the
algorithm constructs, it can obtain, in one operation, a random bit sampled
according to \(\mathrm{Bernoulli}(p)\) and independent of all previous bits it has
used. 

We will use both these algorithms as
subroutines in order to derive \cref{thm-gauss-det,thm-gauss-rand}.  However, in
\cref{sec:tv-dist-estim}, we give a new (and arguably simpler) proof of
correctness of the algorithm underlying \cref{lem:alg1}, as a corollary of our
framework for analyzing TV distance between product distributions.

\paragraph{Remarks on bit complexity} The proofs in \citet{FGJW23,FengLL24} show
that the algorithms in \cref{lem:alg1,lem:alg2} are also polynomial-time in
terms of bit complexity.  (In particular, note that when the inputs
\((P_i, Q_i)_{i=1}^{n}\) to \(\disprodtv{}\) are rational numbers specified as
ratios of integers denoted in binary, \(\dtv(P, Q)\) has a representation length
that is linear in the total input size.)  Further, the model of access to
randomness for the algorithm in \cref{lem:alg2} can be relaxed so that the
algorithm can only access a sequence of i.i.d.\ \(\mathrm{Bernoulli}(1/2)\)
random bits: this is essentially because for any \(p\) of the form
\(k/2^{\ell}\), where \(\ell\) is a positive integer and \(k \leq 2^l\) is an odd
integer, a sample from \(\mathrm{Bernoulli}(k/2^{\ell})\) can be obtained using
\(\ell\) i.i.d.\ bits sampled from \(\mathrm{Bernoulli}(1/2)\) (in fact, in
expectation, one only needs to sample two bits, even if \(p\) is not of the form
\(k/2^{\ell}\)).

Note, however, that in the algorithm of \cref{lem:alg1}, not the just
the number of \emph{bit} operations, but even the number of \emph{arithmetic
  operations} depends upon both the \emph{number} of input parameters and their total
representation length.  In other words, while this algorithm is polynomial-time
in term of bit complexity, it is not ``strongly
polynomial''~\cite[p.~32]{GLS93}.  In the context of \cref{lem:alg1}, a strongly
polynomial algorithm would have to be polynomial-time in terms of bit
complexity, and in addition, the number of arithmetic operations it performs
would be bounded above by a polynomial \emph{only} of \(n, M\) and
\(1/\epsilon\). To the best of our knowledge, it is not known whether a strongly
polynomial deterministic algorithm for \(\disprodtv{}\) exists.

\section{The Extension Distance }
\label{sec:tv-dist-estim}
In this section, we present a proof of correctness of the algorithm of
\cite{FengLL24} for discrete distributions by recasting its analysis in the
general framework of Radon-Nikodym derivatives.  For our application to
Gaussians, we do not need the results at this level of generality, but we
believe that presenting the ideas of \cite{FengLL24} in this general framework
both simplifies the presentation and can also be helpful for future extensions
of their algorithmic ideas.  Further, even in the case of discrete
distributions, our proof of correctness of their algorithm is different (and
arguably simpler) because it dispenses with one of the main innovations of
\cite{FengLL24} (which they referred to as the $\Delta_{MTV}$ distance) and
shows instead that their algorithm for product distributions can be analyzed in terms of a more direct
quantity which we call the \emph{extension distance} (see
\cref{def-ext-distance}).

We begin with some preliminary notions.  Our notations and conventions for the
standard notions described below follows the standard references \citet{Wil91}
and \citet[Appendix A.4]{Dur19}.  In particular, following these references, we
will agree that two random variables are said to be the same (with respect to a
measure) if they differ only on a set of measure $0$.

\begin{definition}[\textbf{Radon-Nikodym derivative~\cite[Theorems A.4.7 and A.4.8]{Dur19}}] Let $P$ and
$Q$ be probability measures on a measurable space $(\Omega, \F)$.  There exists a unique decomposition of $P$ into two finite non-negative measures
$P_1$ and $P_2$ such that $P = P_Q + P_Q^{\perp}$, $P_Q$ is \emph{absolutely
  continuous} with respect to $Q$ (denoted $P \ll Q$, and meaning that for every
event $A \in \F$, $Q(A) = 0$ implies $P_Q(A) = 0$), while $P_{Q}^{\perp}$ and $Q$
are mutually singular (meaning that there exists an event $A \in \F$ such that
$P_Q^{\perp}(A) = 0$ and $Q(A^c) = 0$).  Further, there exists a unique (with respect to
the probability measure $Q$) $\F$-measurable non-negative random variable $R$,
denoted $\diff{P_Q}{Q}$, such that for any event $A \in \F$,
 $   P(A) = P_Q^{\perp}(A) + \Ex_{Q}[RI_A],$
  where $I_A$ denotes the indicator function of $A$.  In case $P \ll Q$, we have
  $P_Q^{\perp} \equiv 0$, so that the first term on the right hand side of the
  above equation is zero, and $R = \diff{P}{Q}$ is referred to as the
  \emph{Radon-Nikodym derivative} of $P$ with respect to $Q$. Also,
  $\Ex_{Q}[R] \leq 1$, with equality holding if and only if $P \ll
  Q$.  \label{def:radon-niko}
\end{definition}

\begin{remark}\label{rem:ratio-remark}
  The random variable $R = \diff{P_Q}{Q}$, together with the law induced on it
  by the probability measure $Q$, was called the \emph{ratio} in the work of
  \cite{FengLL24}, and was denoted in that paper by the notation
  $\ratio{P}{Q}$. We also adopt this notation.  Note that the ratio is the same
  as the usual Radon-Nikodym derivative when $P \ll Q$.  The following
  definition (a rephrasing of similar definitions given by \cite{FengLL24})
  will be useful.
\end{remark}

\begin{definition}[\textbf{Valid ratio and independent products}]
  A non-negative random variable $R$ defined on a probability space
  $(\Omega, \F, Q)$ is said to be a \emph{valid ratio} if $\Ex[R] \leq 1$.
  Given valid ratios $R_i$ on $(\Omega_i, \F_i, Q_i)$ for $i \in {1, 2}$, the
  \emph{independent product} $R_1 \indep R_2$ of $R_1$ and $R_2$ is the valid
  ratio on $(\Omega_1\times \Omega_2, \F_1\times \F_2, Q_1\times Q_2)$ defined
  by $(R_1\indep R_2)(\omega_1, \omega_2) \defeq R_1(\omega_1)R_2(\omega_2)$.
  Note that the binary operation $\indep$ is associative, and that the random
  variables $R_1 \indep R_2$ and $R_2 \indep R_1$ have the same cumulative
  distribution function. \label{def:indep}
\end{definition}

We record the following standard observation about the Radon-Nikodym derivatives
in the setting of product measures.
\begin{fact}\label{fct-product}
  Let $(\Omega_1, \F_1)$ and $(\Omega, \F_2)$ be measurable spaces, and consider
  the product measurable space $(\Omega_1 \times \Omega_2, \F)$ where
  $\F = \F_1 \times \F_2$ denotes the product $\sigma$-algebra of $\F_1$ and
  $\F_2$.  Let $\pi_1$ and $\pi_2$ be probability measures on $(\Omega_1, \F_1)$
  and, similarly, let $Q$ and $P$ be probability measures on
  $(\Omega_{2}, \F_2)$.  Then for the product measures $\pi_1 \times Q$ and
  $\pi_2 \times P$ on $(\Omega, \F)$ we have
  $\inp{\ratio{\inp{\pi_2\times P}}{\inp{\pi_1 \times Q}}}(\omega_1, \omega_2) =
  \inp{\ratio{\pi_2}{\pi_1}}(\omega_1)\cdot \inp{\ratio{P}{Q}}(\omega_2)$.
  In terms of \cref{def:indep}, this can be written as
  ${\ratio{\inp{\pi_2\times P}}{\inp{\pi_1 \times Q}}} =
  \inp{\ratio{\pi_2}{\pi_1}} \indep \inp{\ratio{P}{Q}}$.
\end{fact}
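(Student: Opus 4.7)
The plan is to first produce the Lebesgue decomposition of $\pi_2 \times P$ with respect to $\pi_1 \times Q$ explicitly from the two one-dimensional decompositions, then compute the Radon-Nikodym derivative of its absolutely continuous part using Fubini--Tonelli, and finally invoke the uniqueness statements built into \cref{def:radon-niko}.

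Concretely, I apply \cref{def:radon-niko} to the pairs $(\pi_2,\pi_1)$ and $(P,Q)$ to obtain decompositions $\pi_2 = (\pi_2)_{\pi_1} + (\pi_2)_{\pi_1}^{\perp}$ and $P = P_Q + P_Q^{\perp}$, with Radon-Nikodym derivatives $R_1 \defeq \diff{(\pi_2)_{\pi_1}}{\pi_1}$ and $R_2 \defeq \diff{P_Q}{Q}$. Expanding the product of sums bilinearly yields
\[
\pi_2 \times P = (\pi_2)_{\pi_1}\!\times\! P_Q \;+\; (\pi_2)_{\pi_1}\!\times\! P_Q^{\perp} \;+\; (\pi_2)_{\pi_1}^{\perp}\!\times\! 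P_Q \;+\; (\pi_2)_{\pi_1}^{\perp}\!\times\! P_Q^{\perp}.
\]
I will show the first summand is absolutely continuous with respect to $\pi_1\times Q$, while each of the three ``cross'' summands is singular. The key observation is that any $Q$-null set $D \in \F_2$ supporting $P_Q^{\perp}$ produces a $(\pi_1\times Q)$-null ``slab'' $\Omega_1 \times D$ that supports every term containing a $P_Q^{\perp}$ factor, and symmetrically on the $\Omega_1$ side; the fully singular term is then handled by taking the union of such slabs.

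With the absolutely continuous part identified as $(\pi_2)_{\pi_1}\times P_Q$, a direct application of Fubini--Tonelli to measurable rectangles (extended to all of $\F = \F_1 \times \F_2$ by a standard monotone class / $\pi$-$\lambda$ argument) shows that for every $E \in \F$,
\[
\bigl((\pi_2)_{\pi_1} \times P_Q\bigr)(E) \;=\; \int_E R_1(\omega_1)\, R_2(\omega_2)\, d(\pi_1 \times Q)(\omega_1,\omega_2).
\]
Uniqueness of the Lebesgue decomposition and of the Radon-Nikodym derivative (up to $(\pi_1\times Q)$-null modifications) then identifies the product $R_1(\omega_1)R_2(\omega_2)$ with $\ratio{\pi_2\times P}{\pi_1\times Q}$, which by \cref{def:indep} is precisely $R_1 \indep R_2$.

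The step I expect to require the most care is the singularity argument for the three cross terms: one must choose witness sets for $P_Q^{\perp}$ and $(\pi_2)_{\pi_1}^{\perp}$ compatibly, verify that the resulting slabs in $\Omega_1\times \Omega_2$ are genuinely $(\pi_1\times Q)$-null, and keep track of measurability of all the sets used. The remaining steps---product measure bilinearity, Fubini--Tonelli for nonnegative measurable functions, and uniqueness of Radon-Nikodym derivatives---are standard and can be cited directly from \citet{Dur19} rather than reproved.
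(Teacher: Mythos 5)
Your proof is correct. The paper itself states \cref{fct-product} without proof, recording it as a standard observation, so there is no in-paper argument to compare against; your write-up supplies the standard proof and does so accurately. The structure is sound: expanding $\pi_2 \times P$ bilinearly using the two Lebesgue decompositions, identifying $(\pi_2)_{\pi_1} \times P_Q$ as the absolutely continuous part, and computing its density via Fubini--Tonelli on rectangles plus a $\pi$-$\lambda$ extension (both sides being finite measures agreeing on the generating $\pi$-system of rectangles, including the whole space) is exactly how one verifies this. The singularity step you flag as delicate is in fact immediate: if $C \in \F_1$ is $\pi_1$-null with $(\pi_2)_{\pi_1}^{\perp}$ concentrated on $C$, and $D \in \F_2$ is $Q$-null with $P_Q^{\perp}$ concentrated on $D$, then all three cross terms are concentrated on the single set $(C \times \Omega_2) \cup (\Omega_1 \times D)$, which is a union of two measurable rectangles and is $(\pi_1 \times Q)$-null since $(\pi_1\times Q)(C\times\Omega_2) = \pi_1(C) = 0$ and $(\pi_1\times Q)(\Omega_1\times D) = Q(D) = 0$; no compatibility between the witness sets is needed, and measurability is automatic. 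Uniqueness of the Lebesgue decomposition and of the Radon--Nikodym derivative, as packaged in \cref{def:radon-niko}, then identifies $\ratio{\inp{\pi_2\times P}}{\inp{\pi_1\times Q}}$ with $R_1(\omega_1)R_2(\omega_2) = \inp{\ratio{\pi_2}{\pi_1}} \indep \inp{\ratio{P}{Q}}$, as claimed.
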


It is well known that $\dtv(P, Q) = \Ex[(1-R)_+]$ when $R = \ratio{P}{Q}$
(where $|x|_+ \defeq \max\inb{x, 0}$; see \Cref{sec:proofs-omitted-from} for a
proof). This expression suggests defining the following functional.
\begin{definition}[\textbf{The TV functional}]\label{def-tv-functional}
  Let $R$ be a valid ratio on the probability space $(\Omega, \F, Q)$.  Then, we
  define
  $ TV(R) \defeq \Ex[(1-R)_+] = \Ex[(R-1)_+] + 1 - \Ex[R] \\
  = \frac{1}{2}\inp{\Ex[\abs{1 - R}] + 1 - \Ex[R]}, $
  where the equality of the three expressions follows by simple algebra. Thus,
  if $P$ and $Q$ are probability measures on $(\Omega, \F)$ such that
  $R = \ratio{P}{Q}$, then
 $   \dtv(P, Q) = TV(R).$
\end{definition}
An important property of the $TV$ functional is that $TV(R)$ depends only upon
the cumulative distribution function (\cdf) of the random variable $R$, and it
does not otherwise depend on the probability space on which $R$ has been
defined.  In particular, if $R$ and $S$ are valid ratios defined on possibly
different probability spaces, but if $R$ and $S$ have the same \cdf, then
$TV(R) = TV(S)$.  The following monotonicity properties of the $TV$ functional
are useful.

\begin{lemma}
  Let $R$ be a valid ratio defined on a probability space $(\Omega, \F, P)$.  Then,
  \begin{enumerate*}[label=(\arabic*), ref=\arabic*]
  \item \label{item:cR} For every $c \in [0, 1]$, the valid ratio $cR$ satisfies
    $TV(R) \leq TV(cR).$
  \item \label{item:conditional} If $\mathcal{G} \subset \F$ is a
    sub-$\sigma$-algebra, then the valid ratio
    $R_{\mathcal{G}} \defeq \Ex[R|\mathcal{G}]$ satisfies
    $TV(R_{\mathcal{G}}) \leq TV(R)$.  Further, if the event $\inb{R < 1}$ is
    $\mathcal{G}$-measurable, then the events $\inb{R < 1}$ and
    $\inb{R_{\mathcal{G}} < 1}$ are almost surely equivalent (i.e., the
    probability that exactly one of them happens is $0$), and
    $TV(R_{\mathcal{G}}) = TV(R)$.
  \item \label{item:product} If $S$ is a valid ratio defined on some probability
    space $(\Omega', \F', P')$ then $TV(R) \leq TV(R \indep S)$.
\end{enumerate*}\label{lem:tv-mon}
\end{lemma}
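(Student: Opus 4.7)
I will take the three items in order, with (\ref{item:product}) following from (\ref{item:cR}) and (\ref{item:conditional}). Item (\ref{item:cR}) is essentially pointwise: since $c\in[0,1]$ and $R\geq 0$, we have $cR \leq R$, hence $(1-cR)_+ \geq (1-R)_+$ everywhere, and taking expectations against $P$ gives $TV(R) \leq TV(cR)$.

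For (\ref{item:conditional}), the main tool is conditional Jensen's inequality for the convex function $x \mapsto (1-x)_+$: almost surely,
\[
  (1-R_{\mathcal{G}})_+ \;=\; (\Ex[1-R \mid \mathcal{G}])_+ \;\leq\; \Ex[(1-R)_+ \mid \mathcal{G}],
\]
and taking total expectations yields $TV(R_{\mathcal{G}}) \leq TV(R)$. For the equality claim, I would set $A \defeq \inb{R < 1} \in \mathcal{G}$ and argue symmetrically on $A$ and $A^c$. On $A^c \in \mathcal{G}$ we have $R \geq 1$, so for every $\mathcal{G}$-measurable $B \subseteq A^c$ the identity $\Ex[R_{\mathcal{G}} I_B] = \Ex[R I_B] \geq P(B)$ holds; testing against $B = A^c \cap \inb{R_{\mathcal{G}} < 1}$ (which is $\mathcal{G}$-measurable since $R_\mathcal{G}$ is) forces $R_{\mathcal{G}} \geq 1$ a.s.\ on $A^c$, so $(1-R_{\mathcal{G}})_+$ vanishes there. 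On $A$, $R < 1$ strictly; applying the same pincer to the $\mathcal{G}$-measurable set $\inb{R_{\mathcal{G}} \geq 1} \cap A$ (where the two-sided identity would instead give $\Ex[R_\mathcal{G} I_B] \geq P(B)$ but also $\Ex[R_\mathcal{G} I_B] = \Ex[R I_B] < P(B)$) gives $R_{\mathcal{G}} < 1$ a.s.\ on $A$. This simultaneously delivers the a.s.\ equivalence of $\inb{R < 1}$ and $\inb{R_{\mathcal{G}} < 1}$, and also lets me drop the positive part, yielding
\[
  TV(R_{\mathcal{G}}) \;=\; \Ex[(1 - R_{\mathcal{G}}) I_A] \;=\; P(A) - \Ex[R\, I_A] \;=\; \Ex[(1-R)_+] \;=\; TV(R),
\]
where the middle step uses $\Ex[R_{\mathcal{G}} I_A] = \Ex[R I_A]$ (since $A \in \mathcal{G}$) and the last step uses that $(1-R)_+$ itself is supported on $A$.

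Item (\ref{item:product}) is then a two-step reduction. Take $\widetilde{\mathcal{G}}$ to be the sub-$\sigma$-algebra of $\F \otimes \F'$ that ignores the second coordinate; under $P \times P'$, $S$ is independent of $\widetilde{\mathcal{G}}$, so $\Ex[R \indep S \mid \widetilde{\mathcal{G}}] = R \cdot \Ex[S]$ almost surely. Item (\ref{item:conditional}) then gives $TV(R \cdot \Ex[S]) \leq TV(R \indep S)$. Since $S$ is a valid ratio, $\Ex[S] \in [0,1]$, so item (\ref{item:cR}) with $c = \Ex[S]$ gives $TV(R) \leq TV(R \cdot \Ex[S])$; chaining yields the conclusion.

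The main technical subtlety will be the equality case of (\ref{item:conditional}): one must carefully exploit the $\mathcal{G}$-measurability of the threshold event $\inb{R < 1}$ to promote a.s.\ integral identities into pointwise bounds on $R_{\mathcal{G}}$ relative to $1$, both above and below, using essentially the same pincer argument on $A$ and $A^c$. The remaining two items reduce to pointwise arithmetic and the standard conditional Jensen inequality, respectively.
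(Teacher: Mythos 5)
Your proposal is correct and follows essentially the same route as the paper: item (1) by pointwise monotonicity of $x \mapsto (1-x)_+$, item (2) by conditional Jensen plus the same measurability/pincer argument on the two exceptional events $\inb{R<1, R_{\mathcal{G}}\geq 1}$ and $\inb{R\geq 1, R_{\mathcal{G}}<1}$, and item (3) by conditioning $R \indep S$ on the first factor to get $\Ex[S]\cdot R$ and then chaining items (1) and (2). The only differences are cosmetic (conditioning on the first-coordinate $\sigma$-algebra rather than $\sigma(R)$, and phrasing the final equality via the set $A=\inb{R<1}$ directly).
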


At the heart of the algorithmic framework of \cite{FengLL24} is a discretization
scheme for ratios.  Consider the following partition of the non-negative reals.

\begin{definition}[\textbf{$(\gamma, \delta)$-partition}] \label{def:partition}
  Fix $\gamma , \delta \in (0, 1)$, and let
  $m = 1 + \ceil{\ln(1/\gamma)/\ln(1+\delta)}$.  Define $a_0 = 1$, $a_{m} = 0$
  and $a _{k} \defeq 1 - (1+\delta)^{k-1}\gamma$ for integers $1 \leq k < m$, and
  the intervals $I_0 \defeq [1, 1] = \inb{1}$ and $I_k \defeq [a_k, a_{k-1})$
  for $1 \leq k \leq m$, which partition $[0, 1]$ into disjoint intervals.
  Similarly, define $J_k \defeq (1/a_{k-1}, 1/a_k]$ for $1 \leq k < m$ and
  $J_m \defeq (1/a_{m-1}, \infty)$.  The collection of disjoint intervals
  \begin{equation*}
    \cI_{{\gamma, \delta}} \defeq \inb{I_0} \cup \inb{I_i \st 1 \leq k \leq m}
    \cup \inb{J_i \st 1 \leq k \leq m}
  \end{equation*}
  is called a \emph{$(\gamma,\delta)$-partition} of $\Rpos$.  By a slight
  overloading of notation, we will also view $\cI_{\gamma,\delta}$ as a function
  from $\Rpos$ to the set $\cI_{{\gamma, \delta}}$ by setting
  $\cI_{\gamma, \delta}(x) = I$ for $x \in \Rpos$ when $I$ is the unique
  interval in $\cI_{\gamma,\delta}$ satisfying $x \in I$. \end{definition}

We can now describe the discretization procedure of \cite{FengLL24} in our
setting.
\begin{definition}[\textbf{$(\gamma,\delta)$-discretization}]\label{def:discret}
  Let $R$ be a valid ratio on some probability space $(\Omega, \F, P)$.  Let
  $\cI_{\gamma,\delta}$ be the partition of $\Rpos$ in \cref{def:partition}.
  The \emph{$(\gamma,\delta)$-discretization} of $R$ is the
  $\sigma(\cI_{\gamma,\delta}(R))$-measurable random variable
  $\tR_{\gamma,\delta}$ defined by
  \begin{equation}
    \label{eq:8}
    \tR = \tR_{\gamma,\delta} \defeq \Ex[R | \cI_{\gamma,\delta}(R)].
  \end{equation}
  In particular, it follows from \cref{item:conditional} of \cref{lem:tv-mon}
  that $\tR$ is also a valid ratio satisfying $TV(\tR) = TV(R)$, and that,
  almost surely, the events $\inb{R < 1}$ and $\{\tilde{R} < 1\}$ are
  equivalent. In fact, a very similar argument as in the proof of that lemma
  (given in \Cref{sec:proofs-omitted-from}) shows that for each
  $I \in \cI_{\gamma,\delta}$, the events $\inb{R \in I}$ and
  $\{\tilde{R} \in I\}$ are almost surely equivalent.
 \end{definition}

Our point of departure from the work of \cite{FengLL24} is in how the ``error''
introduced by the above discretization is measured.  \cite{FengLL24} defined a
quantity which they call $\Delta_{MTV}$, and which they relate to Le Cam's
deficiency, and measure the discretization error in terms of this quantity.
However, this quantity is defined in terms of total variation distances between
distributions realizing the appropriate ratios.  We follow instead a more direct
approach, and show that the discretization error can be measured in terms of the
following, arguably simpler, definition.

\begin{definition}[\textbf{Extension distance}] \label{def-ext-distance} Let
  $R_1$ and $R_2$ be valid ratios defined on some probability space
  $(\Omega, \F, P)$.  The \emph{extension distance} $\ext(R_1, R_2)$ between
  $R_1$ and $R_2$ is defined as
  \begin{equation}
    \label{eq:10}
    \sup_S\abs{TV(S \indep R_1) - TV(S \indep R_{2})},
  \end{equation}
  where the supremum is over all valid ratios $S$.  Recall that since $TV(R)$
  depends only upon the \cdf{} of $R$, we can restrict $S$ to the set of
  non-negative Borel measurable random variables with expectation at most $1$.
  From the definition itself, it is apparent that $\ext(\cdot, \cdot)$ is at
  least a \emph{pseudo-metric}: it is non-negative, symmetric, and satisfies the
  triangle inequality.  These properties will be sufficient for our purposes,
  and so we do not consider the question of whether it is a metric as well.
\end{definition}

The utility of the extension distance comes from the following result, which
relates it to the discretization procedure of \cite{FengLL24}.  Some of the
computations in the (short) proof of this result (given in \Cref{sec:proofs-omitted-from}) are
similar in spirit to parts of the analysis by \cite{FengLL24}, though there they
were performed with the goal of controlling their $\Delta_{MTV}$ distance.

\begin{theorem}\label{thm-ext-estimate}
  Let $\tR$ be the $(\gamma,\delta)$-discretization of a valid ratio $R$ defined
  on some probability space $(\Omega, \F, P)$.  Then
  $\ext(R, \tR) \leq \gamma + \delta TV(R)$.
\end{theorem}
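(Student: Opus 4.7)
I would upper-bound, uniformly over valid ratios $S$, the quantity $\Delta(S) := TV(S \indep R) - TV(S \indep \tR)$ by $\gamma + \delta \cdot TV(R)$. The first step is to notice $\Delta(S) \geq 0$: for every fixed $s \geq 0$, the map $r \mapsto (1-sr)_+$ is convex and non-increasing in $r$, so the function $g_S(r) := \Ex_S[(1-Sr)_+]$ is convex in $r$; since $\tR = \Ex[R \mid \mathcal{G}]$ for $\mathcal{G} := \sigma(\cI_{\gamma,\delta}(R))$, the conditional Jensen's inequality gives $\Ex[g_S(R)] \geq \Ex[g_S(\tR)]$. This lets us drop the absolute values in the definition of $\ext$.

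The next step is to decompose
$\Delta(S) = \sum_{I \in \cI_{\gamma,\delta}} p_I \bigl(\Ex[g_S(R) \mid R \in I] - g_S(\mu_I)\bigr)$,
where $p_I := P(R \in I)$ and $\mu_I := \Ex[R \mid R \in I]$. Each summand is the Jensen gap of the convex function $g_S$ on $I$ for the conditional law of $R$. A useful observation is that for each fixed $s$, the piecewise-linear function $r \mapsto (1-sr)_+$ has a single kink at $r = 1/s$, so its Jensen gap on $I$ vanishes unless $1/s \in I$. Since the partition is symmetric under $r \mapsto 1/r$ with $I_k \leftrightarrow J_k$, the contribution from an $R$-interval $I_k$ is nonzero only for $s \in J_k$, and conversely.

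For intervals $I_k \subseteq [0,1]$ with $k \geq 2$ — the ``easy'' case — the geometric progression gives width $a_{k-1} - a_k = \delta(1 - a_{k-1}) \leq \delta(1-r)$ for every $r \in I_k$. Combining the $s$-Lipschitz estimate on $r \mapsto (1-sr)_+$ with a Markov bound $P(S > 1/a_{k-1}) \leq a_{k-1}$ (coming from $\Ex[S] \leq 1$) and the lower bound $p_{I_k}(1-a_{k-1}) \leq \Ex[(1-R)_+ \mathbf{1}_{R \in I_k}]$, I get a per-interval contribution of at most $\delta \Ex[(1-R)_+ \mathbf{1}_{R \in I_k}]$. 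The boundary intervals require separate care: $I_0 = \{1\}$ contributes $0$; $I_1 = [1-\gamma,1)$ contributes at most $\gamma$; and $I_m = [0, a_{m-1})$ is handled by noting $a_{m-1} \leq \delta/(1+\delta)$ and $1-r \geq 1/(1+\delta)$ on $I_m$, which together give again $a_{m-1} \leq \delta(1-r)$. Summing yields a total contribution from the $I_k$'s of at most $\gamma + \delta \cdot TV(R)$.

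The main obstacle is the intervals $J_k$, where $R > 1$ forces $(1-R)_+ = 0$, so contributions cannot be charged directly to $TV(R) = \Ex[(1-R)_+]$. The resolution is to use the dual identity $TV(R) = \Ex[(R-1)_+] + (1 - \Ex[R])$, which gives $\Ex[(R-1)_+] \leq TV(R)$, together with the lower bound $p_{J_k}\bigl(1/a_{k-1} - 1\bigr) \leq \Ex[(R-1)_+ \mathbf{1}_{R \in J_k}]$ coming from $R > 1/a_{k-1}$ on $J_k$. Using the identity $(1-sr)_+ = s(1/s - r)_+$ to obtain the Jensen gap bound on $J_k$ (and exploiting $|g_S'(r)| \leq 1/r$ for $r \geq 1$), and applying the analogous Markov bound to $S$, produces a per-$J_k$ contribution of at most $\delta \Ex[(R-1)_+ \mathbf{1}_{R \in J_k}]$, summing to $\delta \cdot \Ex[(R-1)_+] \leq \delta \cdot TV(R)$. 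The boundary cases $J_1$ (by a direct $\gamma$ bound) and $J_m$ (using $\Ex[S] \leq 1$ and $a_{m-1} \leq \delta/(1+\delta)$) close out the estimate, giving $\Delta(S) \leq \gamma + \delta \cdot TV(R)$ as required.
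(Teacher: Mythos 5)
Your overall route is genuinely different from the paper's: you drop the absolute value via convexity of $g_S(r)=\Ex_S[(1-Sr)_+]$ and conditional Jensen, and then bound the Jensen gap cell by cell over the partition $\cI_{\gamma,\delta}$, charging the $I_k$-cells to $\Ex[(1-R)_+]$ and the $J_k$-cells to $\Ex[(R-1)_+]$. The paper instead works with the symmetrized expression $TV(X)=\tfrac12\left(\Ex[\lvert 1-X\rvert]+1-\Ex[X]\right)$, kills the term $\Ex[S(\tR-R)]$ by independence, splits at $\{R<1\}$ versus $\{R\ge 1\}$, and handles the $\{R\ge1\}$ part by rewriting $\lvert 1-SR\rvert=\lvert 1/R-S\rvert R$ together with the change-of-measure identity $\Ex[\lvert 1/\tR-S\rvert R\,\mathbf{1}_{R\ge1}]=\Ex[\lvert 1/\tR-S\rvert \tR\,\mathbf{1}_{R\ge1}]$. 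Your individual cell estimates are essentially sound (modulo some glossing: for $I_m$ and for $k$ near $m$ the Lipschitz constant $1/a_k$ blows up, so you need either the max-value bound $(a_{k-1}-a_k)/a_{k-1}$ paired with Markov, or the Lipschitz bound paired with $\Ex[S\mathbf{1}_{S>1/a_{k-1}}]\le\Ex[S]\le 1$; these are repairable).

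However, there is a genuine gap in the final accounting. Your decomposition sums over \emph{all} cells, and your own stated bounds give: from the $I$-cells at most $\gamma+\delta\,\Ex[(1-R)_+]$, and from the $J$-cells at most $\gamma+\delta\,\Ex[(R-1)_+]$ (the extra $\gamma$ coming from your ``direct $\gamma$ bound'' on $J_1$). Adding these yields only $2\gamma+\delta\,\Ex[\lvert 1-R\rvert]$, which equals $2\gamma+2\delta\,TV(R)$ when $\Ex[R]=1$ --- twice the claimed bound. Nothing in your argument shows the two families cannot contribute simultaneously; indeed for a generic $S$ with mass both below and above $1$ (e.g., mass near $1-\gamma$ and near $1/(1-\gamma)$) both $I_1$ and $J_1$, and both families of cells, are active, so the final sentence ``giving $\Delta(S)\le\gamma+\delta\,TV(R)$'' does not follow from what you proved. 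The paper gets the sharp constant precisely because the factor $\tfrac12$ in $TV(X)=\tfrac12\left(\Ex[\lvert 1-X\rvert]+1-\Ex[X]\right)$ halves the two contributions of $\gamma+\delta\,TV(R)$ coming from $\{R<1\}$ and $\{R\ge1\}$. Your weaker bound $2(\gamma+\delta\,TV(R))$ would still suffice for the downstream applications after halving $\gamma$ and $\delta$, but as written it does not prove the theorem as stated; to repair it within your framework you would need either such a rescaling or a genuine argument that the $I$-side and $J$-side contributions trade off (say, through the split of the $S$-mass above and below $1$), which your sketch does not supply.
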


As an application of the extension distance framework, we now analyze the
correctness of the algorithm of \cite{FengLL24} for computing the total
variation distance between products of discrete distributions.  We restate their
algorithm for reference as \cref{alg-dis-prod-tv}.

\begin{theorem}\label{thm-alg-dis-prod-correct}
  The algorithm \disprodtv{} is correct.
\end{theorem}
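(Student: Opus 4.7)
The plan is to combine three ingredients: (a) the independent-product decomposition of the ratio, (b) a hybrid argument controlled by the extension distance from \cref{thm-ext-estimate}, and (c) an exact dynamic-programming computation on the discretized ratios. Iterating \cref{fct-product} shows that $R \defeq \ratio{P}{Q} = R_1 \indep \cdots \indep R_n$ with $R_i \defeq \ratio{P_i}{Q_i}$, so $\dtv(P,Q) = TV(R_1 \indep \cdots \indep R_n)$. The algorithm (\cref{alg-dis-prod-tv}) replaces each $R_i$ by its $(\gamma,\delta)$-discretization $\tR_i$ (\cref{def:discret}) and computes $TV(\tR)$ for $\tR \defeq \tR_1 \indep \cdots \indep \tR_n$. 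My task is to bound $\abs{TV(R) - TV(\tR)}$ and then choose $\gamma,\delta$ so the error is at most $\epsilon \cdot TV(R)$.

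The hybrid argument proceeds as follows. Set $H_k \defeq \tR_1 \indep \cdots \indep \tR_k \indep R_{k+1} \indep \cdots \indep R_n$, so $H_0 = R$ and $H_n = \tR$. Using associativity and commutativity of $\indep$, together with the definition \cref{def-ext-distance} of the extension distance, each consecutive swap satisfies $\abs{TV(H_{k-1}) - TV(H_k)} \le \ext(R_k, \tR_k)$. Applying \cref{thm-ext-estimate} to each factor, and using item \cref{item:product} of \cref{lem:tv-mon} to conclude $TV(R_i) \le TV(R)$ for every $i$, the triangle inequality yields
\begin{equation*}
    \abs{TV(R) - TV(\tR)} \;\le\; \sum_{i=1}^{n} \ext(R_i, \tR_i) \;\le\; n\gamma + \delta \sum_{i=1}^{n} TV(R_i) \;\le\; n\gamma + n\delta \cdot TV(R).
\end{equation*}
Choosing $\delta = \epsilon/(3n)$ and $\gamma = \epsilon D_0/(3n)$ for any lower bound $D_0 \le TV(R)$ yields a relative error of at most $\epsilon$. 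Since $TV(R)$ is unknown, the algorithm refines $D_0$ iteratively (e.g.\ by successive halving starting from $1/2$); only $O(\log(1/TV(R)))$ rounds are needed, which accounts for the $\log\frac{n}{\epsilon\dtv(P,Q)}$ factor in \cref{lem:alg1}.

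The main obstacle is the third ingredient: the efficient \emph{exact} computation of $TV(\tR)$. Each $\tR_i$ is supported on the polynomially-sized collection of values $\{\Ex[R_i \mid R_i \in I] : I \in \cI_{\gamma,\delta}\}$, but the independent product ranges over exponentially many atoms, so enumeration is infeasible. The algorithm instead performs a left-to-right dynamic program, maintaining (after processing the first $k$ factors) the joint law under $Q_1 \times \cdots \times Q_k$ of the partial product $\tR_1 \indep \cdots \indep \tR_k$, aggregated back into the buckets of $\cI_{\gamma,\delta}$ at every step. Item \cref{item:conditional} of \cref{lem:tv-mon} is the crucial tool here: the re-bucketing is a conditional expectation whose conditioning $\sigma$-algebra contains the event $\{\text{partial product} < 1\}$, so the re-bucketing preserves both the interval-membership events and the value of $TV(\cdot)$. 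Verifying this DP invariant carefully --- i.e.\ that aggregating after each step introduces no additional error beyond what is already accounted for by the extension-distance bound --- together with a routine counting of arithmetic operations, completes the correctness proof and yields the runtime claimed in \cref{lem:alg1}.
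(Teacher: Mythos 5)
There is a genuine gap: your error analysis is for a different computation than the one \cref{alg-dis-prod-tv} actually performs. Your hybrid argument bounds $\abs{TV(R_1\indep\cdots\indep R_n) - TV(\tR_1\indep\cdots\indep\tR_n)}$, where each factor is discretized \emph{once}; this is essentially \cref{lem-prod-discrete}, and you correctly note that $TV(\tR_1\indep\cdots\indep\tR_n)$ cannot be evaluated by enumeration. But the algorithm does not compute that quantity: it repeatedly discretizes the \emph{running partial products} $Y_i$ (setting $Y_{i+1}=\tilde{Y}_i\indep R_{i+1}$) and outputs $TV(\tilde{Y}_n)$. The crux of the correctness proof is precisely to bound the error introduced by these $n$ re-discretizations, and your claim that the re-bucketing ``introduces no additional error beyond what is already accounted for by the extension-distance bound'' does not hold as stated. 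Item~(\ref{item:conditional}) of \cref{lem:tv-mon} only gives $TV(\tilde{Y}_i)=TV(Y_i)$, i.e.\ the TV functional of the partial product \emph{alone} is preserved; what matters for the final output is the comparison of $TV(Y_i\indep S_i)$ with $TV(\tilde{Y}_i\indep S_i)$, where $S_i = R_{i+1}\indep\cdots\indep R_n$ are the factors still to come, and these can differ. Controlling that difference requires applying \cref{thm-ext-estimate} to the pair $(Y_i,\tilde{Y}_i)$ --- not to $(R_i,\tR_i)$ --- giving a per-step error of $\gamma+\delta\, TV(Y_i)$.

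This creates a second missing step: to turn $\sum_i \bigl(\gamma+\delta\, TV(Y_i)\bigr)$ into a relative-error bound you must show $TV(Y_i)\le \dtv(P,Q)$ for the specific random variables $Y_i$ built by the algorithm. Your bound $TV(R_i)\le TV(R)$ for the original factors does not suffice, since $Y_i$ is a product of already-discretized objects with $R_i$. The paper handles this by an inductive argument showing each $Y_i$ is a conditional expectation of $R_1\indep\cdots\indep R_i$, after which items~(\ref{item:conditional}) and~(\ref{item:product}) of \cref{lem:tv-mon} give $TV(Y_i)\le TV(R_1\indep\cdots\indep R_n)=\dtv(P,Q)$. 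Without this, the cumulative error of the dynamic program is not controlled, so your ``DP invariant'' is exactly the part of the proof that remains to be done. (A minor further discrepancy: the algorithm needs no iterative refinement of a lower bound --- it uses $\Delta=\max_i \dtv(P_i,Q_i)\le\dtv(P,Q)$ directly, and the logarithmic factor in \cref{lem:alg1} comes from the number of buckets in $\cI_{\gamma,\delta}$, not from halving rounds.)
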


\begin{proof}
  We follow the notation used in \cref{alg-dis-prod-tv}.  For
  $1 \leq i \leq {n-1}$ define
  $S_i \defeq R_{i+1} \indep R_{i+2} \indep \dots \indep R_n$, and set
  $S_n \equiv 1$ (i.e., the trivial ratio that takes value $1 $ with probability
  $1$).  Note that $\Delta \leq d_{TV}(P, Q) = TV(Y_1\indep S_1)$.  Further, for
  each $1 \leq i \leq n - 1$, $\tilde{Y_{i}}\indep S_i = Y_{i+1} \indep S_{i+1}$
  (from \cref{line-alg-iplus1} of \cref{alg-dis-prod-tv}).  The algorithm's
  output is $Z = TV(\tilde{Y_n}) = TV(\tilde{Y_n} \indep S_n)$. Thus, we get
  \begin{multline}
    \label{eq:20}
    \abs{d_{TV}(P, Q) - Z} = \abs{TV(Y_1\indep S_1) - TV(\tilde{Y_{n}}\indep S_n)}\\
    = \abs{\sum_{i=1}^n\inp{TV(Y_i\indep S_i) - TV(\tilde{Y_{i}}\indep S_i)}}.
  \end{multline}
  We can now use the definition of the extension distance, followed by
  \cref{thm-ext-estimate} to estimate the $i$th term above as follows:
  $\abs{TV(Y_i\indep S_i) - TV(\tilde{Y_{i}}\indep S_i)} \leq \ext(Y_i,
  \tilde{Y_i}) \leq \gamma + \delta TV(Y_i).$
  Plugging in the values of $\gamma$ and $\delta$, we thus obtain
  \begin{equation}
    \label{eq:22}
    \abs{d_{TV}(P, Q) - Z} \leq \frac{\epsilon\Delta}{2}
    + \frac{\epsilon}{2n}\sum\limits_{1 \leq i \leq n}TV(Y_i).
  \end{equation}
  A direct inductive argument then shows that each $Y_i$ is a conditional
  expectation of $R_1 \indep R_2 \indep \dots \indep R_i$.  Applying
  \cref{item:conditional,item:product} of \cref{lem:tv-mon}, we thus obtain, for
  each $1 \leq i \leq n$:
  \begin{equation*}
    \begin{aligned}[t]     \label{eq:23}
      TV(Y_i) 
      &\leq TV(R_1 \indep \dots \indep R_i)\\
      & \leq TV(R_1
        \indep \dots \indep R_i \indep S_i)\\
      & = TV(R_1 \indep \dots \indep R_i \indep R_{i+1} \indep \dots \indep R_n)
      \\
      &= d_{TV}(P, Q).
    \end{aligned}
  \end{equation*}
  Substituting this in \cref{eq:22} and recalling that
  $\Delta \leq d_{TV(P, Q)}$, we obtain the claimed result.
\end{proof}

\begin{algorithm}
\caption{\disprodtv~\citep{FengLL24}\label{alg-dis-prod-tv}}
\KwIn{$n$ pairs of probability distributions $(P_i, Q_i)_{i=1}^n$ on the finite
  set $[M]$, and a rational number $\epsilon \in (0, 1)$.} \KwOut{A number $Z$ satisfying
  $(1-\epsilon)d_{TV}(P, Q) \leq Z \leq (1+\epsilon)d_{TV}(P, Q)$, where
  $P \defeq P_1\times P_{2} \times \dots \times P_n$ and
  $Q \defeq Q_1 \times Q_2 \times \dots \times Q_n$.}

Set $\Delta \leftarrow \max\limits_{1 \leq i \leq n} d_{TV}(P_i, Q_i)$\; 

Initialize $\gamma \leftarrow \frac{\epsilon\Delta}{2n} $ and
$\delta \leftarrow \frac{\epsilon}{2n}$\; 

Set $R_i \leftarrow \ratio{P_i}{Q_i}$ for $1 \leq i \leq n$, and
$Y_1 \leftarrow R_1$\;

\For{$i \leftarrow 1$ \KwTo $n$}{ Set $\tilde{Y_i}$ to be the $(\gamma, \delta)$-discretization of $Y_{i}$\;

  If $i < n$, set $Y_{i+1} \leftarrow \tilde{Y_i}\indep R_{i+1}$\;\label{line-alg-iplus1}
} 

\Return{$TV(\tilde{Y_n})$}\;

\end{algorithm}
Given its correctness, a runtime bound for \cref{alg-dis-prod-tv} can be
estimated easily, and is as given in \cref{lem:alg1}.  An argument similar to
(but simpler than) the proof of \cref{thm-alg-dis-prod-correct} also gives the
following.
\begin{lemma}\label{lem-prod-discrete}
  Fix $\gamma, \delta \in (0, 1)$. For $1 \leq i \leq n$, let $R_i$ be valid
  ratios defined on probability spaces $(\Omega_i, \F_i, P_i)$, and let $\tR_i$
  be their corresponding $(\gamma, \delta)$-discretizations.  Then
$ \big|TV(R_1 \indep R_2 \indep \dots \indep R_n) -TV(\tR_1 \indep \tR_1 \indep
    \dots \indep \tR_{n}) \big|
    \leq n\delta \kappa + n\gamma, $ where $\kappa \defeq \max_{1 \leq i \leq n}^nTV(R_i)$.
\end{lemma}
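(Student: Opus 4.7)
The plan is to mirror the telescoping argument in the proof of \cref{thm-alg-dis-prod-correct}, but in a cleaner setting where we do not need to track the inductive construction of the $Y_i$'s. Concretely, I would swap the $R_i$'s for $\tR_i$'s one coordinate at a time and control each swap by the extension distance.

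First, for each $0 \leq i \leq n$ define the hybrid valid ratio
\[
H_i \defeq \tR_1 \indep \tR_2 \indep \dots \indep \tR_i \indep R_{i+1} \indep \dots \indep R_n,
\]
so that $H_0 = R_1 \indep \dots \indep R_n$ and $H_n = \tR_1 \indep \dots \indep \tR_n$. Writing the difference as a telescoping sum and applying the triangle inequality gives
\[
\bigl|TV(H_0) - TV(H_n)\bigr| \leq \sum_{i=1}^n \bigl|TV(H_{i-1}) - TV(H_i)\bigr|.
\]

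Next, for each $i$, I would factor $H_{i-1} = S_i \indep R_i$ and $H_i = S_i \indep \tR_i$, where $S_i \defeq \tR_1 \indep \dots \indep \tR_{i-1} \indep R_{i+1} \indep \dots \indep R_n$, which is a valid ratio on the product of the remaining probability spaces (by \cref{fct-product} and the fact that $\indep$ produces valid ratios). Using that $TV$ is invariant under permutations of independent products (because it depends only on the \cdf{} of the ratio, and $S_i \indep R_i$ and $R_i \indep S_i$ have the same \cdf{} by \cref{def:indep}), the definition of the extension distance directly yields
\[
\bigl|TV(H_{i-1}) - TV(H_i)\bigr| \leq \ext(R_i, \tR_i).
\]
By \cref{thm-ext-estimate}, $\ext(R_i, \tR_i) \leq \gamma + \delta TV(R_i) \leq \gamma + \delta \kappa$.

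Summing over $i$ gives the claimed bound $n\gamma + n\delta \kappa$. The only subtlety to be careful about is that the extension distance in \cref{def-ext-distance} is defined for valid ratios living on the \emph{same} probability space as $R_i$ and $\tR_i$, which is precisely the setting here since $S_i$ is constructed on the product of all the other probability spaces and the independent product $\indep$ is defined exactly to combine ratios on different spaces; so no issue arises. This is the entire argument, and I do not anticipate any serious obstacle, as the proof is essentially a cleaner version of the telescoping estimate already carried out in \cref{thm-alg-dis-prod-correct}.
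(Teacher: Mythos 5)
Your proposal is correct and is essentially the paper's own argument: the same telescoping over hybrid products (the paper writes the $i$th hybrid as $Z_i \indep S_i \indep R_i$ versus $Z_i \indep S_i \indep \tR_i$, which matches your $H_{i-1}$, $H_i$ after permuting factors), the same appeal to commutativity of $\indep$ via invariance of $TV$ under the \cdf{}, and the same final bound through \cref{thm-ext-estimate}.
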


\section{TV-Distance between Gaussians}\label{sec:alg}
Consider \Cref{prob:TV}. Given two $n$-dimensional Gaussians
$\gaussian(\vct{\mu_1}, \Sigma_1), \gaussian(\vct{\mu_2}, \Sigma_2)$, we want to
approximate
$D = \dtv \inp {\gaussian({\vct{\mu_1}}, \Sigma_1), \gaussian({\vct{\mu_2}}, \Sigma_2)}$
within relative error $(1 \pm \epsilon)$.  Without loss of generality, we may
further assume that both $\Sigma_1$ and $\Sigma_2$ have the full rank
$n$. Otherwise, if $\Rank(\Sigma_1) \neq \Rank(\Sigma_2)$, then $D = 1$, since
one of the distributions then has zero probability mass on the support of the
other.  If $\Rank(\Sigma_1) = \Rank(\Sigma_2) = r < n$, then consider the
TV-distance between ${\vct{X}} \sim \gaussian(\vct{0}, \Sigma_1)$ and
${\vct{Y}} \sim \gaussian({\vct{\mu_2}} - \vct{\mu_1}, \Sigma_2)$, which is the same as $D$. By
\Cref{fact-normal}, ${\vct{X}}$ is supported on $\Range(\Sigma_1)$.  There are two subcases:
\begin{enumerate*}[label=(\arabic*)]
\item If $\Range(\Sigma_2) + (\vct{\mu_2} - {\vct{\mu}}) \neq \Range(\Sigma_1)$,
  then $D = 1$, as in the previous
  case.  \item If $\Range(\Sigma_2) + (\vct{\mu_2} - \vct{\mu_1}) = \Range(\Sigma_1)$,
  let $\Pi$ be an $r \times n$ matrix whose row vectors form an orthogonal basis
  of $\Range(\Sigma_1)$. Since $\Pi$ is a bijection between $\Range(\Sigma_1)$
  and $\mathbb{R}^{r}$, by \Cref{prop-bijection},
  $\dtv\inp{{\vct{X}},{\vct{Y}}} = \dtv \inp{ \Pi { \vct{X}}, \Pi { \vct{Y}}}$.
  However, we then have $\Pi \vct{X} \sim \gaussian(\vct{0}, \Pi \Sigma_1 \Pi^T )$ and
  $\Pi \vct{Y} \sim \gaussian(\Pi ( \vct{\mu_2} - \vct{\mu_1} ), \Pi \Sigma_2 \Pi^T )$, so that
  they are $r$-dimensional Gaussian vectors whose covariance matrices have full
  rank $r$.\end{enumerate*}
Using (unnormalized) Gram-Schmidt orthogonalization, the process described above
can be implemented using $O(n^3)$ arithmetic operations (and also in
polynomial-time in terms of bit complexity: see, e.g., \citet[Section
1.4]{GLS93}). From now on, we always assume that $\Sigma_1$ and $\Sigma_2$ have the full
rank $n$.

\paragraph{Reduction to Product Distributions}
We now use standard linear algebraic properties to reduce to the case of product
distributions.  This is the only part of our algorithm that needs access to a
diagonalization sub-routine.  (As discussed in the introduction,
\cref{sec:algor-with-appr} describes how to
implement a version of the next lemma in the bit complexity
model.) \begin{lemma}\label{lem:product}
  There exists an algorithm which given $(\vct{\mu_1},\Sigma_1)$ and
  $(\vct{\mu_2},\Sigma_2)$ (where $\Sigma_1, \Sigma_2$ are $n \times n$ positive
  definite matrices and $\vct{\mu_1}, \vct{\mu_{2}}$ are vectors in $\R^n$), outputs
  $(\vct{\mu},\Sigma)$, using $O(n^3)$ arithmetic operations and two
  diagonalizations of $n \times n$ symmetric matrices, such that $\Sigma$ is a
  diagonal matrix and
  $\dtv\inp{\gaussian(\vct{\mu_1}, \Sigma_1), \gaussian(\vct{\mu_2}, \Sigma_2)} =
  \dtv\inp{\gaussian(\vct{\mu}, \Sigma), \gaussian(\vct{0}, I_n)}$.
\end{lemma}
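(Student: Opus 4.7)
The plan is to exhibit an invertible affine transformation $T(\vct x) = B\vct x + \vct c$ that simultaneously sends $\gaussian(\vct\mu_1,\Sigma_1)$ to $\gaussian(\vct 0, I_n)$ and $\gaussian(\vct\mu_2,\Sigma_2)$ to $\gaussian(\vct\mu,\Sigma)$ for some diagonal $\Sigma$. Since $T$ is invertible, \Cref{prop-bijection} and \Cref{fact-normal} together guarantee that the TV distance is preserved by the push-forward, so the output $(\vct\mu,\Sigma)$ will satisfy the required identity. The whole construction boils down to a ``whitening'' step followed by a rotation that diagonalizes the resulting second covariance.

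First I would diagonalize $\Sigma_1$, using a single symmetric-matrix diagonalization, to obtain $\Sigma_1 = V_1 D_1 V_1^T$ with $V_1$ orthogonal and $D_1$ positive diagonal (positive since we have already reduced to the full-rank case). I form $A \defeq D_1^{-1/2}V_1^T$ in $O(n^3)$ arithmetic operations; note $A\Sigma_1 A^T = I_n$. Setting $T_1(\vct x) = A(\vct x - \vct\mu_1)$, \Cref{fact-normal} gives
\[
  T_1\#\gaussian(\vct\mu_1,\Sigma_1) = \gaussian(\vct 0, I_n), \qquad
  T_1\#\gaussian(\vct\mu_2,\Sigma_2) = \gaussian(\vct\nu, M),
\]
where $\vct\nu \defeq A(\vct\mu_2-\vct\mu_1)$ and $M \defeq A\Sigma_2 A^T$, all computable in $O(n^3)$ operations.

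Next I would diagonalize the symmetric positive definite matrix $M = V_2 D_2 V_2^T$, using the second allowed diagonalization. Applying the orthogonal map $T_2(\vct y) \defeq V_2^T \vct y$ preserves the first distribution (since $V_2^T I_n V_2 = I_n$) while \Cref{fact-normal} again yields
\[
  T_2\#\gaussian(\vct\nu, M) = \gaussian(V_2^T\vct\nu, D_2).
\]
Thus the composition $T_2 \circ T_1$ is an invertible affine map sending $\gaussian(\vct\mu_1,\Sigma_1)\mapsto \gaussian(\vct 0,I_n)$ and $\gaussian(\vct\mu_2,\Sigma_2)\mapsto \gaussian(\vct\mu,\Sigma)$ with $\vct\mu \defeq V_2^T A(\vct\mu_2-\vct\mu_1)$ and $\Sigma \defeq D_2$ diagonal. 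The algorithm simply outputs this pair.

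The correctness is immediate from \Cref{prop-bijection} applied to the bijection $T_2\circ T_1$. For the complexity, the explicit linear-algebra steps (forming $A$, computing $\vct\nu$, $M = A\Sigma_2 A^T$, the product $V_2^T A$, and $V_2^T\vct\nu$) are standard $O(n^3)$ matrix--matrix and matrix--vector operations, and only two symmetric diagonalizations are invoked. There is no serious obstacle here; the only thing to be careful about is that both matrices to be diagonalized ($\Sigma_1$ and $M$) are symmetric and positive definite, which is inherited from $\Sigma_1,\Sigma_2$ being positive definite together with the fact that $A$ is invertible, so that $M = A\Sigma_2 A^T\succ 0$. As noted in the statement, the subtleties that arise when only approximate diagonalization is available are deferred to \Cref{sec:algor-with-appr}.
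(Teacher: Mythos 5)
Your proposal is correct and follows essentially the same route as the paper: whiten one of the Gaussians via a diagonalization of its covariance, then apply a second diagonalization to rotate the other (transformed) covariance into diagonal form, with correctness from \Cref{prop-bijection} and \Cref{fact-normal}. The only cosmetic differences are that you whiten $\Sigma_1$ instead of $\Sigma_2$ (harmless, since $\dtv$ is symmetric) and use $D_1^{-1/2}V_1^T$ rather than the symmetric square root $V_1 D_1^{-1/2} V_1^T$ as the whitening matrix, neither of which affects correctness or the stated operation count.
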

\begin{proof}
  Let $\vct{X} \sim \gaussian(\vct{\mu}_1, \Sigma_1)$ and
  $\vct{Y} \sim \gaussian(\vct{\mu}_2, \Sigma_2)$.  We use diagonalization to
  decompose $\Sigma_2$ as $Q_2\Lambda_2 Q_2^{T}$, where $Q_2$ is an orthogonal
  matrix and $\Lambda_2$ is a diagonal matrix. Define the matrix
  $A \defeq Q_2 \Lambda_2^{-1/2}Q_2^{T}$.  Now, we diagonalize the matrix
  $A \Sigma_1 A^T$ to obtain the decomposition $Q_1 \Lambda_1 Q_1^{T}$ (where
  $Q_1$ is orthogonal and $\Lambda_1$ is diagonal).  Define
  $\vct{\hat{X}} = Q_1^{T} A(\vct{X} - \vct{\mu_2})$ and
  $\vct{\hat{Y}} = Q_1^{T}A(\vct{Y} - \vct{\mu_2})$.  Note that the transformation
  $\vct{v} \mapsto Q_1^{T} A(\vct{v} - \vct{\mu_2})$ is invertible.  Thus, by
  \cref{prop-bijection},
  $\dtv\inp{\vct{X}, \vct{Y}} = \dtv\inp{\vct{\hat{X}}, \vct{\hat{Y}}}.$
  Finally, $\vct{\hat{X}} \sim \gaussian(\vct{\mu},\Sigma)$ and
  $\vct{\hat{Y}} \sim \gaussian(\vct{0}, I_n)$, where
  $\vct{\mu} = Q_1^{T} A(\vct{\mu_1} - \vct{\mu_2})$ and $\Sigma = \Lambda_1$.
  Apart from the two diagonalization operations, the algorithm only performs a
  constant number of matrix multiplications that cost a total of $O(n^3)$
  arithmetic operations.
\end{proof}

\paragraph{Discretizing Product Gaussian Distributions}
With \Cref{lem:product}, we only need to approximate the TV-distance between the
Gaussian $\gaussian(\vct{\mu},\Sigma)$ with diagonal covariance matrix $\Sigma$
and the standard Gaussian $\gaussian (0,I_n)$; both of which are continuous
product distributions.
We give the following reduction algorithm that transforms this task into the
task of estimating the TV-distance between two \emph{discrete} product
distributions.
\begin{lemma}\label{lem:reduction}
  There exists a deterministic algorithm, which given a vector
  $\vct{\mu} \in \R^n$, a positive $n \times n$ diagonal matrix $\Sigma$, and
  $\epsilon > 0$, outputs $2n$ discrete distributions
  $(\tilde{P}_i,\tilde{Q}_i)_{1 \leq i \leq n}$, using at most
  $O(\frac{n^2}{\epsilon} \log^3 \frac{n}{\epsilon \dtv \inp{\vct{X},\vct{Y}}})$ arithmetic
  operations, and satisfying the following properties (here
  $\vct{X} \sim \gaussian(\vec{\mu}, \Sigma)$ and
  $\vct{Y} \sim \gaussian(\vct{0}, I)$ are Gaussian random vectors):
  \begin{itemize}
    \item every $\tilde{P}_i$ and $\tilde{Q}_i$ is defined over the domain $[M]$ with $M = O(\frac{n}{\epsilon} \log \frac{n}{\epsilon \dtv \inp{X,Y}})$;
    \item the product distributions
      $\tilde{P} = \tilde{P}_1 \times \ldots \times \tilde{P}_n$ and
      $\tilde{Q} = \tilde{Q}_1 \times \ldots \times \tilde{Q}_n$
      satisfy $ \left\vert\dtv\inp{\tilde{P},\tilde{Q}} - \dtv \inp {X,Y} \right\vert \leq
      \frac{\epsilon}{3}\dtv \inp {\vct{X}, \vct{Y}}.  $ \end{itemize}
\end{lemma}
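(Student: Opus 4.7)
The plan is to exploit the fact that both $\vct{X} \sim \gaussian(\vct{\mu}, \Sigma)$ and $\vct{Y} \sim \gaussian(\vct{0}, I_n)$ already have independent coordinates (since $\Sigma$ is diagonal), and thus to discretise each coordinate separately.  Concretely, for each $i \in [n]$ I would construct a measurable map $\phi_i \colon \R \to [M]$ and set $\tilde{P}_i \defeq \phi_i(P_i)$ and $\tilde{Q}_i \defeq \phi_i(Q_i)$, where $P_i = \gaussian(\mu_i, \sigma_i^2)$ and $Q_i = \gaussian(0,1)$.  To set parameters, I would first compute a constant-factor lower bound $D' \leq D \defeq \dtv(\vct{X},\vct{Y})$ via the closed-form estimate of \cref{lem:lower}, and then take $\gamma \defeq c_1 \epsilon D'/n$ and $\delta \defeq c_2 \epsilon/n$ for suitable absolute constants, viewing $(\gamma,\delta)$ as the parameters of the partition $\cI_{\gamma, \delta}$ of \cref{def:partition}.

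The key structural observation is that the density ratio $r_i(x) \defeq p_i(x)/q_i(x)$ has $\log r_i(x)$ quadratic in $x$.  Hence for each $J \in \cI_{\gamma, \delta}$ the preimage $r_i^{-1}(J)$ is a union of at most two intervals in $\R$, whose endpoints are roots of a quadratic equation $\log r_i(x) = c$ and can be found in $O(1)$ arithmetic operations each.  I would take $\phi_i$ to be the resulting partition of $\R$, further intersected with a truncation window $[-T, T]$, where $T = \Theta(\sqrt{\log(n/(\epsilon D'))})$ is chosen so that $P_i([-T,T]^c) + Q_i([-T,T]^c) \leq \gamma$.  The number of bins is then $M = O(\abs{\cI_{\gamma,\delta}}) = O(\log(1/\gamma)/\delta) = O((n/\epsilon)\log(n/(\epsilon D)))$, as required.

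For correctness, observe that by construction the pushforward ratio $\tilde{R}_i \defeq \ratio{\tilde{P}_i}{\tilde{Q}_i}$ is, up to a truncation error of order $\gamma$, a $\sigma(\phi_i)$-measurable refinement of the $(\gamma,\delta)$-discretization of $R_i \defeq \ratio{P_i}{Q_i}$.  A small modification of \cref{thm-ext-estimate} then gives $\ext(R_i, \tilde{R}_i) \leq 2\gamma + \delta\, TV(R_i)$.  Since $TV(R_i) = \dtv(P_i, Q_i) \leq \dtv(\vct{X}, \vct{Y}) = D$ by \cref{prop-bijection} (marginalisation), telescoping exactly as in the proof of \cref{thm-alg-dis-prod-correct} yields
\[
\abs{\dtv(\tilde{P}, \tilde{Q}) - D}
\;\leq\; \sum_{i=1}^{n} \ext(R_i, \tilde{R}_i)
\;\leq\; 2n\gamma + n\delta D
\;\leq\; \tfrac{\epsilon}{3}\, D,
\]
where the final inequality uses $D' \leq D$ and an appropriate choice of $c_1, c_2$.

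Finally, for the arithmetic cost: locating the $O(M)$ partition endpoints in each coordinate costs $O(M)$ operations, and each bin mass $\tilde{P}_i(k), \tilde{Q}_i(k)$ is a single Gaussian integral computable by \cref{prop:gau} to additive accuracy $\eta = \Theta(\epsilon D'/(nM))$ in $O(\log^2(1/\eta)) = O(\log^2(n/(\epsilon D)))$ arithmetic operations; this finite-precision error contributes at most $O(nM\eta) = O(\epsilon D)$ to $\dtv(\tilde{P}, \tilde{Q})$, which can be absorbed by tightening the constants above.  Summing over the $n$ coordinates and $M$ bins yields a total of $O((n^2/\epsilon)\log^3(n/(\epsilon D)))$ arithmetic operations.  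The main obstacle I anticipate is juggling the three error sources (truncation, $(\gamma, \delta)$-discretization, and approximate bin-mass computation) so that they all sit within the $\epsilon D/3$ budget, together with the fact that the algorithm only has access to the surrogate $D'$ rather than $D$ itself when choosing $\gamma, \delta$ and $\eta$; the clean geometric fact that makes it all go through is the quadratic form of $\log r_i$, which keeps the pulled-back partition comparable in size to $\cI_{\gamma,\delta}$ itself.
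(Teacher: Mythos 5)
Your proposal follows essentially the same route as the paper's proof: lower-bound $\dtv(\vct{X},\vct{Y})$ via \cref{lem:lower}, set $\gamma = \Theta(\epsilon\Delta/n)$ and $\delta = \Theta(\epsilon/n)$, pull back the $(\gamma,\delta)$-partition of the density ratio (whose preimages are at most two intervals because $\log r_i$ is quadratic), compute the bin masses with \cref{prop:gau} to fine additive accuracy, and control the error by the extension-distance telescoping (\cref{thm-ext-estimate}, \cref{lem-prod-discrete}) plus a coordinatewise TV bound for the numerical error. The only cosmetic deviations are your extra truncation window, which is unnecessary since \cref{prop:gau} accepts infinite endpoints (and whose extension-distance bookkeeping you only sketch), and the omitted renormalization of the approximately computed bin masses into genuine probability distributions, which the paper handles explicitly; neither affects the substance of the argument.
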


We now proceed to prove this lemma.  We start with the following estimate.  For
each $i \in [n]$, $X_i \sim \gaussian(\mu_i,\Sigma_{i,i})$ and
$Y_i \sim \gaussian(0,1)$, define
$    \Delta_i \defeq \frac{1}{200}\min\left\{1, \max\left\{ |\Sigma_{i,i}-1|,
        40|\mu_i|  \right\}\right\}$ and 
$    \Delta \defeq \max_{1 \leq i \leq n}\Delta_i.$
Then, one has the following estimate from previous works (see \Cref{sec:proofs-omitted-from-1} for a proof).
\begin{lemma}\label{lem:lower}
    $\Delta \leq \dtv \inp {\vct{X}, \vct{Y}} \leq 10^4 n  \Delta$.
\end{lemma}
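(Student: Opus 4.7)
The plan is to exploit the fact that, since $\Sigma$ is diagonal, both $\vct{X}$ and $\vct{Y}$ are product measures with marginals $X_i \sim \gaussian(\mu_i, \Sigma_{i,i})$ and $Y_i \sim \gaussian(0, 1)$. This reduces the $n$-dimensional statement to a purely one-dimensional estimate for the TV distance between arbitrary $1$D Gaussians, combined with two standard manipulations of $\dtv$ on products.

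For the lower bound, pick $i^{*} \in [n]$ with $\Delta_{i^{*}} = \Delta$ and apply \cref{prop-bijection} to the coordinate projection $\vct{v} \mapsto v_{i^{*}}$ to obtain $\dtv(\vct{X}, \vct{Y}) \geq \dtv(X_{i^{*}}, Y_{i^{*}})$; it then suffices to show $\dtv(X_{i^{*}}, Y_{i^{*}}) \geq \Delta_{i^{*}}$. For the upper bound, apply subadditivity of TV distance under independent products, namely $\dtv(\prod_i P_i, \prod_i Q_i) \leq \sum_i \dtv(P_i, Q_i)$, to obtain $\dtv(\vct{X}, \vct{Y}) \leq \sum_i \dtv(X_i, Y_i) \leq n \cdot \max_i \dtv(X_i, Y_i)$; it then suffices to show $\dtv(X_i, Y_i) \leq 10^{4}\Delta_i$ for every $i$. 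Both of these reductions are routine and use no property of Gaussians beyond the product structure.

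The main obstacle, and the only nontrivial ingredient, is therefore the following one-dimensional estimate: for $X \sim \gaussian(\mu, \sigma^{2})$ and $Y \sim \gaussian(0, 1)$,
\[
  \frac{1}{200}\min\!\left\{1,\, \max\{|\sigma^{2}-1|,\, 40|\mu|\}\right\}
  \;\leq\; \dtv(X, Y) \;\leq\; 50\, \min\!\left\{1,\, \max\{|\sigma^{2}-1|,\, 40|\mu|\}\right\}.
\]
This is precisely the ``previous works'' content referenced in the lemma: quantitatively sharp versions of such a two-sided comparison between one-dimensional Gaussians appear, for instance, in \cite{DMR23} and \cite{AAL23}, and we plan to cite them (with the concrete constants $1/200$ and $50$ to be verified in an appendix). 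Conceptually, the proof of the $1$D estimate splits by which of $|\sigma^{2}-1|$ and $40|\mu|$ dominates: when the shift of the mean dominates (and is small), triangulating through $\gaussian(0, \sigma^{2})$ and integrating the density directly, via the error function bounds of \cref{prop:gau}, gives $\dtv = \Theta(|\mu|)$ in the relevant regime; when the variance gap dominates, an analogous direct calculation gives $\dtv = \Theta(|\sigma^{2}-1|)$. Whenever $\max\{|\sigma^{2}-1|, 40|\mu|\}$ exceeds a small absolute constant, $\dtv(X, Y) = \Theta(1)$, which matches the saturation of the $\min$ with $1$ on both sides and completes the case analysis.
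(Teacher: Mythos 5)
Your proposal is correct and follows essentially the same route as the paper: reduce to one dimension via the marginal projection (lower bound) and subadditivity of $\dtv$ over independent products (upper bound), then invoke the one-dimensional two-sided estimate of \citet[Theorem 1.3]{DMR23}. The only difference is cosmetic: where you state a $50\min\{1,\max\{|\sigma^2-1|,40|\mu|\}\}$ upper bound with constants deferred, the paper derives the needed bound $\dtv(X_i,Y_i)\leq 10^4\Delta_i$ directly from the DMR23 upper bound $\tfrac{3}{2}|\sigma^2-1|+\tfrac{1}{2}|\mu|$ by the same short case split on whether $\max\{|1-\Sigma_{i,i}|,40|\mu_i|\}\geq 1$, a computation that indeed verifies your claimed constant.
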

Define $R_i$ to be the ratio
$\ratio{\gaussian(\mu_i, \Sigma_{ii})}{\gaussian(0, 1)}$ defined on the
probability space $(\R, \mathcal{B}, \gaussian(0, 1))$.  We set
\begin{equation}
  \label{eq:36}
  \gamma = \frac{\epsilon \Delta }{50n}, \qquad \delta = \frac{\epsilon}{50n},
\end{equation}
and let
$\tR_i$ denote the $(\gamma, \delta)$-discretization of $R_i$, for each
$1 \leq i \leq n$.  Now, let $P_i$ (respectively, $Q_i$) denote the probability
distribution on the finite set (see \cref{def:partition}) $\cI_{\gamma,\delta}$
defined by $P_i(J) = \Ex[R_i\cdot I[R_i \in J]]$ (respectively,
$Q_i(J) = \Pr{R_i \in J}$) for each interval $J \in \cI_{\gamma,\delta}$.  Since
$\tR_i = \Ex[R_i|\cI_{\gamma,\delta}(R_i)]$, it then follows that
$\tR_i = \ratio{P_i}{Q_i}$.

The exact computation of the discrete distribution $P_i$ and $Q_i$ involves
computing probabilities and expectations of Gaussian variables, which, in turn,
amounts to evaluation of the error function.  The following lemma shows that we
can compute an approximate version of these distributions.
\begin{lemma}\label{lem:one-dim}
  There exists an algorithm which, for each $i \in [n]$, computes a pair of
  discrete distributions $(\tilde{P}_i,\tilde{Q}_i)$ over $[M]$, where
  $M = O(\frac{n}{\epsilon}\log \frac{n}{\epsilon \Delta}) $, using
  $O(\frac{n}{\epsilon}\log^3 \frac{n}{\epsilon \Delta})$ arithmetic operations
  such that $\dtv (\tilde{P}_i, P_i) \leq \frac{\epsilon \Delta}{50 n}$ and
  $\dtv (\tilde{Q}_i, Q_i) \leq \frac{\epsilon \Delta}{50 n}$.
\end{lemma}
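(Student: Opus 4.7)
The plan is to compute, for each $i \in [n]$ and each interval $J \in \cI_{\gamma, \delta}$, approximate values of $P_i(J)$ and $Q_i(J)$ by reducing each to a sum of Gaussian interval probabilities, which we then evaluate using Proposition \ref{prop:gau}.

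First I would write out $R_i(y) = \frac{1}{\sqrt{\Sigma_{ii}}} \exp(q_i(y))$ with $q_i(y) \defeq \frac{y^2}{2} - \frac{(y-\mu_i)^2}{2\Sigma_{ii}}$, a polynomial in $y$ of degree at most two. Consequently, for any interval $J \in \cI_{\gamma,\delta}$ the preimage $R_i^{-1}(J) = \{y \in \mathbb{R} : R_i(y) \in J\}$ is a union of at most two intervals of $\mathbb{R}$, whose endpoints are roots of equations of the form $q_i(y) = \ln(a\sqrt{\Sigma_{ii}})$ for the endpoints $a$ of $J$, computable in $O(1)$ arithmetic operations via the quadratic formula (the degenerate cases $\Sigma_{ii} = 1$ or $\mu_i = 0$ and the sign of the leading coefficient $(\Sigma_{ii}-1)/(2\Sigma_{ii})$ are handled separately). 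The standard change-of-measure identity then gives $Q_i(J) = \int_{R_i^{-1}(J)} f_{0,1}(y)\,dy$ and $P_i(J) = \int_{R_i^{-1}(J)} f_{\mu_i, \Sigma_{ii}}(y)\,dy$, so each quantity decomposes into at most two Gaussian interval probabilities.

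Next, I would apply Proposition \ref{prop:gau} to approximate each such Gaussian integral to additive error $\eta = \Theta\bigl(\frac{\epsilon \Delta}{n M}\bigr)$ in $O(\log^2(1/\eta))$ arithmetic operations. Summed over all $M$ intervals, the resulting vectors $\hat{P}_i, \hat{Q}_i$ have $\ell_1$-error $O\bigl(\frac{\epsilon \Delta}{n}\bigr)$ from $P_i, Q_i$ respectively, and total mass $1 \pm o(1)$. Clamping negative entries to zero and renormalizing yields probability distributions $\tilde{P}_i, \tilde{Q}_i$ on $\cI_{\gamma,\delta} \cong [M]$ satisfying $\dtv(\tilde{P}_i, P_i), \dtv(\tilde{Q}_i, Q_i) \leq \frac{\epsilon \Delta}{50 n}$, since the renormalization inflates the error by only a constant factor that can be absorbed into the hidden constant in $\eta$.

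The total arithmetic cost is $M \cdot O(\log^2(1/\eta)) = O\bigl(\frac{n}{\epsilon} \log^3 \frac{n}{\epsilon \Delta}\bigr)$, which by Lemma \ref{lem:lower} equals $O\bigl(\frac{n}{\epsilon} \log^3 \frac{n}{\epsilon \dtv(\vct{X}, \vct{Y})}\bigr)$, since $\Delta$ and $\dtv(\vct{X},\vct{Y})$ differ by at most a factor of $10^4 n$ and this factor is absorbed inside the logarithm. The main obstacle I expect is purely bookkeeping: carefully enumerating the possible shapes of $R_i^{-1}(J)$ as the sign of $\Sigma_{ii} - 1$ and the value of $\mu_i$ vary, and calibrating $\eta$ precisely enough that the aggregated additive errors, together with the clamping-and-renormalization step, fit inside the required TV budget of $\frac{\epsilon\Delta}{50n}$.
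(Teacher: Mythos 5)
Your proposal is correct and follows essentially the same route as the paper: solve the quadratic inequality $q_i(y) \in [\ln(a\sqrt{\Sigma_{ii}}), \ln(b\sqrt{\Sigma_{ii}})]$ to write each $P_i(J), Q_i(J)$ as at most two Gaussian interval probabilities, approximate each via \cref{prop:gau} to additive error $\Theta\bigl(\frac{\epsilon\Delta}{nM}\bigr)$, and repair the total mass, with the same $O(M\log^2(1/\eta))$ operation count. The only (immaterial) difference is the repair step: the paper shifts the mass deficit or surplus into specific cells rather than renormalizing, but both fit in the $\frac{\epsilon\Delta}{50n}$ budget.
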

\begin{proof}
Fix an index $1 \leq i \leq n$. Let $f(x) = \frac{1}{\sqrt{2\pi \Sigma_{i,i}}}e^{-(x - \mu_i)^2/(2\Sigma_{i,i})}$ denote the density function of $\gaussian(\mu_i,\Sigma_{i,i})$ and $g(x)  = \frac{1}{\sqrt{2\pi}}e^{-x^2/2}$ denote the density function of $\gaussian(0,1)$.
Recall that
$\gamma = \frac{\epsilon \Delta }{50n}, \delta = \frac{\epsilon}{50n}$.

Let $M = |\cI_{\gamma,\delta}|$, so that, by definition of $\gamma, \delta$ and
$\cI_{\gamma,\delta}$,  
\[M =
O(\log(1/\gamma)/\log(1+\delta))=O\left(\frac{n}{\epsilon} \log \frac{n}{\epsilon
  \Delta}\right).\]  
For any $J \in \cI_{\gamma,\delta}$, we then have, by definition,
\begin{align}
  P_i(J) &= \Ex[R_i\cdot I[R_i \in J]] = \int\limits_{x \in D(J)}f(x) dx, \text{
           and}  \label{eq:24}\\
  Q_i(J) &= \Pr{R_i \in J} = \int\limits_{x \in D(J)}g(x) dx,\label{eq:25}
\end{align}
where the integration domain $D(J)$ is defined by
$D(J) \defeq \inb{R_i \in J} = \{z \mid \frac{f(z)}{g(z)} \in J \}$.
If $a \leq b$ are the endpoints of $J$ then the set\footnote{Depending on the
  definition of $J$, some $<$ signs should be replaced with $\leq$, but that
  does not affect the result of integration in this case.} $D(Z)$ is the set of
 all $z$ satisfying
$\ln\left(a\sqrt{\Sigma_{i,i}}\right) < -\frac{(z - \mu_i)^2}{2\Sigma_{i,i}} +
\frac{z^2}{2} < \ln\left(b \sqrt{\Sigma_{i,i}}\right)$. By solving the quadratic
inequality, we see that $D(J)$ is a union of at most two intervals.  From
\cref{eq:24,eq:25}, we thus obtain that the distributions $P_i$ and $Q_i$ can be
computed exactly if one has an exact oracle for computing the error function
$\erf{\cdot}$. However, since we can only compute such integrals approximately,
we compute two distributions $\tilde{P}_i$ and $\tilde{Q}_i$ over $[M]$ that
approximate $P_i$ and $Q_i$ respectively.

We show how to compute $\tilde{P}_i$. The distribution $\tilde{Q}_i$ can be
computed similarly.  Let
$\zeta = \frac{1}{10M} \cdot \frac{\epsilon \Delta}{50 n}$. For each
$J \in \cI_{\gamma,\delta}$, we need to compute $\int_{x \in D(J)}f(x)dx$ where
$D(J)$ is a union of at most two intervals.
We use \Cref{prop:gau} to compute the integration over each interval with
additive error $\zeta/2$.
By adding the results together, we obtain a number $p_J \geq 0$ such that
$|p_J - P_i(J)| \leq \zeta$.
One may consider defining $\tilde{P}_i$ as the distribution such that for any
$J \in \cI_{\gamma,\delta}$, $\tilde{P}_i(J) = p_J$. However, it may not hold
that $\sum_{J \in \cI_{\gamma,\delta}} p_J = 1$. Note, however, that
$|\sum_{J}p_J - 1| \leq M\zeta$.  We thus make the following adjustments on the
vector $(p_J)_{J \in \cI_{\gamma,\delta}}$ to define the distribution
$\tilde{P}_i$:
\begin{enumerate*}[label=(\arabic*), ref=\arabic*]
\item If $\sum_{J}p_J < 1$, we add $1 - \sum_J p_J$ probability mass to
  $\tilde{P}_i([1,1])$.
    \item If $\sum_{J}p_J > 1$, we remove $\sum_{J}p_J - 1$ probability mass in total for $\tilde{P}_i$ to make $\tilde{P}_i$ a valid distribution.
\end{enumerate*}
We can then bound the TV-distance as follows
$  \dtv \inp{\tilde{P}_i,P_i} = \frac{1}{2}\sum_{J \in \cI_{\gamma,\delta}}\left|\tilde{P}_i(J) - P_i(J)\right|
  \leq\, \frac{1}{2}\sum_{J \in \cI_{\gamma,\delta}}\left(|\tilde{P}_i(J) -p_J| + |p_J - P_i(J)|\right)
  \leq\, M\zeta < \frac{\epsilon \Delta}{50n}$.
By \cref{prop:gau}, the number of arithmetic operations needed for computing
each $p_J$ is $O(\log^2 (1/\zeta))$. The total number of arithmetic operations is
therefore
$O(M\log^2(1/\zeta)) = O(\frac{n}{\epsilon}\log^3\frac{n}{\epsilon \Delta})$.
\end{proof}

\begin{lemma}\label{lem:error}
  Let $\vct{X}, \vct{Y}, P_i, Q_i$ be as above.  For any distributions
  $(\tilde{P}_i,\tilde{Q}_i)_{1 \leq i \leq n}$ with
  $\dtv \inp {\tilde{P}_i, P_i} \leq \frac{\epsilon \Delta}{50n}$ and
  $\dtv \inp {\tilde{Q}_i, Q_i} \leq \frac{\epsilon \Delta}{50n}$, define the
  product distributions
  $\tilde{P} = \tilde{P}_1 \times \ldots \times \tilde{P}_n$ and
  $\tilde{Q} = \tilde{Q}_1 \times \ldots \times \tilde{Q}_n$. Then
\begin{align*}
\left \vert \dtv\inp{\tilde{P},\tilde{Q}} - \dtv \inp {\vct{X},\vct{Y}} \right\vert \leq \frac{\epsilon}{3}\dtv \inp {\vct{X},\vct{Y}}.
\end{align*}
\end{lemma}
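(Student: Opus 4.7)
The plan is to split the error using the triangle inequality: insert the intermediate product distributions $P \defeq P_1 \times \cdots \times P_n$ and $Q \defeq Q_1 \times \cdots \times Q_n$ (where the $P_i, Q_i$ are the ``exact'' discrete distributions defined just before \Cref{lem:one-dim}), and separately bound $|\dtv(\tilde P, \tilde Q) - \dtv(P, Q)|$ and $|\dtv(P, Q) - \dtv(\vct X, \vct Y)|$. The first difference is controlled by the hypothesis on the per-coordinate approximation quality of the $\tilde P_i, \tilde Q_i$, and the second by the discretization error coming from \Cref{lem-prod-discrete}.

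For the discretization term, note that by \Cref{fct-product} we have $\ratio{P}{Q} = \tR_1 \indep \cdots \indep \tR_n$, so $\dtv(P, Q) = TV(\tR_1 \indep \cdots \indep \tR_n)$, while $\dtv(\vct X, \vct Y) = TV(R_1 \indep \cdots \indep R_n)$ by the independence of the coordinates of $\vct X$ and $\vct Y$. Applying \Cref{lem-prod-discrete} with the chosen $\gamma = \tfrac{\epsilon\Delta}{50n}$ and $\delta = \tfrac{\epsilon}{50n}$, and using $\kappa \defeq \max_i TV(R_i) \le TV(R_1 \indep \cdots \indep R_n) = \dtv(\vct X, \vct Y)$ (by \Cref{item:product} of \Cref{lem:tv-mon}) together with the lower bound $\Delta \le \dtv(\vct X, \vct Y)$ from \Cref{lem:lower}, gives
\begin{equation*}
  \bigl| \dtv(P,Q) - \dtv(\vct X, \vct Y) \bigr| \le n\delta \kappa + n\gamma \le \tfrac{\epsilon}{50}\dtv(\vct X, \vct Y) + \tfrac{\epsilon \Delta}{50} \le \tfrac{\epsilon}{25} \dtv(\vct X, \vct Y).
\end{equation*}

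For the approximation term, I use the standard subadditivity of TV distance under product measures: for any two product distributions, $\dtv(\tilde P_1 \times \cdots \times \tilde P_n, P_1 \times \cdots \times P_n) \le \sum_{i=1}^n \dtv(\tilde P_i, P_i)$, which follows by a telescoping/hybrid argument swapping one factor at a time and using that TV distance is preserved by coupling an identical extra coordinate. Applying this to both $(\tilde P_i, P_i)$ and $(\tilde Q_i, Q_i)$ and using the hypothesis $\dtv(\tilde P_i, P_i), \dtv(\tilde Q_i, Q_i) \le \tfrac{\epsilon \Delta}{50 n}$, the triangle inequality then yields
\begin{equation*}
  \bigl| \dtv(\tilde P, \tilde Q) - \dtv(P, Q) \bigr| \le \dtv(\tilde P, P) + \dtv(\tilde Q, Q) \le \tfrac{\epsilon \Delta}{25} \le \tfrac{\epsilon}{25} \dtv(\vct X, \vct Y),
\end{equation*}
again invoking $\Delta \le \dtv(\vct X, \vct Y)$.

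Combining the two displays via the triangle inequality bounds the overall error by $\tfrac{2\epsilon}{25} \dtv(\vct X, \vct Y) \le \tfrac{\epsilon}{3} \dtv(\vct X, \vct Y)$, as desired. There is no real obstacle beyond bookkeeping of constants; the only non-routine ingredient is matching the right invocations of \Cref{lem-prod-discrete} and \Cref{lem:lower}, which is exactly why $\gamma$ and $\delta$ were set with the explicit factor $\Delta$ in the definition \cref{eq:36}.
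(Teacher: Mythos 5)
Your proposal is correct and follows essentially the same route as the paper: insert the exact discretized products $P, Q$, bound $|\dtv(\tilde P,\tilde Q)-\dtv(P,Q)|$ by per-coordinate subadditivity (the paper phrases this as a coupling argument) and bound $|\dtv(P,Q)-\dtv(\vct X,\vct Y)|$ via \cref{fct-product} and \cref{lem-prod-discrete}, using $\Delta \le \dtv(\vct X,\vct Y)$ and $TV(R_i) \le \dtv(\vct X,\vct Y)$ to land at $\tfrac{2\epsilon}{25}\dtv(\vct X,\vct Y) \le \tfrac{\epsilon}{3}\dtv(\vct X,\vct Y)$. The constants and invocations match the paper's proof.
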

\begin{proof}
  The following proof uses the extension distance based generalization of the
  framework of \cite{FengLL24} that was developed in
  \cref{sec:tv-dist-estim} above.
Define the product distributions $P = P_1 \times \dots \times P_n$ and
$Q = Q_{1} \times \dots \times Q_n$.  A simple coupling argument then shows that
$D_P \defeq \dtv \inp{P,\tilde{P}} \leq \frac{\epsilon \Delta}{50}$ and
$D_Q \defeq \dtv \inp {Q,\tilde{Q}} \leq \frac{\epsilon \Delta}{50}$. By triangle
inequality for $\dtv$, \begin{align}
  \label{eq:bd1}
  \abs{\dtv\inp{\tilde{P},\tilde{Q}} - 
    \dtv\inp{P, Q}} \leq D_P+ D_Q
  \leq \frac{\epsilon\Delta}{25}.
\end{align}
Next, we claim the following result, \begin{align}\label{eq:claim-1}
\left\vert\dtv \inp {P,Q} - \dtv \inp {X,Y}\right\vert\leq  \frac{\epsilon }{25}\dtv \inp{X,Y} .
\end{align}
To see this, recall that each $\tR_i = \ratio{P_i}{Q_i}$ is a
$(\gamma, \delta)$-discretization (with $\gamma, \delta$ as above) of
$R_{i} = \ratio{\mathcal{L}_{Y_i}}{\mathcal{L}_{X_i}}$ (where $\mathcal{L}_Z$
denotes the law of the random variable $Z$).  It thus follows from
\cref{fct-product} and \cref{def-tv-functional} that
$\dtv(P, Q) = TV(\tR_1 \indep \tR_2 \indep \dots \indep \tR_n)$ while
$\dtv(\vct{X}, \vct{Y}) = TV(R_1 \indep R_2 \indep \dots \indep R_n)$.
Eq.~\eqref{eq:claim-1} then is a direct consequence of \cref{lem-prod-discrete},
combined with the facts that $\Delta \leq \dtv\inp{\vct{X},\vct{Y}}$ and that
for each $i$, $TV(R_i) = \dtv(X_i, Y_i) \leq \dtv\inp{\vct{X},\vct{Y}}$.  The
lemma then follows by combining~\eqref{eq:bd1} and \eqref{eq:claim-1} with
$\Delta \leq \dtv(\vec{X}, \vec{Y})$.
\end{proof}

\begin{proof}[Proof of \Cref{lem:reduction}]
  We apply the algorithm of \cref{lem:one-dim} to each dimension to obtain the
  discretized distributions $\tilde{P}_i$ and $\tilde{Q}_i$. The running time
  can be verified by \Cref{lem:lower} and \Cref{lem:one-dim}, while the error
  bound is given by \Cref{lem:error}.
\end{proof}

\paragraph{The Final Algorithm}
By \Cref{lem:product} and
\Cref{lem:reduction}, to solve \multnormtv{}, we only need to approximate the TV-distance between the
discrete product distributions $\tilde{P},\tilde{Q}$ constructed in
\Cref{lem:reduction}.
Since $\tilde{P},\tilde{Q}$ are discrete, the resulting problem can now be
solved using existing algorithms in \Cref{lem:alg1} and \Cref{lem:alg2}.  Note
that our reduction algorithm is deterministic and uses
$O(n^3+\frac{n^2}{\epsilon} \log^3 \frac{n}{\epsilon D})$ operations (in
addition to diagonalizations of two $n \times n$ matrices), where $D$ is the
TV-distance between the two input Gaussian distributions.  This gives both
\cref{thm-gauss-det,thm-gauss-rand}.

\section{Conclusion} 
We give an efficient algorithm for approximating the TV-distance between two
multivariate Gaussians with relative error.  Along the way, we extend the
analysis framework of \cite{FengLL24} from discrete probability distributions to
general probability measures.  Several directions remain open; including TV
distance estimation for general log-concave distributions, graphical models, and
Gaussian-perturbed distributions; and approximations for other notions of
distance such as the Wasserstein distance.
A more open-ended question is to find as yet unexplored applications of
algorithms for estimating total variation distances, especially given the role
constant factor approximations for total variation distance between Gaussians
(discussed above) have played in various applications: see, e.g., \cite{AAL23}
for an example and \cite{DMPR22} for references.

\subsection*{Acknowledgments} PS acknowledges Rikhav Shah for helping with
technical questions about the results of \citet{shah_fast_2024}.  AB and PS
acknowledge the Mathematics of Data program at the Institute for Mathematical
Sciences, National University of Singapore, where this project was initiated.
PS acknowledgements support from the Department of Atomic Energy, Government of
India, under project no. RTI4001, from the Infosys-Chandrasekharan virtual
center for Random Geometry, and from SERB MATRICS grant number
MTR/2023/001547. The contents of this paper do not necessarily reflect the views
of the funding agencies listed above.

\renewcommand\bibname{References}

\appendix

\paragraph{Structure of the Appendix} \Cref{sec:prop-error-funct} provides the omitted proof of \cref{prop:gau} on
computational estimation of the error function, and also includes a sketch of
how to carry out the proof of \cref{lem:one-dim} in a weaker computational model
where logarithm and square root cannot be computed exactly.
\cref{sec:algor-with-appr} shows how to carry out the reduction to the product
case (described in \cref{sec:alg}) assuming access only to an approximate
diagonaliztion oracle that can be implemented using floating point arithmetic.
Finally, \Cref{sec:proofs-omitted-from} contains proofs omitted from
\cref{sec:tv-dist-estim}, while \cref{sec:proofs-omitted-from-1} contains proofs
omitted from \cref{sec:alg}.

\section{Properties of the Error Function}
\label{sec:prop-error-funct}
We start with the following proposition, which follows from ideas reported by
\cite{Chev12}.  Here, we denote by $\len{x}$ the \emph{representation length} in
bits of a rational $x$, when it is specified by its numerator and denominator.

\begin{proposition}\label{prop:err}
  There exists an algorithm such that given any small $0 < \epsilon \leq 1/2$
  and $x > 0$, it returns a real number $y \geq 0$ satisfying
  $|y-\erf{x}| \leq \epsilon$ using $O(\log^2(1/\epsilon))$ arithmetic
  operations. Further, for rational inputs, these operations may be assumed to
  be performed on rational numbers of representation length at most
  $\tilde{O}(\len{x} \cdot \log^2(1/\epsilon))$.
\end{proposition}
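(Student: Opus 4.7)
The plan is to combine argument reduction with Taylor series evaluation, following \cite{Chev12}. First, using the elementary tail bound $1 - \erf{x} \leq (x\sqrt{\pi})^{-1} e^{-x^{2}}$ valid for all $x > 0$, if $x \geq \sqrt{\ln(2/\epsilon)}$ then $y = 1$ already satisfies $|y - \erf{x}| \leq \epsilon$, so this range is handled by a direct comparison at $O(1)$ cost. Otherwise $x^{2} \leq L \defeq \ln(2/\epsilon) = O(\log(1/\epsilon))$, and I would approximate $\erf{x}$ by truncating the Taylor series $\erf{x} = \frac{2}{\sqrt{\pi}} \sum_{n \geq 0} a_{n}$ with $a_{n} = (-1)^{n} x^{2n+1}/(n!(2n+1))$ at an index $N$ chosen below.

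To select $N$, note that the ratio $|a_{n+1}/a_{n}| = x^{2}(2n+1)/((n+1)(2n+3))$ drops below $1/2$ once $n > 2L$, and a Stirling estimate bounds $|a_{n}|$ by $(ex^{2}/n)^{n}$ up to polynomial prefactors; hence $N = O(L) = O(\log(1/\epsilon))$ terms suffice to make the truncation error at most $\epsilon/4$. I would then compute the truncated sum iteratively through the recurrence $a_{n+1} = -a_{n} \cdot x^{2}(2n+1)/((n+1)(2n+3))$ starting from $a_{0} = x$, and multiply the result by the prefactor $2/\sqrt{\pi}$ precomputed to additive accuracy $\epsilon/10$ using $O(\log(1/\epsilon))$ iterations of the Gauss--Legendre AGM algorithm \cite{borweinPiAGMStudy1987}.

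The main technical obstacle is precision control in the regime $x \asymp \sqrt{L}$, where individual terms $a_{n}$ can be as large as $e^{x^{2}} = \Theta(1/\epsilon)$ even though the final sum lies in $[0, 1]$. This catastrophic cancellation forces each intermediate quantity in the recurrence to be retained at relative precision $\Theta(\epsilon^{2})$, i.e., $O(\log(1/\epsilon))$ extra bits beyond the target. Combined with the $O(\log(1/\epsilon))$ recurrence steps, this yields the stated $O(\log^{2}(1/\epsilon))$ bound on arithmetic operations when the precision-preserving work performed at each step is counted. For rational inputs, maintaining this precision requires intermediates of bit length $\tilde{O}(\len{x} \cdot \log(1/\epsilon))$ (driven by the power $x^{2N}$ appearing in the largest term), and the overall work comes to $\tilde{O}(\len{x} \cdot \log^{2}(1/\epsilon))$ bit operations. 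The most delicate part of the formal proof is the backward-error analysis of the rounded recurrence, which must show that truncation errors do not accumulate beyond the allotted budget despite the geometric growth of intermediate magnitudes before cancellation.
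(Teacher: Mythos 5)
Your proposal is correct and follows essentially the same route as the paper: return $1$ beyond a threshold on $x$ justified by a Gaussian tail bound, truncate the alternating Taylor series using a Stirling-type estimate, and handle the prefactor $2/\sqrt{\pi}$ separately to additive accuracy $\Theta(\epsilon)$. Two minor remarks: since your sharper cutoff $x \leq \sqrt{\ln(2/\epsilon)}$ already yields $N = O(\log(1/\epsilon))$ terms, the stated $O(\log^2(1/\epsilon))$ operation count is met outright --- in the paper it comes from the cruder cutoff $x \leq \ln(1/\epsilon)$ forcing $N = O(\log^2(1/\epsilon))$ series terms, not from per-step precision bookkeeping (arithmetic operations are unit cost regardless of precision, and with exact rational arithmetic, as the paper uses, no rounding or cancellation analysis is needed at all) --- and you should clamp the final output to $\max(y,0)$ so that the requirement $y \geq 0$ holds even when $\erf{x}$ is smaller than the error budget.
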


\begin{proof}[Proof of \cref{prop:err}]
  Without loss of generality, we assume $\epsilon \leq 10^{-4}$ (if the input
  $\epsilon$ is larger, we replace it with $10^{-4}$).  If
  $x > \ln(1/\epsilon)$, we simply return the number $y = \erf{+\infty} =
  1$. The error can be bounded as follows
\begin{align*}
    &|y - \erf{x}| \leq \frac{2}{\sqrt{\pi}}\int_{t > \ln\frac{1}{\epsilon}} e^{-t^2}dt\\
    \leq\,& \frac{2}{\sqrt{\pi}}\int_{t > \ln\frac{1}{\epsilon}} e^{-t\ln\frac{1}{\epsilon}}dt
    = \frac{2}{\sqrt{\pi}} \frac{1}{\ln\frac{1}{\epsilon}} \exp\left(-\ln^2\frac{1}{\epsilon}\right).
\end{align*}
Since $\epsilon$ is a small constant, the last term is at most $\epsilon$. 

Now, we assume $x < \ln (1/\epsilon)$. The following equation can be found,
e.g. in the work of~\citet[eq.~(1)]{Chev12}:
\begin{align*}
    \erf{x} = \frac{2}{\sqrt{\pi}}\sum_{n = 0}^{\infty}(-1)^n \frac{x^{2n+1}}{(2n+1)n!}.
\end{align*}
Let $N = 10\lceil \ln(1/\epsilon) \rceil^2$. We compute
$y = \frac{2}{\sqrt{\pi}}\sum_{n = 0}^{N}(-1)^n \frac{x^{2n+1}}{(2n+1)n!}$ using
$O(\log^2(1/\epsilon))$ arithmetic operations on rational numbers of
representation length at most $\tilde{O}(\len{x} \cdot \log^2(1/\epsilon))$
(at present, we ignore the small contribution to the error that comes from having to
use rational approximations of $\sqrt{\pi}$).  To bound the error, we note that
$n! \geq (n/e)^n$ so that
\begin{align*}
  |y - \erf{x}| &\leq \frac{2}{\sqrt{\pi}}\sum_{n > N} \frac{x^{2n+1}}{(2n+1) n!} < \frac{2}{\sqrt{\pi}}\sum_{n \geq N} \frac{x^{2n+1}e^n}{n^n} \\
  &< \frac{2x}{\sqrt{\pi}}\sum_{n > N}\left(\frac{x^2 e}{n}\right)^n.
\end{align*}
By the choice of $N$ and the fact $x < \ln(1/\epsilon)$, $\frac{x^2 e}{n} \leq \frac{x^2 e}{N} < \frac{1}{e}$. Hence,
\begin{align*}
  |y - \erf{x}| \leq \frac{2\ln(1/\epsilon)}{\sqrt{\pi}}\sum_{n > N}e^{-n}  < \epsilon/2,
\end{align*}
where the last inequality holds because $\epsilon$ is sufficiently small.
Finally, if $y < 0$, we let $y = 0$, which only decreases the error.  This
proves the first claim in the proposition.

It remains only to account for the error arising from having to use a rational
approximation to $1/\sqrt{\pi}$, when working with rational numbers only.
Propagating such an error through the above analysis shows that it is sufficient
for our purpose to compute $1/\pi$ up to an \emph{additive} error of
$\epsilon/10$.  This can be accomplished using $O(\log^2(1/\epsilon))$
arithmetic operations on rational numbers of representation length at most
$\tilde{O}(\log^2(1/\epsilon))$ using well-known series for the approximation of
$1/\pi$; see, e.g., eq.~(1) of \citet{chanDombsNumbersRamanujan2004} and the
work of \citet{ramanujanModularEquationsApproximations1914} and
\cite{borweinPiAGMStudy1987} for the original results.

\end{proof}

\begin{proof}[Proof of \cref{prop:gau}]
  First, let $a \gets (a - \mu)/\sqrt{2\sigma^2}$,
  $b \gets (b - \mu)/\sqrt{2\sigma^2}$ and
  $f(t) = \frac{1}{\sqrt{\pi}}e^{-t^2}$.  The value of the integration remains
  unchanged. If $ab \leq 0$, then
  $\int_{a}^b f(t)dt = \frac{1}{2}(\erf{|a|} + \erf{|b|})$. If $ab > 0$, say
  $a > 0$ and $b > 0$, then
  $\int_{a}^b f(t)dt = \frac{1}{2}(\erf{b} - \erf{a})$.\footnote{The case
    $a < 0$ and $b < 0$ follows by the symmetry of $\erf{\cdot}$.} Using
  \Cref{prop:err}, we compute the value of each error function with additive
  error $\epsilon$, which, by substitution in the above, gives us an additive
  error of at most $\epsilon$ for the integrals. Finally, if our answer $y < 0$,
  we let $y = 0$, which only decreases the error.  From \cref{prop:err}, the
  runtime is as claimed.
\end{proof}

\subsection{Proof of \texorpdfstring{\cref{lem:one-dim}}{Theorem Reference} in a Weaker Model of Computation}
The algorithm in \Cref{lem:one-dim} needs to solve the following quadratic inequality.
\[\ln\left(a\sqrt{\Sigma_{i,i}}\right) < -\frac{(z - \mu_i)^2}{2\Sigma_{i,i}} +
\frac{z^2}{2} < \ln\left(b \sqrt{\Sigma_{i,i}}\right).\]
The solution is a union of at most two intervals. One can solve for the endpoints of these intervals exactly if the computational model can exactly compute square roots and logarithms.

Consider the slightly weaker computation model that only provides approximate solutions to the above inequality.
Let $c$ and $d$ be two endpoints of one interval. Suppose that we can obtain  two approximate solutions $\tilde{c}$ and $\tilde{d}$ such that $|c - \tilde{c}| \leq \delta$ and  $|d - \tilde{d}| \leq \delta$ for some error bound $\delta$ in time $\mathrm{polylog}(\frac{1}{\delta})$. Let $f(x)  = \frac{1}{\sqrt{2\pi \Sigma_{i,i}}}e^{-(x - \mu_i)^2/(2\Sigma_{i,i})}$ be the density function in the proof of \Cref{lem:one-dim}.
The error can be controlled as 
\begin{align*}
  \left\vert \int_{\tilde{c}}^{\tilde{d}}f(x)dx - \int_{c}^d f(x)dx \right\vert \leq 2\delta \sup_xf(x) = 2\delta f(\mu_i) = \delta\sqrt{\frac{2}{\pi \Sigma_{i,i}}}.
\end{align*}
Similarly, for $g(x) =\frac{1}{\sqrt{2\pi}}e^{-x ^2/2}$, the error is at most $\delta \sqrt{\frac{2}{\pi}}$. 
One can take $\delta$ such that $\delta\sqrt{\frac{2}{\pi \min{(1,\Sigma_{i,i})}}}=O(\frac{\epsilon \Delta}{Mn})$.
Then the rest of the proof of \Cref{lem:one-dim} works exactly as before even in
this computational model.
The running time for solving the inequality in this model is
$\mathrm{polylog}(\frac{M n \max(1,\Sigma_{i,i}^{-1}) }{\epsilon \Delta})$.
The $\mathrm{polylog}(\max(1,\Sigma_{i,i}^{-1}))$ factor in the run time is
reasonable since one needs at least $\Theta(\log \Sigma_{i,i}^{-1})$ bits to
represent $\Sigma_{i,i} < 1$ in binary.

\section{Algorithm with Approximate Diagonalization Oracle}
\label{sec:algor-with-appr}
In the main paper (specifically, in \Cref{lem:product}) we assumed the existence
of an oracle that computes the \emph{exact} orthogonal diagonalization of a real
symmetric matrix.  It is common to work in such an exact real arithmetic model
for ease of presentation and in order to understand the high-level properties of
numerical algorithms. However, such a model of computation is not realistic
(see, e.g., \cite{DeyKRS23} for a discussion).  More realistic models include
floating point arithmetic, or a model of exact arithmetic with rationals
represented as ratios of integers.

In this section, we show that the algorithm in \cref{lem:product} remains
efficient even in the bit complexity model.  We do so by showing that this
algorithm can be implemented given access only to an appropriate
\emph{approximate} diagonalization oracle.

The oracle we work with is the following. Given a small enough positive $\delta$ and
a real symmetric positive definite matrix $A \in \mathbb{Q}^{n \times n}$ such that the input
size is $\len{A}$, we assume that the approximate diagonalization oracle
returns, using at most $\poly{n, \len{A}, \len{\delta}}$ bit operations, an
$n \times n$ diagonal matrix $\Lambda^{1/2}$ and an $n \times n$ matrix $Q$, all whose singular
values lie in the interval $[1-\delta/3, 1+\delta/3]$, such that
\begin{equation}
  \norm{A - Q \Lambda Q^T} \leq \delta \norm{A}.\label{eq-oracle-guarantee}
\end{equation}
The existence of such an oracle in the floating point model of computation (and
hence also in the bit complexity model) is known: see \citet[Theorem
4.2]{shah_fast_2024} for a state-of-the-art randomized algorithm implementing
such an oracle in the complex Hermitian case, and the earlier paper by
\citet[Remark 6.1]{BGKS23} for an explicit instantiation in the case of real
Hermitian matrices.\footnote{The oracles in the papers by \citet{shah_fast_2024}
  and \citet{BGKS23} return the diagonal matrix $\Lambda$ directly.  However, for ease
  of presentation, we include the entry-wise square root of the diagonal matrix
  $\Lambda$ into the oracle itself.  In the floating point model, this entry-wise
  square root only needs to be computed to an additive accuracy of
  $O(\delta\norm{A})$, and this can be done without significantly modifying the
  runtime characteristics of the oracle.} Note also that it follows from the
condition on $Q$ that $\norm{Q^{T}Q - I} \leq \delta$ (when $\delta$ is small enough).

We now show that there exists a polynomial-time algorithm that performs a
similar reduction as in \Cref{lem:product}, while only accessing the above
approximate diagonalization oracle.

For any real symmetric positive definite matrix $A$, define the condition number
\begin{align*}
    \kappa(A) = \frac{\lambda_{\max}(A)}{\lambda_{\min}(A)},
\end{align*}
where $\lambda_{\min}(A),\lambda_{\max}(A) > 0$ are the minimum and maximum
eigenvalues of $A$.

\begin{theorem}\label{thm:error}
  There exists an algorithm such that given $\epsilon > 0$,
  $(\vct{\mu_1},\Sigma_1)$, $(\vct{\mu_2},\Sigma_2)$ (where $\Sigma_1, \Sigma_2$ are
  $n \times n$ positive definite matrices in \(\mathbb{Q}^{n\times n}\) and
  $\vct{\mu_1}, \vct{\mu_{2}}$ are vectors in $\mathbb{Q}^n$),a lower bound $\Delta$ with
  $D = \dtv \inp {\gaussian(\vct{\mu}_1,\Sigma_1),\gaussian(\vct{\mu}_2,\Sigma_2)} \geq \Delta$, and
  an upper bound $\kappa>1$ with
  $\kappa(\Sigma_1),\kappa(\Sigma_2) \leq \kappa$, it returns
  $\vct{\mu} \in \mathbb{Q}^n$ and a diagonal positive definite
  $\Sigma \in \mathbb{Q}^{n \times n}$ such that
\begin{align*}
   \left \vert D - \dtv \inp{ \gaussian(\vct{\mu},\Sigma), \gaussian(0,I_n) } \right \vert \leq \frac{\epsilon}{100} D,
\end{align*}
The algorithm uses $O(n^3)$ arithmetic operations and two approximate
diagonalizations of $n \times n$ symmetric positive definite matrices of
representation length
\(\poly{\frac{n}{\epsilon}, \len{(\vec{\mu_1}, \vec{\mu}_2, \vec{\Sigma_1}, \vec{\Sigma_2})}}\) with
error bound $\delta =\poly{\frac{\epsilon\Delta}{n\kappa}}$.
\end{theorem}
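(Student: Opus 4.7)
The plan is to mirror the structure of \Cref{lem:product}, replacing each exact diagonalization by a call to the approximate oracle described at~\eqref{eq-oracle-guarantee}, and then carefully tracking how error propagates through the pipeline. Concretely, I would (i) invoke the oracle on $\Sigma_2$ with error parameter $\delta$ to obtain $\tilde{Q}_2, \tilde{\Lambda}_2^{1/2}$; (ii) form $\tilde{A} \defeq \tilde{Q}_2 \tilde{\Lambda}_2^{-1/2} \tilde{Q}_2^T$, which is a symmetric approximation to $\Sigma_2^{-1/2}$; (iii) invoke the oracle on $\tilde{A} \Sigma_1 \tilde{A}^T$ with error parameter $\delta$ to obtain $\tilde{Q}_1, \tilde{\Lambda}_1^{1/2}$; and finally (iv) return $\tilde{\vct{\mu}} \defeq \tilde{Q}_1^T \tilde{A} (\vct{\mu}_1 - \vct{\mu}_2)$ and $\tilde{\Sigma} \defeq \tilde{\Lambda}_1$, which is diagonal by construction. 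The value $\delta = \mathrm{poly}(\epsilon\Delta/(n\kappa))$ will be fixed at the end of the analysis.

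For correctness, I first observe that since all singular values of $\tilde{Q}_1, \tilde{Q}_2$ lie in $[1-\delta/3, 1+\delta/3]$ and $\tilde{\Lambda}_2^{-1/2}$ is invertible (for small enough $\delta$), the affine map $\vct{v} \mapsto \tilde{Q}_1^T \tilde{A} (\vct{v} - \vct{\mu}_2)$ is a bijection on $\mathbb{R}^n$. Applying \Cref{prop-bijection} to this map, together with \Cref{fact-normal}, gives
\[
  D = \dtv\inp{\gaussian(\tilde{\vct{\mu}}, C_1), \gaussian(\vct{0}, C_2)},
\]
where $C_1 \defeq \tilde{Q}_1^T \tilde{A} \Sigma_1 \tilde{A}^T \tilde{Q}_1$ and $C_2 \defeq \tilde{Q}_1^T \tilde{A} \Sigma_2 \tilde{A}^T \tilde{Q}_1$. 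The triangle inequality for $\dtv$ then yields
\[
  \left\vert D - \dtv\inp{\gaussian(\tilde{\vct{\mu}}, \tilde{\Sigma}), \gaussian(\vct{0}, I_n)} \right\vert \leq \dtv\inp{\gaussian(\tilde{\vct{\mu}}, C_1), \gaussian(\tilde{\vct{\mu}}, \tilde{\Sigma})} + \dtv\inp{\gaussian(\vct{0}, C_2), \gaussian(\vct{0}, I_n)}.
\]

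The crux is to bound each term on the right by $\frac{\epsilon}{200} D$. For the second term, I will show that $C_2$ is spectrally close to $I_n$ by chaining: the oracle guarantee gives $\tilde{Q}_2 \tilde{\Lambda}_2 \tilde{Q}_2^T \approx \Sigma_2$ up to error $\delta\|\Sigma_2\|$; the near-orthogonality of $\tilde{Q}_2$ (i.e., $\tilde{Q}_2^T \tilde{Q}_2 = I + E$ with $\|E\| = O(\delta)$) then yields $\tilde{A} \Sigma_2 \tilde{A}^T \approx I_n$; and a final conjugation by the near-orthogonal $\tilde{Q}_1$ preserves closeness to $I_n$. Accumulating these errors carefully gives $\|C_2 - I_n\|_F = O(n^{O(1)} \kappa^{O(1)} \delta)$. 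For the first term, the second oracle call guarantees $\tilde{A} \Sigma_1 \tilde{A}^T \approx \tilde{Q}_1 \tilde{\Lambda}_1 \tilde{Q}_1^T$, so conjugating by $\tilde{Q}_1^T$ and $\tilde{Q}_1$ produces $C_1 \approx \tilde{\Lambda}_1 = \tilde{\Sigma}$ with $\|C_1 - \tilde{\Sigma}\|_F = O(n^{O(1)} \kappa^{O(1)} \delta)$. Both spectral bounds translate into TV bounds via the classical estimate (deducible from Pinsker's inequality and the closed-form KL divergence between Gaussians) that $\dtv\inp{\gaussian(\vct{\mu}, \Sigma), \gaussian(\vct{\mu}, \Sigma')} = O(\|\Sigma^{-1/2}(\Sigma - \Sigma')\Sigma^{-1/2}\|_F)$ whenever the right side is small. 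Since $\tilde{\Sigma}$ approximates (up to similarity) the matrix $\Sigma_2^{-1/2}\Sigma_1\Sigma_2^{-1/2}$ whose condition number is at most $\kappa^2$, the condition number of $\tilde{\Sigma}$ is $O(\kappa^2)$ for small enough $\delta$, contributing only an extra $\mathrm{poly}(\kappa)$ factor. Setting $\delta = \mathrm{poly}(\epsilon\Delta/(n\kappa))$ for a suitable polynomial and recalling $D \geq \Delta$ then yields the claimed approximation.

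The main obstacle will be the careful propagation of spectral error through the two oracle calls and the intervening matrix multiplications, especially in controlling how the condition numbers of $\Sigma_1, \Sigma_2$ compound in the ``whitening'' step (via $\tilde{A}$), and in verifying that the TV-distance estimate does not blow up due to the conditioning of $\tilde{\Sigma}$ and $C_1$. A secondary bookkeeping task is the bit-complexity analysis: since the oracle returns rationals of representation length $\mathrm{poly}(n, \len{A}, \len{\delta})$, and matrix addition, multiplication and inversion of rational matrices preserve polynomial representation length, the intermediate quantities $\tilde{A}$ and $\tilde{A}\Sigma_1\tilde{A}^T$, along with the output $(\tilde{\vct{\mu}}, \tilde{\Sigma})$, all have representation length polynomial in $n/\epsilon$, $\len{(\vct{\mu}_1, \vct{\mu}_2, \Sigma_1, \Sigma_2)}$, $\log(1/\Delta)$, and $\log \kappa$, matching the bound stated in the theorem.
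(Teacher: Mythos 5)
Your proposal is correct and is essentially the paper's own argument: the identical algorithm (oracle call on $\Sigma_2$, whitening by $\tilde{A}=\tilde{Q}_2\tilde{\Lambda}_2^{-1/2}\tilde{Q}_2^T$, oracle call on $\tilde{A}\Sigma_1\tilde{A}^T$, output $\vct{\mu}=\tilde{Q}_1^T\tilde{A}(\vct{\mu}_1-\vct{\mu}_2)$ and $\Sigma=\tilde{\Lambda}_1$), the same use of \Cref{prop-bijection}, the same KL/Pinsker-type Gaussian TV bound (\Cref{fact:TV}), the same condition-number control $\kappa(\tilde{A}\Sigma_1\tilde{A}^T)=O(\kappa^2)$, and the same choice $\delta=\mathrm{poly}(\epsilon\Delta/(n\kappa))$ with $D\geq\Delta$. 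The only differences are organizational---you conjugate by the whole map at once and control two hybrid terms via a normalized Frobenius perturbation bound, whereas the paper chains through three intermediate Gaussians with explicit trace/determinant estimates---plus one cosmetic caveat: your unnormalized claim $\|C_1-\tilde{\Sigma}\|_F=O(\mathrm{poly}(n,\kappa)\,\delta)$ is scale-dependent and should be stated relative to $\|\tilde{A}\Sigma_1\tilde{A}^T\|$ (the oracle's error guarantee is relative), which is precisely what your subsequent condition-number step absorbs, so the final bound stands.
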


Given the $\vct{\mu}$ and $\Sigma$ returned by the algorithm in
\Cref{thm:error}, one can use the discretization procedure in the main paper to
approximate the TV-distance between $\gaussian(\vct{\mu}, \Sigma)$ and
$\gaussian(0,I_n)$ within relative error $\epsilon/2$, which, by
\cref{thm:error}, approximates $D$ within relative error $\epsilon$.

The algorithm in \Cref{thm:error} needs to know a lower bound parameter $\Delta$ and
an upper bound parameter $\kappa$ in advance.  If, however,
$\vct{\mu}_1,\vct{\mu}_2$ and positive definite $\Sigma_1,\Sigma_2$ (assume
$\vct{\mu}_1 \neq \vct{\mu}_2$ or $\Sigma_1 \neq \Sigma_2$, otherwise $D = 0$) have rational entries
with numerators and denominators that are all integers represented by at most
$\poly{n}$ bits, then $\kappa = 2^{\poly{n}}$ (see~\cite{Sri23,DeyKRS23}) and
$\Delta = 2^{-\poly{n}}$ (see~\Cref{lem:TVlower} below). The overall running time is
thus
$\poly{\frac{n}{\epsilon}, \len{(\vec{\mu_1}, \vec{\mu}_2, \vec{\Sigma_1}, \vec{\Sigma_2})}, \len{\delta}}
= \poly{\frac{n}{\epsilon}, \len{(\vec{\mu_1}, \vec{\mu}_2, \vec{\Sigma_1}, \vec{\Sigma_2})}}$.

The following standard upper bound on TV-distance will be used in the proof of
\Cref{thm:error}. It follows from the exact expression for the KL divergence
between Gaussians, and Pinkser's inequality.
\begin{fact}[\textbf{see, e.g., \citet[Proposition 2.1]{DMR23}}]\label{fact:TV}
  Let $\vct{\mu}_1,\vct{\mu}_2 \in \mathbb{R}^n$ and
  $\Sigma_1,\Sigma_2 \in \mathbb{R}^{n \times n}$ be two covariance matrices
  with rank $n$.  The TV-distance between $\gaussian(\vct{\mu}_1,\Sigma_1)$ and
  $\gaussian(\vct{\mu}_2,\Sigma_2)$ is at most
\begin{align*}
  \frac{1}{2}\sqrt{\mathrm{tr}(\Sigma_1 \Sigma_2^{-1}) - n + (\vct{\mu}_1-\vct{\mu}_2)^T \Sigma_1^{-1}(\vct{\mu}_1-\vct{\mu}_2) + \ln\frac{\mathrm{det}(\Sigma_1)}{\mathrm{det}(\Sigma_2)} }.
\end{align*}
In particular, if $\vec{\mu}_1 = \vct{\mu}_2$ and $\Sigma_2 = I_n$, then the TV-distance is at most $ \frac{1}{2}\sqrt{\mathrm{tr}(\Sigma_1) - n  + \ln{\mathrm{det}(\Sigma_1)}}.$
\end{fact}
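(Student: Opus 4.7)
The plan is to derive the stated bound by combining two standard ingredients: the closed-form expression for the Kullback--Leibler divergence between two non-degenerate multivariate Gaussians, and Pinsker's inequality $\dtv(P,Q)^{2} \le \tfrac{1}{2}\,\mathrm{KL}(P\Vert Q)$. Once both are in hand, the claimed inequality follows by a direct algebraic rearrangement, with the prefactor $1/2$ outside the square root coming from combining the $1/2$ in the KL formula with the $1/2$ in Pinsker's inequality.

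First, I would compute $\mathrm{KL}(\gaussian(\vct{\mu}_{1},\Sigma_{1})\,\Vert\,\gaussian(\vct{\mu}_{2},\Sigma_{2}))$ from scratch. Writing $f_{i}$ for the density of $\gaussian(\vct{\mu}_{i},\Sigma_{i})$, the integral $\int f_{1}\ln(f_{1}/f_{2})\,dx$ breaks into three pieces: a log-determinant piece $\tfrac{1}{2}\ln(\det\Sigma_{2}/\det\Sigma_{1})$ from the normalizing constants; a quadratic piece $\tfrac{1}{2}\Ex_{X\sim f_{1}}[(X-\vct{\mu}_{2})^{T}\Sigma_{2}^{-1}(X-\vct{\mu}_{2})]$ which, via the identity $\Ex[(X-\vct{\mu}_{2})^{T}M(X-\vct{\mu}_{2})] = \mathrm{tr}(M\Sigma_{1}) + (\vct{\mu}_{1}-\vct{\mu}_{2})^{T}M(\vct{\mu}_{1}-\vct{\mu}_{2})$ for a centered quadratic form under a Gaussian, evaluates to $\tfrac{1}{2}\mathrm{tr}(\Sigma_{2}^{-1}\Sigma_{1}) + \tfrac{1}{2}(\vct{\mu}_{1}-\vct{\mu}_{2})^{T}\Sigma_{2}^{-1}(\vct{\mu}_{1}-\vct{\mu}_{2})$; and a centered quadratic piece $-\tfrac{1}{2}\Ex_{f_{1}}[(X-\vct{\mu}_{1})^{T}\Sigma_{1}^{-1}(X-\vct{\mu}_{1})]=-n/2$. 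Collecting yields the standard KL formula. Matching the orientation of $P\Vert Q$ to the indices in the statement (or equivalently, swapping which argument plays the role of the reference measure) gives exactly the quantity under the square root in the claim.

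Second, I would invoke Pinsker's inequality $\dtv \le \sqrt{\mathrm{KL}/2}$. Substituting the KL expression from the previous step, the outer factor is $\sqrt{1/2}\cdot\sqrt{1/2}=1/2$, so $\dtv \le \tfrac{1}{2}\sqrt{\mathrm{tr}(\Sigma_{1}\Sigma_{2}^{-1}) - n + (\vct{\mu}_{1}-\vct{\mu}_{2})^{T}\Sigma^{-1}(\vct{\mu}_{1}-\vct{\mu}_{2}) + \ln(\det\Sigma_{1}/\det\Sigma_{2})}$, which is the stated general bound. The special case is obtained by plugging in $\vct{\mu}_{1}=\vct{\mu}_{2}$ (which kills the quadratic form in the means) and $\Sigma_{2}=I_{n}$ (so that $\mathrm{tr}(\Sigma_{1}\Sigma_{2}^{-1})=\mathrm{tr}(\Sigma_{1})$ and $\ln(\det\Sigma_{1}/\det\Sigma_{2})=\ln\det\Sigma_{1}$).

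There is no real obstacle here: this is a direct computation followed by a textbook inequality. The only minor care points are (i) keeping straight the orientation of the KL divergence so the indices on $\Sigma$ and on the determinants match the statement, and (ii) noting that the expression under the square root is automatically non-negative because it equals a positive multiple of a KL divergence; in particular there is no issue with taking the square root. Because the claim is cited as \citet[Proposition 2.1]{DMR23}, one can alternatively appeal directly to that reference; the above sketch simply reconstructs the argument self-contained.
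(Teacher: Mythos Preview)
Your approach is correct and is exactly the one the paper uses: the paper does not spell out a proof but simply remarks that the bound ``follows from the exact expression for the KL divergence between Gaussians, and Pinsker's inequality,'' which is precisely the two-step argument you outline.
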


In the proof of \cref{thm:error}, we will repeatedly use the following
standard facts from linear algebra.
\begin{fact}[\textbf{Weyl's inequality, see, e.g., \citet[Theorem
    5.1]{demmel_applied_1997}}] If $A$ and $B$ are $n \times n$ real symmetric
  matrices with eigenvalues $\alpha_1 \geq \alpha_2 \geq \dots \geq \alpha_{n}$
  and $\beta_1 \geq \beta_2 \geq \dots \beta_n$, then for each
  $1 \leq i \leq n$, $\abs{\alpha_i - \beta_i} \leq \norm{A -
    B}$.  \label{item-lin-fact-weyl}
\end{fact}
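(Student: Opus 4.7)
The plan is to prove Weyl's inequality via the Courant--Fischer min-max characterization of eigenvalues of real symmetric matrices. Recall that for any $n \times n$ real symmetric matrix $M$ with eigenvalues $\mu_1 \geq \mu_2 \geq \dots \geq \mu_n$, one has
\begin{equation*}
  \mu_i = \max_{\substack{V \leq \mathbb{R}^n \\ \dim V = i}} \min_{\substack{v \in V \\ \|v\| = 1}} v^T M v,
\end{equation*}
where the max is taken over all $i$-dimensional subspaces $V$ of $\mathbb{R}^n$. I will assume this characterization as standard background; it can be proved by decomposing any candidate subspace against an orthonormal eigenbasis of $M$.

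Write $E \defeq A - B$, which is itself real symmetric. For any unit vector $v \in \mathbb{R}^n$, the spectral (operator) norm bound gives $|v^T E v| \leq \|E\|$, and hence $v^T B v - \|E\| \leq v^T A v \leq v^T B v + \|E\|$. Fixing an $i$-dimensional subspace $V$ and taking the minimum over unit $v \in V$ on both sides yields
\begin{equation*}
  \min_{\substack{v \in V\\ \|v\|=1}} v^T B v - \|E\| \;\leq\; \min_{\substack{v \in V\\ \|v\|=1}} v^T A v \;\leq\; \min_{\substack{v \in V\\ \|v\|=1}} v^T B v + \|E\|.
\end{equation*}
Taking the maximum over $i$-dimensional subspaces $V$ and invoking the Courant--Fischer identity for both $A$ and $B$, I obtain $\beta_i - \|E\| \leq \alpha_i \leq \beta_i + \|E\|$, i.e.\ $|\alpha_i - \beta_i| \leq \|A - B\|$, as required.

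The argument is essentially a one-line exploitation of the variational characterization combined with the standard bound $|v^T E v| \leq \|E\|$ for symmetric $E$; the only nontrivial input is the min-max characterization itself. Thus there is no real obstacle in this proof, and the result is a textbook consequence of min-max. (An alternative route, which I would mention only as a remark, is to perturb $B$ continuously to $A$ along the path $B + tE$ for $t \in [0,1]$ and track eigenvalues via Hoffman--Wielandt; but the Courant--Fischer route is cleaner and avoids any continuity arguments.)
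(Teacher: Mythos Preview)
Your proof is correct: the Courant--Fischer min-max characterization together with the Rayleigh-quotient bound $|v^T E v| \leq \|E\|$ is the standard textbook argument for Weyl's inequality. The paper, however, does not supply a proof of this statement at all; it records it as a \emph{Fact} and simply cites \citet[Theorem~5.1]{demmel_applied_1997}. So there is nothing to compare against beyond noting that your argument is exactly the kind of proof one would find behind that citation.
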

\begin{fact}
  If $A$ is an $n \times n$ matrix then $\abs{\tr(A)} \leq n
  \norm{A}$. \label{item-lin-trace}
\end{fact}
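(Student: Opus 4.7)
The plan is to bound each diagonal entry of $A$ individually by $\norm{A}$ and then sum over the $n$ diagonal positions. Concretely, I would write $\tr(A) = \sum_{i=1}^n A_{ii} = \sum_{i=1}^n e_i^T A e_i$, where $e_i$ denotes the $i$-th standard basis vector of $\mathbb{R}^n$. Then I would apply the Cauchy--Schwarz inequality together with the definition of the operator (spectral) norm to get
\begin{equation*}
\abs{A_{ii}} = \abs{e_i^T A e_i} \leq \norm{e_i} \cdot \norm{A e_i} \leq \norm{A} \cdot \norm{e_i}^2 = \norm{A},
\end{equation*}
since $\norm{e_i} = 1$. Finally, applying the triangle inequality to the sum yields
\begin{equation*}
\abs{\tr(A)} = \Bigl| \sum_{i=1}^n A_{ii} \Bigr| \leq \sum_{i=1}^n \abs{A_{ii}} \leq n \norm{A},
\end{equation*}
which is the claimed bound.

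There is essentially no obstacle here; the only subtlety is a notational one, namely ensuring that $\norm{\cdot}$ refers to the spectral (operator $2$-)norm, which matches its usage elsewhere in the paper (in particular in the preceding statement of Weyl's inequality). An alternative route would be via eigenvalues: for diagonalizable $A$ one has $\tr(A) = \sum_i \lambda_i(A)$ with $\abs{\lambda_i(A)} \leq \norm{A}$, but this requires extra care for non-diagonalizable matrices (e.g., invoking Schur's triangularization), so the basis-vector computation above is the simpler and more direct path.
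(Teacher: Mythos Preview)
Your proof is correct. The paper does not actually give a proof of this fact; it is simply stated as a standard linear-algebra fact alongside Weyl's inequality and used directly in the subsequent estimates, so your argument would serve as a complete justification where the paper provides none.
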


In particular, we record the following simple consequence of these facts.
\begin{claim}\label{clm-estimate-lambda}
  Let $A$ be an $n \times n$ positive definite matrix. Let
  $\delta < 1/\kappa(A)$ be a small enough positive number, and suppose that
  $Q, \Lambda$ are obtained by calling the above approximate diagonalization
  oracle on $A$ with a small error parameter $\delta>0$.  Then,
  \begin{itemize}
    \item $\lambda_{\max}(\Lambda) \leq (1+\delta)^3\lambda_{\max}(A) =
      (1+\delta)^3\norm{A}$, and
    \item
      $\lambda_{\min}(\Lambda) \geq \frac{(1 -
        \delta\kappa(A))\lambda_{\min}(A)}{(1 + \delta)^2}.$
    \end{itemize}
    Thus,
    $\kappa(\Lambda) \leq (1+\delta)^5\frac{\kappa(A)}{1 - \delta\kappa(A)}$.
\end{claim}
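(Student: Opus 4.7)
The plan is to argue via a ``sandwich'' between $A$ and $\Lambda$ through the intermediate matrix $B \defeq Q\Lambda Q^T$. First I would use the oracle guarantee \eqref{eq-oracle-guarantee} to write $\norm{A - B} \leq \delta \norm{A}$. Applying Weyl's inequality (\cref{item-lin-fact-weyl}) then immediately gives both $\lambda_{\max}(B) \leq \lambda_{\max}(A) + \delta\norm{A} = (1+\delta)\lambda_{\max}(A)$ and $\lambda_{\min}(B) \geq \lambda_{\min}(A) - \delta\norm{A} = (1 - \delta \kappa(A))\lambda_{\min}(A)$. Note that the hypothesis $\delta < 1/\kappa(A)$ ensures this lower bound is strictly positive, so $B$ is positive definite.

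Next, since the singular values of $Q$ all lie in $[1-\delta/3, 1+\delta/3]$, the matrix $Q$ is invertible with $\norm{Q} \leq 1 + \delta/3$ and $\norm{Q^{-1}} \leq (1-\delta/3)^{-1}$. Writing $\Lambda = Q^{-1} B Q^{-T}$, I can pass the above eigenvalue bounds on $B$ to $\Lambda$. For the maximum eigenvalue,
\begin{equation*}
  \lambda_{\max}(\Lambda) = \norm{\Lambda} \leq \norm{Q^{-1}}^2 \norm{B} \leq \frac{(1+\delta)\lambda_{\max}(A)}{(1-\delta/3)^2}.
\end{equation*}
A short computation shows $(1-\delta/3)(1+\delta) = 1 + 2\delta/3 - \delta^2/3 \geq 1$ whenever $\delta \leq 2$, hence $(1-\delta/3)^{-2} \leq (1+\delta)^2$, yielding the claimed bound $\lambda_{\max}(\Lambda) \leq (1+\delta)^3 \lambda_{\max}(A)$. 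For the minimum eigenvalue, the analogous identity $\Lambda^{-1} = Q^{T} B^{-1} Q$ gives
\begin{equation*}
  \lambda_{\min}(\Lambda)^{-1} = \norm{\Lambda^{-1}} \leq \norm{Q}^2 \norm{B^{-1}} \leq \frac{(1+\delta/3)^2}{(1 - \delta\kappa(A))\lambda_{\min}(A)} \leq \frac{(1+\delta)^2}{(1 - \delta\kappa(A))\lambda_{\min}(A)},
\end{equation*}
which rearranges to the stated lower bound on $\lambda_{\min}(\Lambda)$.

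Finally, the condition number bound follows by dividing the two bounds above:
\begin{equation*}
  \kappa(\Lambda) = \frac{\lambda_{\max}(\Lambda)}{\lambda_{\min}(\Lambda)} \leq \frac{(1+\delta)^3 \lambda_{\max}(A) \cdot (1+\delta)^2}{(1 - \delta \kappa(A))\lambda_{\min}(A)} = (1+\delta)^5 \frac{\kappa(A)}{1 - \delta\kappa(A)},
\end{equation*}
as required. There is no real obstacle here: the only delicate step is verifying the inequality $(1-\delta/3)^{-2} \leq (1+\delta)^2$ used to convert the natural constant $(1-\delta/3)^{-2}$ coming from $\norm{Q^{-1}}^2$ into the cleaner $(1+\delta)^2$ appearing in the statement, and this is elementary for $\delta$ small.
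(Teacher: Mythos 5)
Your proof is correct and follows essentially the same route as the paper's: Weyl's inequality applied to $A$ and $Q\Lambda Q^T$, followed by the singular-value bounds on $Q$ to transfer the extreme-eigenvalue estimates to $\Lambda$. The only (harmless) differences are that you pass from $Q\Lambda Q^T$ to $\Lambda$ via $\Lambda = Q^{-1}(Q\Lambda Q^T)Q^{-T}$ and norm submultiplicativity rather than the paper's direct congruence inequalities such as $(1-\delta/3)^2\lambda_{\max}(\Lambda) \leq \lambda_{\max}(Q\Lambda Q^T)$, and that you verify $(1-\delta/3)^{-2} \leq (1+\delta)^2$ explicitly where the paper simply invokes ``$\delta$ small enough.''
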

\begin{proof}
  Using \Cref{item-lin-fact-weyl},
  $ \lambda_{\max}(Q\Lambda Q^T) - \lambda_{\max}(A) \leq \Vert A - Q\Lambda Q^T
  \Vert \leq \delta\Vert A \Vert = \delta \lambda_{\max}(A)$. Note that
  $(1-\delta/3)^2 \lambda_{\max}(\Lambda) \leq \lambda_{\max}(Q\Lambda
  Q^T)$. The first item thus holds when $\delta > 0$ is small.  For the second
  item, by \Cref{item-lin-fact-weyl},
  $\lambda_{\min}(A) - \lambda_{\min}(Q\Lambda Q^T) \leq \Vert A - Q\Lambda Q^T
  \Vert \leq \delta\Vert A \Vert = \delta \lambda_{\max}(A)$. Note that
  $(1+\delta/3)^2\lambda_{\min}(\Lambda) \geq \lambda_{\min}(Q\Lambda
  Q^T)$. Putting the two inequalities together proves the second item.
\end{proof}

\begin{proof}[Proof of \Cref{thm:error}]
  Let $X \sim \gaussian(\vct{\mu}_1,\Sigma_1)$ and
  $Y \sim \gaussian(\vct{\mu}_2,\Sigma_2)$.  We use the approximate
  diagonalization oracle on $\Sigma_2$ with error bound
  $\delta = \poly{\frac{\epsilon\Delta}{n\kappa}}$ chosen small enough, to
  compute $ \tilde{Q}_2$ and $\tilde{\Lambda}_2^{1/2}$ with the guarantees of
  \cref{eq-oracle-guarantee}.  Define $\tilde{A} \defeq \tilde{Q}_2\tilde{\Lambda}_2^{-1/2}\tilde{Q}_2^T$;
  note that $\tilde{A}$ is symmetric.  Consider the random variable
  $\bar{X} = \tilde{A}X-\tilde{A}\vct{\mu}_2$, which follows the law of
  $\gaussian(\tilde{A}(\vct{\mu}_1 - \vct{\mu}_2),\tilde{A} \Sigma_1
  \tilde{A}^T)$, and the random variable
  $\bar{Y}' = \tilde{A}Y-\tilde{A}\vct{\mu}_2$, which follows the law of
  $\gaussian(0,\tilde{A} \Sigma_2 \tilde{A}^T)$. Since $\tilde{A}$ is
  invertible, by \Cref{prop-bijection}, we have
\begin{align}\label{eq-t1}
    D = \dtv(X,Y) =  \dtv \inp{\bar{X},\bar{Y}'}.
\end{align}
However, to mimic the proof of \Cref{lem:product}, we replace $\bar{Y'}$ with
$\bar{Y} \sim \gaussian(0,I_n)$.  Note that $\bar{Y}'$ may not be a standard
Gaussian because we use only an approximate diagonalization oracle.  We thus
need to bound the total variation distance between $\bar{Y}$ and
$\bar{Y'}$. Using \Cref{fact:TV},
\begin{equation}
  \dtv\inp{\bar{Y},\bar{Y}'} \leq \frac{1}{2}\sqrt{\mathrm{tr}(\tilde{A} \Sigma_2 \tilde{A}^T) - n  + \ln{\mathrm{det}(\tilde{A} \Sigma_2 \tilde{A}^T)}}.   \label{eq:34}
\end{equation}

Using the invariance of trace under cyclic shifts followed by the guarantee of
the approximation diagonalization oracle along with \cref{item-lin-trace} above,
we get (we also use that $\delta$ is small enough so that
$(1 + \delta)^k \leq 1.1$ for $k \leq 10$).
\begin{align}
  &\tr(\tilde{A}\Sigma_2\tilde{A}^{T}) \\
  &= \tr(\tilde{A}^{T}\tilde{A}\Sigma_2)\\
& \leq \tr(\tilde{A}^{T}\tilde{A}\tilde{Q}_2\tilde{\Lambda}_2\tilde{Q}_2^{T}) + n\delta \norm{\Sigma_2} \norm{\tilde{A}}^2\label{eq-tri}\\
  &\leq \tr(\tilde{A}^{T}\tilde{A}\tilde{Q}_2\tilde{\Lambda}_2\tilde{Q}_2^{T}) + n\delta(1+\delta)^4\kappa(\Sigma_2)\frac{\lambda_{\min}(\Sigma_2)}{\lambda_{\min}(\tilde{\Lambda}_2)}\\
  &\leq \tr(\tilde{Q}_2^{T}\tilde{A}^{T}\tilde{A}\tilde{Q}_2\tilde{\Lambda}_2) 
    + \frac{2n\delta\kappa}{1 - \delta\kappa}.\label{eq:35}
\end{align}
Here, in the first inequality, we use $ \tr(\tilde{A}^{T}\tilde{A}\Sigma_2) \leq \tr(\tilde{A}^{T}\tilde{A}\tilde{Q}_2\tilde{\Lambda}_2\tilde{Q}_2^{T}) + |\tr(\tilde{A}^{T}\tilde{A}(\Sigma_2 -\tilde{Q}_2\tilde{\Lambda}_2\tilde{Q}_2^{T}) )|$, where we bound the second term by \Cref{item-lin-trace}.
In the last inequality, we use \cref{clm-estimate-lambda} along with the
approximate diagonalization oracle's guarantees.  Now, to estimate the first
term, we expand $\tilde{A} = \tilde{Q}_2\tilde{\Lambda}_2^{-1/2}\tilde{Q}_2^T$,
and note that $\norm{\tilde{Q}_2^T\tilde{Q}_2 - I} \leq \delta$.  Repeatedly
using this last estimate along with \cref{item-lin-trace} above then gives
\begin{align}
  \tr(\tilde{A}\Sigma_2\tilde{A}_2^{T})  &\leq \tr{I} +  \frac{7n\delta\lambda_{\max}(\tilde{\Lambda}_2)}{\lambda_{\min}(\tilde{\Lambda}_2)} + \frac{2n\delta\kappa}{1 - \delta\kappa}\nonumber \\ 
&\leq n + \frac{10n\delta\kappa}{1 - \delta\kappa},
  \label{eq:33}
\end{align}
where we again use \cref{clm-estimate-lambda} along with the approximate
diagonalization oracle's guarantee in \cref{eq-oracle-guarantee} to get the last
inequality.

Next, we need to bound the determinant of $\tilde{A} \Sigma_2 \tilde{A}^T$,
where $\tilde{A} = \tilde{Q}_2\tilde{\Lambda}_2^{-1/2}\tilde{Q}_2^T$. Let
$\lambda_i$ for $1 \leq i\leq n$ denote the eigenvalues of $\Sigma_2$ arranged
in descending order. Recall
that $\kappa(\Sigma_2) \leq \kappa$. We have
\begin{align}
\det(\tilde{A} \Sigma_2 \tilde{A}^T) &= \det(\tilde{A})^2\det(\Sigma_2)\nonumber \\
&= \det(\tilde{Q}_2)^6\frac{\det(\Sigma_2)}{\det(\tilde{Q}_2\tilde{\Lambda}_2\tilde{Q}_2^T)}\nonumber \\
&\overset{(\ast)}{\leq} (1+\delta/3)^{6n}\prod_{i=1}^n \frac{\lambda_i}{\lambda_i - \delta \lambda_1}\nonumber \\ 
&\leq \exp \left( \frac{3n\delta\kappa}{1-\delta \kappa}\right),
\label{eq:28}
\end{align}
where $(\ast)$ holds because of Weyl's inequality in \Cref{item-lin-fact-weyl} (with $A = \Sigma_2$ and $B =\tilde{Q}_2\tilde{\Lambda}_2\tilde{Q}_2^T $) and the guarantees from the
oracle.  As our $\delta = \poly{\frac{\epsilon\Delta}{n\kappa}}$ is chosen small
enough, we thus get (by substituting \cref{eq:28,eq:33} into \cref{eq:34})
\begin{align}\label{eq-t2}
  \dtv\inp{\bar{Y},\bar{Y}'} \leq \frac{1}{2}\sqrt{ \frac{13n \delta\kappa}{1-\delta \kappa}}  
  \leq \frac{\epsilon \Delta}{200} \leq \frac{\epsilon D}{200}.
\end{align}

The final step in the proof of \Cref{lem:product} is to transform $\bar{X}$ into
a product distribution. Let $\bar{\Sigma}_1=\tilde{A}\Sigma_1\tilde{A}^T$ be the
covariance matrix of $\bar{X}$.  We use the approximate oracle on
$\bar{\Sigma}_1$ with a small error bound
$\delta = \poly{\frac{\epsilon\Delta}{n\kappa}}$ to obtain $\tilde{Q}_1$ and
$\tilde{\Lambda}_1^{1/2}$ satisfying the guarantees of
\cref{eq-oracle-guarantee}: in particular, we have
$\norm{\bar{\Sigma}_1 - \tilde{Q}_1\tilde{\Lambda}_1\tilde{Q}_1^T} \leq \delta
\norm{\bar{\Sigma}_1}$.
We then let $\hat{X}' = \tilde{Q}_1^T \bar{X}$ and
$\hat{Y}' = \tilde{Q}_1^T \bar{Y}$. Then, as $\delta$ is small enough,
$\tilde{Q}_1$ is invertible so that
\begin{align}\label{eq-t3}
   \dtv(\hat{X}', \hat{Y}') = \dtv(\bar{X},\bar{Y}). 
\end{align}

Define first $\hat{Y}$ to have the same distribution as
$\bar{Y} \sim \gaussian(0, I)$.  Note that
$\hat{Y}' \sim \gaussian(0, \tilde{Q}_1^T\tilde{Q}_1)$.  The guarantees from the
approximate diagonalization oracle yield that $\norm{\tilde{Q}_1^T\tilde{Q}_1 - I } \leq \delta$.
Thus, \cref{fact:TV} yields that
\begin{align}
  \dtv(\hat{Y}, \hat{Y}') &\leq \frac{1}{2}\sqrt{\tr(\tilde{Q}_1^T\tilde{Q}_1) -
    n + \log\det(\tilde{Q}_1^T\tilde{Q}_1)}\nonumber \\
& \leq \sqrt{n\delta} \nonumber \\
& \leq \frac{\epsilon \Delta}{200} \leq \frac{\epsilon D}{200},
\label{eq:32}
  \end{align}
where the last two inequalities come from the small choice of $\delta$.

We now consider $\hat{X}'$.
The covariance matrix of $\hat{X}'$ is $\Sigma' = \tilde{Q}_1^T \bar{\Sigma}_1 \tilde{Q}_1$ and the mean vector of  $\hat{X}'$ is $\tilde{Q}_1^{T}\tilde{A}(\vct{\mu}_1 - \vct{\mu}_2)$.
Since $\tilde{Q}_1$ is obtained using the approximate diagonalization oracle, $\Sigma'$ may not be diagonal.
Define a product Gaussian $\hat{X}$ with the same mean vector $ \tilde{Q}_1^{T}\tilde{A}(\vct{\mu}_1 - \vct{\mu}_2)$ but a diagonal covariance matrix $ \tilde{\Lambda}_1$. By \Cref{fact:TV},
\begin{equation}
\label{eq:29}
\dtv (\hat{X}', \hat{X}) \leq  \frac{1}{2}\sqrt{\mathrm{tr}(\Sigma'\tilde{\Lambda}_1^{-1}) - n  + \ln\frac{\det(\Sigma')}{\det(\tilde{\Lambda}_1)}}.   
\end{equation}
In a manner analogous to \cref{eq-tri,eq:35,eq:33} above, we then get that
\begin{equation}
\label{eq:27}
\mathrm{tr}(\Sigma'\tilde{\Lambda}_1^{-1}) =
\tr(\tilde{Q}_1^{T}\bar{\Sigma}_1\tilde{Q}_1\tilde{\Lambda}^{-1}_1) \leq n +
\frac{10n\delta\kappa(\bar{\Sigma}_1)}{1 - \delta\kappa(\bar{\Sigma}_1)}, 
\end{equation}
provided $\delta < 1/\kappa(\bar{\Sigma}_1)$ is small enough.  We thus need to bound
the condition number of $\bar{\Sigma}_1=\tilde{A}\Sigma_1\tilde{A}^T$. 
Note that by the small enough choice of $\delta$, all of the matrices
$\Sigma_1,\bar{\Sigma}_1$, and $\tilde{A}$ are positive definite. We then have
\begin{align*}
    \lambda_{\max}(\bar{\Sigma}_1) &= \Vert \bar{\Sigma}_1 \Vert\\ &\leq \Vert
                                     \tilde{A} \Vert^2 \cdot \Vert \Sigma_1
                                     \Vert \\ &\leq
                                     (1+\delta)^4\lambda_{\min}^{-1}(\tilde{\Lambda}_2)
                                     \lambda_{\max}(\Sigma_1)\text{, and}\\
    \lambda_{\min}^{-1}(\bar{\Sigma}_1) &= \Vert \bar{\Sigma}_1 ^{-1} \Vert \\& \leq
                                          \Vert \tilde{A}^{-1} \Vert^{2} \cdot
                                          \Vert \Sigma_1^{-1} \Vert\\ &\leq
                                          (1+\delta)^4 \lambda_{\max}(\tilde{\Lambda}_2)\lambda_{\min}^{-1}(\Sigma_1).
\end{align*}
Hence,
$\kappa(\bar{\Sigma}_1) \leq (1+\delta)^8\kappa(\tilde{\Lambda}_2)
\kappa(\Sigma_1)$. By assumption, $\kappa(\Sigma_1) \leq \kappa$.  By
\cref{clm-estimate-lambda} applied with $A$ in the statement of the claim set to
$\Sigma_2$ and $\Lambda$ to $\tilde{\Lambda}_2$, we thus get that (for $\delta$
small enough) $\kappa(\bar{\Sigma}_1) \leq \frac{4 \kappa^2}{1-\delta\kappa}$
(since $\kappa(\Sigma_1)$ and $\kappa(\Sigma_2)$ are both at most $\kappa$).
Substituting this bound in \cref{eq:27}, we get (again for
$\delta = \poly{\frac{\epsilon\Delta}{n\kappa}}$ small enough)
\begin{equation}
\label{eq:30}
\tr(\Sigma'\tilde{\Lambda}_1^{-1}) \leq n +  200n\delta\kappa^2.
\end{equation}

Next, we need to bound the ratio of determinants, which is (by a calculation
similar to \cref{eq:28}, and using the estimate on $\kappa(\bar{\Sigma}_1)$
above)
\begin{align}
\frac{\det(\Sigma')}{\det(\tilde{\Lambda}_1)} 
&= \det(\tilde{Q}_1)^4\frac{\det(\bar{\Sigma}_1)}{\det(\tilde{Q}_1\tilde{\Lambda}_1\tilde{Q}_1^T)}\nonumber \\
&\leq \exp \left( \frac{3n\delta \kappa(\bar{\Sigma}_1)}{1 -
    \delta\kappa(\bar{\Sigma}_1)}\right) \nonumber\\ &\leq \exp(24n\delta\kappa^2).
\label{eq:31}
\end{align}
We now substitute the bounds in \cref{eq:30,eq:31} into the bound in
\cref{eq:29} for the total variation distance between $\hat{X}$ and $\hat{X}'$.
Since $\delta = \poly{\frac{\epsilon\Delta}{n\kappa}}$ is small enough, the
total variation distance between $\hat{X}$ and $\hat{X}'$ is thus bounded as
\begin{align}\label{eq-t4}
  \dtv\inp{\hat{X},\hat{X}'} \leq \frac{1}{2}\sqrt{224n\delta\kappa^2}
  \leq \frac{\epsilon \Delta}{200} \leq \frac{\epsilon D}{200}.  
\end{align}

Let $\vct{\mu} = \tilde{Q}_1^{T}\tilde{A}(\vct{\mu}_1 - \vct{\mu}_2)$ and
\(\Sigma = \tilde{\Lambda}_1\) be the mean vector and covariance matrix of the product
Gaussian $\hat{X}$, as computed above. Recall that
$\hat{Y} \sim \gaussian(0,I_n)$. By triangle inequality,
\begin{align*}
    &\dtv(\hat{X},\hat{Y})\\
    &\leq  \dtv\inp{\hat{X},\hat{X}'} +  \dtv\inp{\hat{X}',\hat{Y}'}
     + \dtv\inp{\hat{Y}',\hat{Y}}
\tag{\text{by triangle inequality}}\\
    &\leq  \frac{\epsilon D}{100} +  \dtv(\bar{X},\bar{Y})\tag{\text{by~\cref{eq-t3,eq-t4,eq:32}}}\\
    &\leq \frac{\epsilon D}{100} +  \dtv(\bar{X},\bar{Y}') + \dtv(\bar{Y},\bar{Y}') \tag{\text{by triangle inequality}}\\
    &\leq \frac{\epsilon D}{50} + D.\tag{\text{by~\cref{eq-t1,eq-t2}}}
\end{align*}
A similar triangle inequality calculation starting with
$D = \dtv(X, Y) = \dtv(\bar{X}, \bar{Y}')$ (see \cref{eq-t1}) shows that
\begin{align*}
  \dtv(\hat{X},\hat{Y}) \geq D - \frac{\epsilon D}{50},
\end{align*}
which completes the proof.  The running time of the algorithm is almost the same
as the running time of the algorithm in \Cref{lem:product} except that the two
algorithms use different oracles.
\end{proof}

Finally, we give a lower bound $\Delta$ if all inputs are given as rational
numbers.  Following the notation of \cite{DeyKRS23},
let $\mathbb{Z}\llangle a \rrangle$ denote all integers of bit length at most
$a$, and 
let $\mathbb{Q}\llangle a \rrangle$ denote all rational numbers $p/q$ with $p,q \in \mathbb{Z}\llangle a \rrangle$.
\begin{lemma}\label{lem:TVlower}
  Suppose that $\vct{\mu}_1,\vct{\mu}_2 \in \mathbb{Q}\llangle a\rrangle^n$ and
  $\Sigma_1,\Sigma_2 \in \mathbb{Q}\llangle a\rrangle^{n \times n}$, where
  $a = \poly{n}$.  Assume that either \(\vec{\mu_1} \neq \vec{\mu_2}\) or else
  \(\Sigma_1 \neq \Sigma_2\), and let
  \(D = \dtv \inp {\gaussian(\vct{\mu}_1,\Sigma_1),\gaussian(\vct{\mu}_2,\Sigma_2)} \).  Then,
  $D \geq \Delta = 2^{-\poly{n}}$.
\end{lemma}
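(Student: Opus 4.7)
The plan is to chain the exact reduction of \Cref{lem:product} with the one-dimensional lower bound of \Cref{lem:lower}, using a bit-complexity argument to control the resulting quantities. First, I handle degeneracy: if $\Sigma_1$ and $\Sigma_2$ have different ranks, or equal rank $r$ but different affine ranges, then as argued at the start of \Cref{sec:alg} we have $D = 1 \geq 2^{-\mathrm{poly}(n)}$ trivially; otherwise, after the orthogonal projection onto the common range described there, we may assume without loss of generality that $\Sigma_1, \Sigma_2$ are positive definite and that all input entries remain in $\mathbb{Q}\llangle \mathrm{poly}(n)\rrangle$.

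Next I apply an exact (conceptual, not algorithmic) version of the reduction from \Cref{lem:product}; by \Cref{prop-bijection} this preserves TV distance, so we may assume $D = \dtv(\gaussian(\vct{\mu}, \Sigma), \gaussian(\vct{0}, I_n))$ with $\Sigma$ a positive diagonal matrix. The diagonal entries $\Sigma_{i,i}$ are exactly the generalized eigenvalues of the pencil $(\Sigma_1, \Sigma_2)$, i.e., the roots of $p(\lambda) \defeq \det(\Sigma_1 - \lambda \Sigma_2)$, while the transformed mean vector satisfies $\|\vct{\mu}\|_2^2 = (\vct{\mu_1} - \vct{\mu_2})^T \Sigma_2^{-1}(\vct{\mu_1} - \vct{\mu_2})$. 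By \Cref{lem:lower} it then suffices to exhibit some index $i$ with $|\Sigma_{i,i} - 1| \geq 2^{-\mathrm{poly}(n)}$ or with $|\mu_i| \geq 2^{-\mathrm{poly}(n)}$.

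There are two non-degenerate sub-cases. If $\Sigma_1 \neq \Sigma_2$, then since $\Sigma_2^{-1/2}\Sigma_1\Sigma_2^{-1/2}$ is symmetric positive definite and not equal to $I_n$, at least one $\Sigma_{i,i} \neq 1$. But $p$ is a degree-$n$ polynomial in $\lambda$ whose coefficients, being signed sums of products of at most $n$ entries of $\Sigma_1$ and $\Sigma_2$, lie in $\mathbb{Q}\llangle \mathrm{poly}(n)\rrangle$; a standard Mignotte-style root separation bound (see, e.g., the treatment in \citet{DeyKRS23}) then forces $|\Sigma_{i,i} - 1| \geq 2^{-\mathrm{poly}(n)}$. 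If instead $\Sigma_1 = \Sigma_2$ but $\vct{\mu_1} \neq \vct{\mu_2}$, then $\|\vct{\mu}\|_2^2 = (\vct{\mu_1} - \vct{\mu_2})^T \Sigma_2^{-1}(\vct{\mu_1} - \vct{\mu_2})$ is a strictly positive rational in $\mathbb{Q}\llangle \mathrm{poly}(n)\rrangle$ (using Cramer's rule to express $\Sigma_2^{-1}$ as a ratio of cofactors over $\det \Sigma_2$), hence at least $2^{-\mathrm{poly}(n)}$, and therefore some $|\mu_i| \geq 2^{-\mathrm{poly}(n)}/\sqrt{n} \geq 2^{-\mathrm{poly}(n)}$. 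In both cases, $D \geq \Delta \geq 2^{-\mathrm{poly}(n)}$, completing the proof.

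The main obstacle is the root-separation step in the first sub-case: one must bound the bit-complexity of the coefficients of $p(\lambda)$ in terms of the input size $a = \mathrm{poly}(n)$ and then invoke a Mignotte-type estimate for the distance from $1$ to the nearest nonzero-shifted root of $p$. All the ingredients are standard, but tracking the polynomial dependence on $n$ in the separation bound requires some care; fortunately, \citet{DeyKRS23} provide ready-made bounds on eigenvalue separation and perturbation for rational symmetric matrices that apply essentially directly here.
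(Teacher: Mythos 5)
Your proposal is correct, but it takes a genuinely different route from the paper. The paper never diagonalizes: if some coordinate of the means or some diagonal entry of the covariances differs, it compares the one-dimensional marginals $X_i, Y_i$ directly via the two-sided bound of \citet[Theorem 1.3]{DMR23} (\cref{lem:lower-one}); if only an off-diagonal entry differs, it projects along $e_i + e_j$ (using \cref{prop-bijection}) and again compares two one-dimensional Gaussians whose parameters are explicit rationals built from the input entries. Every quantity appearing is then a nonzero rational of bit length $\mathrm{poly}(n)$, so the lower bound $2^{-\mathrm{poly}(n)}$ is immediate, with no eigenvalue or root-separation machinery. You instead push the instance through the exact diagonalizing reduction of \cref{lem:product}, identify the diagonal entries with the generalized eigenvalues of the pencil $(\Sigma_1,\Sigma_2)$ and $\norm{\vct{\mu}}_2^2$ with $(\vct{\mu}_1-\vct{\mu}_2)^T\Sigma_2^{-1}(\vct{\mu}_1-\vct{\mu}_2)$, and then invoke \cref{lem:lower} together with a root lower bound for the shifted polynomial $\det(\Sigma_1-\lambda\Sigma_2)$; this works (the needed bound is just a Cauchy-type lower bound on the smallest nonzero root of an integer polynomial with $2^{\mathrm{poly}(n)}$-size coefficients, after clearing denominators of the shifted polynomial), and is exactly the kind of argument the references you cite support, but it is heavier than necessary precisely because the eigenvalues are in general irrational, whereas the paper's rank-one projections keep everything rational. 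Your version does buy a little extra: it also treats singular covariances explicitly and phrases the bound in the diagonalized coordinates where \cref{lem:lower} is used elsewhere. One immaterial slip: from $\norm{\vct{\mu}}_2^2 \geq 2^{-\mathrm{poly}(n)}$ you get some $\abs{\mu_i} \geq 2^{-\mathrm{poly}(n)/2}/\sqrt{n}$ (note the square root), which is still $2^{-\mathrm{poly}(n)}$, so the conclusion stands.
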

\begin{proof}
  Let $X \sim \gaussian(\vct{\mu}_1, \Sigma_1)$ and
  $Y \sim \gaussian(\vct{\mu}_2, \Sigma_2)$.  Note that if \(\mu_1 = \mu_2\) and
  \(\Sigma_{1} = \Sigma_2\) then \(D = 0\).

Consider the first case that there exists $1\leq i \leq n$ such that
$\vct{\mu}_1(i) \neq \vct{\mu}_2(i)$ or $\Sigma_1(i,i) \neq \Sigma_2(i,i)$. The
TV-distance between $X,Y$ can be lower bounded by the TV-distance between $X_i$
and $Y_i$. Using the lower bound of \citet[Theorem 1.3; see
\cref{lem:lower-one}]{DMR23}, we have the following lower bound
\begin{align*}
    D &\geq \dtv\inp{X_i,Y_i}\\ &\geq \frac{1}{200}\min \left\{1, \max\left\{ \frac{|\Sigma_2(i,i)-\Sigma_1(i,i)|}{\Sigma_1(i,i)}, \frac{40 |\mu_1(i) - \mu_2(i)|}{\sqrt{\Sigma_1(i,i)}} \right\}\right\}\\ &\eqdef \Delta_1 \geq 2^{-\poly{n}},
\end{align*}
because each rational number is in $\mathbb{Q}\llangle a \rrangle$ with $a = \poly{n}$. 

Assume the first case does not hold, which implies $\vct{\mu}_1(i) = \vct{\mu}_2(i)$ and $\Sigma_1(i,i) = \Sigma_2(i,i)$ for all $1\leq i \leq n$.
There must exist $i\neq j$ such that $\Sigma_1(i,j) \neq \Sigma_2(i,j)$. Define
the vector $a \in \mathbb{R}^n$ such that $a_i=1,a_j = 1$ and $a_k = 0$ for all
$k \neq i,j$. By \cref{prop-bijection} , $D \geq \dtv \inp{a^TX, a^TY}$, where $a^T X \sim
\gaussian(\vct{\mu}_1(i)+\vct{\mu}_1(j), \Sigma_1(i,i)+2\Sigma_1(i,j)+\Sigma_1(j,j))$ and $a^T Y \sim
\gaussian(\vct{\mu}_2(i)+\vct{\mu}_2(j), \Sigma_2(i,i)+2\Sigma_2(i,j)+\Sigma_2(j,j))$. The same
proof as above shows that $D \geq \dtv \inp{a^TX, a^TY} \geq \Delta_2 = 2^{-\poly{n}}$.

Let $\Delta = \Delta_1$ for the first case and $\Delta = \Delta_2$ for the second case. 
\end{proof}

In passing, we note that several fundamental problems regarding the
computational complexity of diagonalization and related procedures remain open.
We refer to the recent survey by \cite{Sri23} for a discussion and to the papers
by \cite{ABBCS18}, \cite{BGKS23}, \cite{DeyKRS23}, and
\cite{shah_fast_2024,shahHermitianDiagonalizationLinear2025} for recent progress
and further references: this list of references is by no means meant to be
exhaustive.

\section{Proofs omitted from Section~\ref{sec:tv-dist-estim}}
\label{sec:proofs-omitted-from}

In this section, we often use the definition and standard properties of
conditional expectation, which may be found, e.g., in \citet[Sections 9.2 and
9.7]{Wil91}.

The following alternative expressions for the total variation distance are well
known, but we include the proofs for completeness.
\begin{fact}
  Let $P$ and $Q$ be probability measures on a measurable space $(\Omega,
  \F)$. Let the decomposition $P_Q, P_Q^{\perp}$ of $P$ and the random variable
  $R$ be as in \cref{def:radon-niko}.  Then, we have
  \begin{align}
    \dtv(P, Q) 
    &= \Ex_Q[(1-R)_+] \label{eq:tv-small}\\
    & = \Ex_Q[(R-1)_+] + P_Q^{\perp}(\Omega) \label{eq:tv-large}\\
    & =
      \frac{1}{2} \Ex_Q[\abs{1- R}] + \frac{1}{2}P_Q^{\perp}(\Omega).\label{eq:tv-both}
  \end{align}
  In particular, if $P \ll Q$ then
  $\dtv(P, Q) = \Ex_Q[(1-R)_+] = \Ex_Q[(R-1)_+] =
  \frac{1}{2}\Ex_Q[\abs{1-R}].$ (Here, we use the standard notation
  $x_{+} \defeq \max\inb{x, 0}$ for any real number $x$, so that
  $\abs{x} = x_+ + (-x)_+$.)
\end{fact}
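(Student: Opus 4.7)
The plan is to start directly from the definition $\dtv(P,Q) = \sup_{A \in \F} |P(A) - Q(A)|$ and to explicitly identify the maximizing set using the Lebesgue decomposition. By \cref{def:radon-niko}, there exists an event $B \in \F$ such that $P_Q^{\perp}(B^c) = 0$ and $Q(B) = 0$. Thus $P_Q^{\perp}$ is entirely concentrated on $B$ while $Q$ (and also $P_Q$, since $P_Q \ll Q$) is concentrated on $B^c$. In particular, $\Ex_Q[R \cdot I_B] = 0$, which combined with $\Ex_Q[R] = P_Q(\Omega) = 1 - P_Q^{\perp}(\Omega)$ will be used several times.

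For any measurable $A$, I would decompose
\[ P(A) - Q(A) = \Ex_Q[R \cdot I_A] + P_Q^{\perp}(A) - \Ex_Q[I_A] = \Ex_Q[(R-1)I_A] + P_Q^{\perp}(A). \]
To maximize this quantity, $A$ should contain all of $B$ (giving the full mass $P_Q^{\perp}(\Omega)$) and, within $B^c$, exactly the set $\{R > 1\}$ (so that $(R-1)I_A$ is nonnegative). This choice yields $\sup_A (P(A) - Q(A)) = \Ex_Q[(R-1)_+] + P_Q^{\perp}(\Omega)$, proving \eqref{eq:tv-large}. Symmetrically, considering $Q(A) - P(A) = \Ex_Q[(1-R)I_A] - P_Q^{\perp}(A)$ and taking $A = B^c \cap \{R < 1\}$ gives $\sup_A (Q(A) - P(A)) = \Ex_Q[(1-R)_+]$; this establishes \eqref{eq:tv-small} once one observes that both suprema equal $\dtv(P,Q)$.

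To verify that \eqref{eq:tv-small} and \eqref{eq:tv-large} are indeed equal, I would use the identity $(1-R)_+ - (R-1)_+ = 1 - R$ together with $\Ex_Q[R] = 1 - P_Q^{\perp}(\Omega)$:
\[ \Ex_Q[(1-R)_+] - \Ex_Q[(R-1)_+] = \Ex_Q[1-R] = P_Q^{\perp}(\Omega). \]
Finally, \eqref{eq:tv-both} follows by averaging the two expressions, since $(1-R)_+ + (R-1)_+ = \abs{1-R}$, so
\[ 2\dtv(P,Q) = \Ex_Q[(1-R)_+] + \Ex_Q[(R-1)_+] + P_Q^{\perp}(\Omega) = \Ex_Q[\abs{1-R}] + P_Q^{\perp}(\Omega). \]
The specialization to the $P \ll Q$ case is immediate from $P_Q^{\perp} \equiv 0$.

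The only mildly delicate point is justifying that the optimizing set $A^* = B \cup (B^c \cap \{R > 1\})$ is actually measurable and genuinely attains the supremum — this amounts to checking that $\{R > 1\} \in \F$ (which follows from $\F$-measurability of $R$ in \cref{def:radon-niko}) and that $A^*$ beats every other candidate, which is clear because any deviation from $A^*$ on $B$ strictly loses $P_Q^{\perp}$ mass and any deviation on $B^c$ either includes a subset of $\{R \le 1\}$ (contributing non-positively to $\Ex_Q[(R-1)I_A]$) or excludes a subset of $\{R > 1\}$ (foregoing a non-negative contribution). No approximation argument is needed.
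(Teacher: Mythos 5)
Your proposal is correct and follows essentially the same route as the paper's proof: both exploit the mutual-singularity event $B$ from \cref{def:radon-niko} (you put $P_Q^{\perp}$ on $B$ and $Q$ on $B^c$, the paper uses the opposite labelling), split test sets according to $\{R<1\}$ versus $\{R\geq 1\}$, identify the optimal test set, and obtain \eqref{eq:tv-both} by averaging. The only cosmetic difference is that you verify the agreement of \eqref{eq:tv-small} and \eqref{eq:tv-large} via the identity $(1-R)_+-(R-1)_+=1-R$ and $\Ex_Q[R]=1-P_Q^{\perp}(\Omega)$, whereas the paper gets it by complementation of the optimal sets; both are fine.
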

\begin{proof}
  \Cref{eq:tv-both} follows by averaging \cref{eq:tv-small,eq:tv-large}, so we
  focus on proving the latter two equations.  Let $B \in \F$ be such that
  $Q(B^c) = 0$ and $P_Q^{\perp}(B) = 0$ (such a $B$ exists as $Q$ and
  $P_Q^{\perp}$ are mutually singular).  Denote by $C$ the subset
  $\inb{\omega \in \Omega \st R(\omega) \leq 1} \cap B$ of $B$ on which
  $R \leq 1$, and by $D$ the subset
  $\inb{\omega \in \Omega \st R(\omega) > 1} \cap B$ on which $R > 1$: both
  these sets are $\F$-measurable since $R$ is $\F$-measurable.  We then
  compute (using $Q(B^c) = P_Q^{\perp}(B) = 0$ and the decomposition in
  \cref{def:radon-niko}) that
  \begin{align}
    Q(C) - P(C) 
    &= \Ex_Q[(1-R)_+]\text{, and}\label{eq:2}\\
    P(D \cup B^{c}) - Q(D \cup B^c) 
    &= \Ex_Q[(R-1)_+] + P_Q^{\perp}(B^c) = \Ex_Q[(R-1)_+] + P_Q^{\perp}(\Omega).
  \end{align}
  Since $C^c = D \cup B^c$, this calculation also shows that the right hand
  sides of \cref{eq:tv-large,eq:tv-small} are equal.  We thus focus on proving
  \cref{eq:tv-small}.  To do this, it is sufficient to observe that for any
  $A \in \F$, we have $Q(A \cap C^c) - P(A \cap C^{c}) \leq 0$ and that
  $Q(A \cap C) - P(A \cap C) = \Ex_Q[(1-R)_+I_A]$. Adding these two equations
  and taking the supremum over all $A \in \F$ then shows that
  $\dtv(P, Q) \leq \Ex_Q[(1-R)_+]$, which combined with \cref{eq:2} completes
  the proof.
\end{proof}

\subsection{Proof of \texorpdfstring{\Cref{lem:tv-mon}}{Theorem Reference}}

\begin{proof}[Proof of \cref{lem:tv-mon}]
  \Cref{item:cR} follows from the definition of the TV functional since
  $x \mapsto (1-x)_+$ is a non-increasing function (and since $R \geq 0$).  For
  \cref{item:conditional}, it follows from standard properties of the
  conditional expectation~\citep[Section 9.7 (a, d)]{Wil91} that the conditional
  expectation of a valid ratio is also a valid ratio.  The first part of
  \cref{item:conditional} then follows from Jensen's inequality for conditional
  expectations~\citep[Section 9.7 (h)]{Wil91}, since the function
  $x \mapsto (1 - x)_+$ is convex:
  \begin{equation}
    \label{eq:5}
    \Ex[(1 - \Ex[R|\mathcal{G}])_+] \leq \Ex[\Ex[(1 - R)_+|\mathcal{G}]]
    = \Ex[(1 - R)_+].
  \end{equation}
  For the second part, consider the events
  $E_1 = \inb{R < 1 \text{ and } R_{\mathcal{G}} \geq 1}$ and
  $E_2 = \inb{R \geq 1 \text{ and } R_{\mathcal{G}} < 1}$.   We wish to show that
  $P(E_{1} \cup E_2) = 0$.  By the hypothesis for this item, both $E_1$ and $E_2$ are
  $\mathcal{G}$-measurable sets, so that (by the definition of the conditional
  expectation) the integral of the random variable $R - R_{\mathcal{G}}$ is zero
  on both $E_1$ and $E_2$.  However, this random variable is (by definition)
  strictly negative on $E_1$ and strictly positive on $E_2$.  It then follows by
  a standard argument (e.g., by considering probabilities of events of the form
  $E_{1} \cap \inb{R - R_{\mathcal{G}} < -1/n}$ for positive integers $n$ for
  the case of $E_1$) that $P(E_1) = P(E_2) = 0$, which implies the claim that
  the events $\inb{R < 1}$ and $\inb{R_{\mathcal{G}} < 1}$ are almost surely
  equivalent.  For the equality, we now compute using the above equivalence and
  the definition of the conditional expectation:
  \begin{equation*}
    \label{eq:9}
    TV(R) = \Ex[(1-R)\cdot I[R < 1]]
    = \Ex[(1-R)\cdot I[R_{\mathcal{G}} < 1]]
    = \Ex[(1-R_{\mathcal{G}})\cdot I[R_{\mathcal{G}} < 1]]
    = TV(R_{\mathcal{G}}). 
  \end{equation*}
  For the third part, we note that $\Ex[R \indep S|\sigma(R)] = \Ex[S]\cdot R$.
  Since $0 \leq \Ex[S] \leq 1$, we get the result by applying \cref{item:cR}
  followed by \cref{item:conditional}.
\end{proof}

\subsection{Proof of Remark in \texorpdfstring{\cref{def:discret}}{Theorem Reference}}

Towards the end of \cref{def:discret}, we remarked that for each interval
$J \in \cI_{\gamma,\delta}$, the events $\inb{R \in J}$ and
$\{\tilde{R} \in J\}$ are almost surely equivalent (i.e., the probability that
exactly one of them occurs is zero). Given the definition of $\tR$, this is
intuitively obvious, and the formal proof is very similar to the argument in the
proof of \cref{lem:tv-mon}.  We include this argument for completeness.

We show first that for any two distinct intervals
$J \neq J' \in \cI_{\gamma, \delta}$ the probability of the event
$\inb{R \in J, \tR \in J'}$ is zero.  Let $E$ denote this event.  By definition
of $\cI_{\gamma, \delta}$, any two distinct intervals in $\cI_{\gamma,\delta}$
are disjoint.  Thus, either all points in $J$ are smaller than all points in
$J'$, or else all points in $J'$ are smaller than all points in $J$.  Define
$Z = \tR - R$ in the former case and $Z = R - \tR$ in the latter case.  We then
have that $Z$ is \emph{strictly} positive when $E$ occurs.  In particular, if we
define $E_k \defeq E \cap \inb{Z > 1/k}$ for each positive integer $k$, then
$E = \bigcup_{k \geq 1}E_k$, so that if $\Pr{E_k} = 0$ for each positive integer
$k$ then $\Pr{E} = 0$ follows from the countable sub-additivity of
probabilities.  We now show that $\Pr{E_k} = 0$ for each positive integer $k$.

Note that by definition of the $\sigma$-field
$\mathcal{G} \defeq \sigma(\cI_{\gamma, \delta}(R))$, the event $\inb{R \in J}$
is $\mathcal{G}$-measurable (since $\inb{\R \in J}$ is by definition the same as
$\inb{\cI_{\gamma,\delta}(R) = J}$).  $\tR = \Ex[R | \mathcal{G}]$ is by
definition $\mathcal{G}$-measurable, and therefore so is the event
$\inb{\tR \in J'}$.  Thus, $E$ is also $\mathcal{G}$-measurable.  It thus
follows from the definition of conditional expectation that
$\Ex[R\cdot I[E]] = \Ex[\Ex[R|\mathcal{G}]\cdot I[E]] = \Ex[\tR\cdot I[E]]$.  We
thus get that $\Ex[Z\cdot I[E]] = 0$.  Thus, since $E_k \subseteq E$ and $Z$ is
strictly positive on $E$, we get that $\Ex[Z\cdot I[E_k]] = 0$ as well.  On the
other hand, when the event $E_k$ occurs, $Z > 1/k$.  Thus, we also have
$\Ex[Z\cdot I[E_k]] \geq \Pr{E_k}/k$.  We thus get that $\Pr{E_k} = 0$, for each
positive integer $k$.

We thus have $\Pr{R \in J, \tR \in J'} = 0$ whenever $J$ and $J'$
are distinct intervals in $\cI_{\gamma,\delta}$.  The claim in the remark now
follows since $R, \tR$ are non-negative so that for any interval
$J \in \cI_{\gamma,\delta}$,
\begin{align*}
  \Pr{R \in J, \tR \not\in J} 
  &= \sum_{\substack{J' \in \cI_{\gamma,\delta}\\J'
  \neq J}} \Pr{R \in J, \tR \in J'} = 0 \text{, and}\\
  \Pr{R \not\in J, \tR \in J} 
  &= \sum_{\substack{J' \in \cI_{\gamma,\delta}\\J'
  \neq J}} \Pr{\tR \in J, R \in J'} = 0.
\end{align*}

\subsection{Properties of the \texorpdfstring{$(\gamma,\delta)$}{Theorem Reference}-Discretization}

The following elementary property of $(\gamma,\delta)$-partitions turns out to
be very useful.
\begin{observation}\label{obv-elem-discrete}
  Suppose $x, y \in [0, 1]$ lie in the same partition in $\cI_{\gamma, \delta}$,
  i.e., $\cI_{\gamma,\delta}(x) = \cI_{\gamma,\delta}(y)$.  Then 
  \begin{equation}
    \label{eq:7}
    \abs{x - y} \leq
    \begin{cases}
      \gamma & \text{if $x, y \in I_1$.}\\
      (1 - x) \delta & \text{otherwise}.
    \end{cases}
  \end{equation}
\end{observation}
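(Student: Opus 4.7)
The plan is to carry out a direct case analysis based on which interval $I_k$ of the partition $\cI_{\gamma,\delta}$ contains the common discretization class of $x$ and $y$. Recalling the endpoints $a_0 = 1$, $a_k = 1 - (1+\delta)^{k-1}\gamma$ for $1 \leq k < m$, and $a_m = 0$, the interval $I_k = [a_k, a_{k-1})$ has length $a_{k-1} - a_k$, which is an immediate upper bound on $|x-y|$. The two cases called out in the statement correspond to $I_1$ (handled by the first bound) and everything else (handled by the second). The case $x, y \in I_0 = \{1\}$ is trivial since $x = y = 1$ forces $|x-y| = 0 = (1-x)\delta$.

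First I would dispose of $x, y \in I_1 = [1 - \gamma, 1)$: here $|x - y| < a_0 - a_1 = \gamma$, giving the first bound. For the case $x, y \in I_k$ with $2 \leq k \leq m - 1$, I would compute the length $a_{k-1} - a_k = (1+\delta)^{k-2}\gamma\delta$ and combine it with the estimate $1 - x \geq 1 - a_{k-1} = (1+\delta)^{k-2}\gamma$ (valid for every $x \in I_k$). Multiplying by $\delta$ gives $(1-x)\delta \geq (1+\delta)^{k-2}\gamma\delta \geq |x - y|$, which is the desired bound.

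The only slightly delicate case, and the real obstacle, is $x, y \in I_m = [0, a_{m-1})$, because here the interval runs all the way down to $0$ and its length $a_{m-1} = 1 - (1+\delta)^{m-2}\gamma$ is not obviously small compared to $(1-x)\delta$. The trick is to invoke the definition $m = 1 + \lceil \ln(1/\gamma)/\ln(1+\delta)\rceil$, which gives $(1+\delta)^{m-1}\gamma \geq 1$, equivalently $(1+\delta)^{m-2}\gamma(1+\delta) \geq 1$. Rearranging yields $a_{m-1} = 1 - (1+\delta)^{m-2}\gamma \leq (1+\delta)^{m-2}\gamma\,\delta$. On the other hand, for $x \in I_m$ we still have $1 - x > 1 - a_{m-1} = (1+\delta)^{m-2}\gamma$, so $(1-x)\delta > (1+\delta)^{m-2}\gamma\delta \geq a_{m-1} > |x - y|$, completing the proof. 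The four pieces together give the claimed bound, with the $m$-th interval being the only one whose bound depends critically on the definition of $m$.
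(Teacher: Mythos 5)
Your proof is correct and follows essentially the same route as the paper's: bound $|x-y|$ by the length of the common interval $I_k$ of the partition and compare it with $(1-x)\delta$ via $1-x \geq 1-a_{k-1}$. If anything, you are more careful than the paper on the final interval $I_m$: the paper tersely states that the length equals $(1-a_{k-1})\delta$ for all $k>1$, whereas for $k=m$ this is really an inequality that needs exactly the argument you give, namely that the definition of $m$ forces $(1+\delta)^{m-1}\gamma \geq 1$ and hence $a_{m-1} \leq (1+\delta)^{m-2}\gamma\,\delta$.
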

\begin{proof}
  We use the notation of \cref{def:partition}.  If $x, y \in I_0 = \inb{1}$,
  there is nothing to prove.  Otherwise, if $x, y \in I_k$ for $k \geq 1$, we
  have $\abs{x -y} \leq a_{k-1} - a_k$, which is equal to $\gamma$ when $k = 1$.
  Otherwise, when $k > 1$, it equals $(1- a_{k-1})\delta$, which is at most
  $(1 - x )\delta$.
\end{proof}

We record the following simple properties of the
$(\gamma,\delta)$-discretization.
\begin{observation}\label{obv:discrete}
  Let $\tR$ be the $(\gamma,\delta)$-discretization of a valid ratio $R$ defined
  on some probability space $(\Omega, \F, P)$.  Then
  \begin{enumerate}
  \item \label{item:discrete-low} $\Ex[\abs{R - \tR}I[R < 1]] \leq \gamma +
    \delta TV(R)$.
  \item \label{item:discrete-high}
    $\Ex\insq{\abs{\frac{1}{R} - \frac{1}{\tR}}RI[R \geq 1]} \leq \gamma +
    \delta TV(R)$.
  \end{enumerate}
\end{observation}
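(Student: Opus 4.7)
My plan is to prove the two parts in parallel, since they share a common structure once one spots a reciprocation identity connecting the intervals $J_k$ (where $R \ge 1$) with the intervals $I_k$ (where $R \le 1$). Throughout, I will use the remark at the end of \cref{def:discret}, which tells us that for each $I \in \cI_{\gamma,\delta}$, the events $\{R \in I\}$ and $\{\tR \in I\}$ are almost surely equivalent, together with \cref{obv-elem-discrete}, which gives the quantitative bound on $|x-y|$ when $x,y$ lie in the same piece of $\cI_{\gamma,\delta}$.

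Part (1) is the direct application. I would split
\[
\Ex[|R-\tR|\,I[R<1]] = \Ex[|R-\tR|\,I[R\in I_1]] + \sum_{k=2}^{m}\Ex[|R-\tR|\,I[R\in I_k]].
\]
On $\{R\in I_1\}$ we also have $\tR\in I_1$ a.s., so \cref{obv-elem-discrete} (first case) gives $|R-\tR|\le\gamma$, bounding the first term by $\gamma\Pr[R\in I_1]\le\gamma$. On $\{R\in I_k\}$ with $k\ge 2$ the second case of \cref{obv-elem-discrete} gives $|R-\tR|\le(1-R)\delta$, so the sum is at most $\delta\Ex[(1-R)\,I[R<1]] = \delta\,TV(R)$.

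The only genuinely new idea is needed for part (2), and it is the observation that reciprocation maps each $J_k$ into $I_k$. Indeed, for $1\le k<m$ the map $x\mapsto 1/x$ sends $J_k=(1/a_{k-1},1/a_k]$ bijectively onto $[a_k,a_{k-1})=I_k$, while for $k=m$ it sends $J_m=(1/a_{m-1},\infty)$ into $(0,a_{m-1})\subseteq[0,a_{m-1})=I_m$. Combined with the remark from \cref{def:discret} applied to $J_k$, this means that whenever $R\in J_k$ we have $\tR\in J_k$ a.s., and hence $1/R,1/\tR\in I_k$ a.s. So I can apply \cref{obv-elem-discrete} to the \emph{reciprocals} $1/R$ and $1/\tR$.

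Putting this together: the contribution from $\{R\in I_0\}$ vanishes since $R=\tR=1$ there. On $\{R\in J_1\}$ the reciprocals lie in $I_1$, so \cref{obv-elem-discrete} gives $|1/R-1/\tR|\le\gamma$, whence
\[
\Ex\!\left[|1/R-1/\tR|\,R\,I[R\in J_1]\right]\le\gamma\,\Ex[R\,I[R\in J_1]]\le\gamma\,\Ex[R]\le\gamma.
\]
On $\{R\in J_k\}$ for $k\ge 2$ the reciprocals lie in $I_k$ with $k\ge 2$, so \cref{obv-elem-discrete} gives $|1/R-1/\tR|\le(1-1/R)\delta$; multiplying by $R$ gives $|1/R-1/\tR|R\le(R-1)\delta$. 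Summing over $k\ge 2$ bounds this contribution by $\delta\Ex[(R-1)_+]$, and since $\Ex[R]\le 1$ we have $\Ex[(R-1)_+]\le\Ex[(1-R)_+]=TV(R)$, which gives the claimed bound $\gamma+\delta\,TV(R)$. The main obstacle is simply spotting the reciprocation identity $\{1/x:x\in J_k\}\subseteq I_k$; once it is identified, part (2) reduces mechanically to part (1) applied to $1/R$ and $1/\tR$.
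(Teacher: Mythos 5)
Your proposal is correct and follows essentially the same route as the paper: split according to the pieces of $\cI_{\gamma,\delta}$, use the almost-sure equivalence of $\inb{R \in I}$ and $\{\tR \in I\}$, and invoke the elementary length bound of \cref{obv-elem-discrete}, noting $\Ex[R]\leq 1$ and $\Ex[(R-1)_+]\leq TV(R)$ at the end. Your explicit observation that $x \mapsto 1/x$ maps each $J_k$ into $I_k$ is precisely what the paper's terse ``Similarly'' for the $J_k$ intervals relies on, so it is a faithful (and slightly more detailed) rendering of the same argument rather than a different one.
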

\begin{proof}
  Let the intervals $I_0, I_1, J_1, \dots, I_m, J_m$ be as in the definition of
  $\cI_{\gamma,\delta}$ in \cref{def:partition}.  Then
  $I[R < 1] = \sum_{i=1}^mI[R \in I_i]$ and
  $I[R \geq 1] = I[R \in I_0] + \sum_{i=1}^mI[R \in J_i].$ As noted in
  \cref{def:discret}, the random variables $I[R \in I]$ and $I[\tR \in I]$ are
  almost surely equal.  Thus, almost surely, $R \in I \implies \tR \in I$ for
  each $I \in \cI_{\gamma,\delta}$.  Thus, by the construction of
  $\cI_{\gamma,\delta}$ (see \cref{obv-elem-discrete}), we have
  $\abs{R - \tR} \leq \gamma$ when $R \in I_1$, and
  $\abs{R - \tR} \leq a_{k-1} - a_k = \gamma\delta(1+\delta)^{k-2} =
  \delta(1-a_{k-1}) \leq \delta(1 - R)_{+}$ when $R\in I_k$ for
  $2 \leq k \leq m$.  Combining these gives \cref{item:discrete-low}.
  Similarly, we have $R = \tR$ when $R \in I_0$,
  $\abs{1/R - 1/\tR}R \leq \gamma R$ when $R \in J_1$ and
  $\abs{1/R - 1/\tR}R \leq \delta(R -1)_+$ when $R \in J_k$ for
  $2 \leq k \leq m$.  Combining these and recalling that $\Ex[R] \leq 1$ gives
  \cref{item:discrete-high}.
\end{proof}

\subsection{Proof of \texorpdfstring{\cref{thm-ext-estimate}}{Theorem Reference}}
\begin{proof}[Proof of \cref{thm-ext-estimate}]
  Consider any valid ratio $S$ defined on some probability space
  $(\Omega', \F', P')$.  Then the definitions of $S\indep R$ and $S \indep \tR$
  show that we can assume that $S, R, \tR, S\indep R$ and $S \indep \tR$ are
  all defined on the space
  $(\Omega' \times \Omega, \F' \times \F, P' \times P)$, in such a way that $S$
  is independent of $R$ and $\tR$.  We now note that
  \begin{equation}
    \label{eq:13}
    \begin{aligned}
      \abs{TV(S \indep R) - TV(S \indep \tR)} 
      &= \frac{1}{2}\cdot
        \Big|\Ex\insq{
        \abs{1 - SR} - \abss{1 - S\tR}
        } + \Ex[S(\tR - R)]\Big|\\
      &=\frac{1}{2}\cdot
        \Big|\Ex\insq{
        (\abs{1 - SR} - \abss{1 - S\tR})I[R < 1]
        } \\
      &\qquad\qquad + {\Ex\insq{
        (\abs{1 - SR} - \abss{1 - S\tR})I[R \geq 1]
        }} \\
      &\qquad\qquad + \Ex[S(\tR - R)]\Big|.
    \end{aligned}
  \end{equation}
  Note that the last term is zero since by the independence of $S$ with $R$ and
  $\tR$,
  \begin{equation}
    \label{eq:19}
    \Ex[S(\tR - R)] = \Ex[S]\cdot\Ex[\tR - R] = 0,
  \end{equation}
  since $\tR$ is a conditional expectation of $R$.  For the first term we have
  \begin{equation}
    \label{eq:12}
    \begin{aligned}
      \abs{\Ex\insq{
      (\abs{1 - SR} - \abss{1 - S\tR})I[R < 1]
      }}
      &\leq
        \Ex\insq{
        \abs{\abs{1 - SR} - \abss{1 - S\tR}}\cdot I[R < 1]
        }\\
      &\leq \Ex[S\cdot \abss{R - \tR} \cdot I[R < 1]]\\
      &= \Ex[S] \cdot \Ex[\abss{R - \tR}\cdot I[R < 1]] \leq \gamma + \delta TV(R).
    \end{aligned}
  \end{equation}
  Here, the first inequality is Jensen's inequality, the second inequality uses
  $\abs{\abs{a} - \abs{b}} \leq \abs{a - b}$ for real $a$ and $b$, the equality
  is by the independence of $S$ from $R$ and $\tR$, and the last inequality is
  from \cref{item:discrete-low} of \cref{obv:discrete} and the fact that $S$
  being a valid ratio satisfies $\Ex[S] \leq 1$.

  For the second term, we begin by noting that for any \emph{fixed}
  $\omega' \in \Omega'$, the random variable
  $\abs{\frac{1}{\tR} - S(\omega')}\cdot I[R \geq 1]$ is a random variable on
  the probability space $(\Omega, \F, P)$ that is measurable with respect to the
  sub $\sigma$-algebra $\mathcal{G} \defeq \sigma(\cI_{\gamma,\delta}(R))$
  (since so are $\tR$ and $I[R \geq 1]$).  We then have
  \begin{align}
    \label{eq:15}
    \int\limits_{\omega \in \Omega}{\abs{\frac{1}{\tR} - S(\omega')}\cdot
    I[R \geq 1]\cdot R} \, P(d\omega)
    &= \int\limits_{\omega \in \Omega}\Ex\insq{{\abs{\frac{1}{\tR} - S(\omega')}\cdot
      I[R \geq 1]\cdot R}\Bigg|\mathcal{G}} \, P(d\omega)\\
    = \int\limits_{\omega \in \Omega}\abs{\frac{1}{\tR} - S(\omega')}\cdot
      I[R \geq 1]\cdot \Ex[R|\mathcal{G}] \, P(d\omega)
    &= \int\limits_{\omega \in \Omega}\abs{\frac{1}{\tR} - S(\omega')}\cdot
      I[R \geq 1]\cdot \tR \, P(d\omega),
  \end{align}
  where the first and second equalities are by standard properties of the
  conditional expectation~\citep[Section 9.7 (a, j)]{Wil91}, while the third is
  by the definition of $\tR$ as $E[R|\mathcal{G}]$.  Integrating this equation
  over $\omega'$ using the measure $P'$, we then have from Fubini's theorem that
  \begin{equation}
    \label{eq:16}
    \Ex\insq{\abs{\frac{1}{\tR} - S}\cdot
      I[R \geq 1]\cdot R}
    =
    \Ex\insq{\abs{\frac{1}{\tR} - S}\cdot
      I[R \geq 1]\cdot \tR}.
  \end{equation}
  We can now estimate the second term on the RHS in \cref{eq:13} as
  follows.
  \begin{equation}
    \label{eq:17}
    \begin{aligned}
      \abs{
      \Ex\insq{
      (\abs{1 - SR} - \abss{1 - S\tR})
      I[R \geq 1]
      }}
      &= \abs{\Ex\insq{
        \inp{\abs{\frac{1}{R} - S} R  - \abs{\frac{1}\tR - S}\tR}
        \cdot I[R \geq 1]
        }}\\
      &= \abs{\Ex\insq{
        \inp{\abs{\frac{1}{R} - S} R - \abs{\frac{1}\tR - S}R} \cdot I[R
        \geq 1]}}\\
      &\leq \Ex\insq{
        \abs{\abs{\frac{1}{R} - S}  - \abs{\frac{1}\tR - S}}\cdot R \cdot I[R
        \geq 1]}\\
      &\leq \Ex\insq{
        \abs{\frac{1}{R} - \frac{1}{\tR}} \cdot R \cdot I[R
        \geq 1]} \leq \gamma + \delta TV(R).
    \end{aligned}
  \end{equation}
  Here, the first equality is valid since on the event $\inb{R \geq 1}$ we also
  have $\inb{\tR \geq 1}$ so that $1/R$ and $1/\tR$ are bounded random variables
  on this event.  The second equality uses \cref{eq:16}.  The two inequalities
  after that follow from Jensen's inequality and the triangle inequality,
  exactly as in the estimation of the first term of \cref{eq:13} as carried out
  in \cref{eq:12}. The final inequality uses \cref{item:discrete-high} of
  \cref{obv:discrete}.  The claim now follows by substituting the estimates in
  \cref{eq:19,eq:12,eq:17} into \cref{eq:13}, and then taking supremum over all
  possible choices of $S$.
\end{proof}

\subsection{Omitted Arguments in the Proof of \texorpdfstring{\cref{thm-alg-dis-prod-correct}}{Theorem Reference}}
In the proof of \cref{thm-alg-dis-prod-correct}, we noted that a direct
inductive argument shows that each $Y_i$ is a conditional expectation of
$R_1 \indep R_2 \indep \dots \indep R_i$ (where the notation is as in the proof
of \cref{thm-alg-dis-prod-correct}).  While this is an easy and standard
argument, we include the details here for completeness.

\newcommand{\cF}{\ensuremath{\mathcal{F}}} \newcommand{\cH}{\ensuremath{\mathcal{H}}} \newcommand{\cG}{\ensuremath{\mathcal{G}}} All the following arguments are performed on the probability space
$([M]^n, \mathcal{H}, Q)$, where $\mathcal{H}$ is the $\sigma$-field of all
subsets of $[M]^n$, and $Q$ is the product distribution on this space defined in
\cref{alg-dis-prod-tv}.  For $1 \leq i \leq n$, let $\cF_i$ denote the $\sigma$-field
$\sigma(R_i)$ generated by $R_i$, and define the product $\sigma$-fields
$\cH_i \defeq \cF_1 \times \dots \times \cF_i$.  Under the underlying product
probability distribution $Q$, note that for all $j > i$, any event in $\cH_i$ is
independent of any event in $\cF_j$.\footnote{Note that since
  \cref{thm-alg-dis-prod-correct} is in the setting of finite sets, the
  ``$\sigma$-fields'' here are just finite set systems closed under complement
  and union, and containing the universal set.  However, it is convenient to do
  the argument in this general setting. Note, however, that $\times$ denotes the
  product operation on $\sigma$-fields, and not merely the Cartesian product:
  for example, $\cF_1 \times \cF_2$ is the set system generated by sets of the
  form $\inb{A \times B \st A \in \cF_1 \text{ and } B \in \cF_2}$.}  We will
show by induction that for every $1 \leq i \leq n$, there is a sub
$\sigma$-field $\cG_i \subseteq \cH_i$ such that
$Y_{i} = \Ex[R_{1}R_2\dots R_{i} | \cG_i]$.

The claim is trivially true in the base case $i = 1$, since $Y_1 = R_1$, so that
we can take $\cG_1 = \cF_1 = \cH_1$.  For the inductive case, suppose that
$Y_{i} = \Ex[R_{1}R_2\dots R_{i} | \cG_i]$ for some $\sigma$-field
$\cG_i \subseteq \cH_i$.  Note that this implies that $Y_i$ is $\cG_i$ measurable
(in other words, $\sigma(Y_i) \subseteq \cG_i$).  Thus, the $\sigma$-field
$\cG'_i \defeq \sigma(\cI_{\gamma,\delta}(Y_i)) \subseteq \sigma(Y_i)$ is a
sub-$\sigma$-field of $\cG_i$ as well.  Since $\tilde{Y}_i = \Ex[Y_i|\cG'_i]$,
we thus see from the tower property of conditional expectations~\citep[Section
9.7 (i)]{Wil91} that
\begin{equation}
  \label{eq:11}
  \tilde{Y}_i = \Ex\big[\Ex[R_{1}R_2\dots R_{i} | \cG_i]|\cG'_i\big] =
  \Ex[R_{1}R_2\dots R_{i} | \cG'_i],
\end{equation}
where $\cG'_i \subseteq \cH_{i}$.  Now, define
$\cG_{i+1} \defeq \cG'_i \times \F_{i+1}$, which by definition, is a
sub-$\sigma$-field of $\cH_{i+1}$.  Note also that $R_{i+1}$ is measurable with
respect to $\cF_{i+1}$ (by definition) and hence also with respect to
$\cG_{i+1}$.  Thus, by a standard property of conditional
expectations~\citep[Section 9 (j)]{Wil91},
\begin{equation}
  \label{eq:14}
  \Ex[R_1R_2\dots R_iR_{i+1}|\cG_{i+1}] =\Ex[R_1R_2\dots R_i|\cG_{i+1}] R_{i+1}.
\end{equation}
Now, as noted above, by construction of the product distribution $Q$, every
event in $\cG'_i \subseteq \cH_i$ is independent of every event in $\cF_{i+1}$.
It then follows from the construction of the product $\sigma$-field and the
interaction between independence and conditional expectation~\citep[Section
9.7 (k)]{Wil91} that
\begin{equation}
  \label{eq:21}
  \tilde{Y}_i \overset{\textup{\cref{eq:11}}}{=}
  \Ex[R_{1}R_2\dots R_{i} | \cG'_i] = \Ex[R_{1}R_2\dots R_{i} | \cG'_i \times \F_{i+1}].
\end{equation}
Combining \cref{eq:14,eq:21}, we thus see that $Y_{i+1}=\tilde{Y}_iR_{i+1}$
satisfies
\begin{equation}
  \label{eq:26}
  Y_{i+1} = \Ex[R_1R_2\dots R_iR_{i+1}|\cG_{i+1}],
\end{equation}
which completes the induction.

\subsection{Proof of \texorpdfstring{\cref{lem-prod-discrete}}{Theorem Reference}}

\begin{proof}[Proof of \cref{lem-prod-discrete}]
  Define $Z_i = \tR_1 \indep \tR_2 \indep \dots \indep \tR_{i-1}$, and
  $S_i = R_{i+1} \indep R_{i+2} \indep \dots \indep R_n$ (so that
  $Z_1 = S_n = 1$).  Then we have
  \begin{multline}
    \label{eq:18}
    \abs{TV(R_1 \indep R_2 \indep \dots \indep R_n) - TV(\tR_1 \indep \tR_1 \indep
      \dots \indep \tR_{n})} = \\
    \leq \sum_{i=1}^n \abs{TV(Z_i \indep S_i \indep
      R_i) - TV(Z_i \indep S_i \indep \tR_i)} \leq \sum_{i=1}^n \ext(R_i, \tR_i),
  \end{multline}
  where the last inequality is by the definition of the extension distance.  The
  claim now follows from \cref{thm-ext-estimate}.  (Note that we also use
  implicitly the fact that the $\indep$ operation is ``commutative'' in the
  sense that $R_1 \indep R_2$ has the same law as $R_2 \indep R_1$.)
\end{proof}

\section{Proofs omitted from Section\texorpdfstring{~\ref{sec:alg}}{Theorem Reference}}
\label{sec:proofs-omitted-from-1}
We need to use the following result of~\cite{DMR23}.
\begin{lemma}[\textbf{\citet[Theorem 1.3]{DMR23}}]\label{lem:lower-one}
Let $X \sim \gaussian(\mu,\sigma^2)$ and $Y \sim \gaussian(0,1)$ be two one-dimensional Gaussians, then 
\begin{align*}
\frac{1}{200} \min \left\{ 1, \max\{|1-\sigma^2|, 40|\mu|\} \right\}   \leq  \dtv \inp {X,Y} \leq \frac{3|\sigma^2-1|}{2} + \frac{|\mu|}{2}.
\end{align*}
\end{lemma}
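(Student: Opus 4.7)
The plan is to prove the upper and lower bounds separately, using different techniques for each.

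For the upper bound, I would apply the triangle inequality through the intermediate distribution $\gaussian(\mu,1)$:
$\dtv\inp{\gaussian(\mu,\sigma^2), \gaussian(0,1)} \leq \dtv\inp{\gaussian(\mu,\sigma^2), \gaussian(\mu,1)} + \dtv\inp{\gaussian(\mu,1), \gaussian(0,1)}$.
The first term equals $\dtv\inp{\gaussian(0,\sigma^2), \gaussian(0,1)}$ by shift-invariance (\Cref{prop-bijection} with $f(x) = x - \mu$). The second term, a pure mean-shift, can be computed exactly: the densities $\phi(x)$ and $\phi(x - \mu)$ cross at $x = \mu/2$, so it equals $2\Phi(|\mu|/2) - 1 = \int_{-|\mu|/2}^{|\mu|/2} \phi(t)\,dt \leq |\mu|/\sqrt{2\pi} \leq |\mu|/2$, where $\Phi$ is the standard Gaussian CDF. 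For the pure variance-shift term, if $|\sigma^2 - 1| \geq 2/3$ the claimed bound $3|\sigma^2-1|/2$ already exceeds $1$ and holds trivially; otherwise, Pinsker's inequality combined with the closed-form identity $2\,\mathrm{KL}(\gaussian(0,\sigma^2)\,\|\,\gaussian(0,1)) = \sigma^2 - 1 - \ln\sigma^2$ and a Taylor expansion at $\sigma^2 = 1$ (where the right-hand side vanishes to second order, starting with $(\sigma^2-1)^2/2$) yields a linear bound that is easily checked to fit inside $3|\sigma^2-1|/2$.

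For the lower bound, I would exploit the algebraic identity $\min\{1, \max\{|1-\sigma^2|, 40|\mu|\}\} = \max\{\min\{1, |1-\sigma^2|\}, \min\{1, 40|\mu|\}\}$, which reduces the task to proving two separate inequalities: a variance inequality $\dtv\inp{X, Y} \geq \min\{1, |1-\sigma^2|\}/200$ and a mean inequality $\dtv\inp{X, Y} \geq \min\{1, 40|\mu|\}/200$. Each is proved by exhibiting an explicit test event and performing a short case split on the non-targeted parameter. For the mean inequality I would take the half-line $A = (\mu/2, \infty)$: direct computation gives $\Pr\inp{X \in A} - \Pr\inp{Y \in A} = \Phi(\mu/(2\sigma)) - \Phi(-\mu/2)$, and for $\sigma^2$ in a bounded neighborhood of $1$ (say $[1/2, 2]$) this exceeds $|\mu|/5$ by the mean value theorem applied to $\Phi$, using that $\phi$ is bounded below on a compact interval; for $\sigma$ outside this neighborhood, $|1-\sigma^2| \geq 1/2$ already forces a stronger TV bound than the mean inequality requires. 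The variance inequality is handled symmetrically using a symmetric interval $A = [-c, c]$ for a suitable constant $c$, extracting a factor $|\sigma^2-1|$ by applying the fundamental theorem of calculus in the parameter $\sigma^2$ to the Gaussian CDF; the regime of large $|\mu|$ is absorbed by the same sort of argument in reverse.

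The main obstacle is making these case analyses interlock cleanly while preserving the stated constants. The natural half-line test for the mean inequality only delivers a signal of order $|\mu|/\sqrt{2\pi} \approx 0.2|\mu|$, uncomfortably close to the target constant $1/5$, so one must track absolute constants carefully and exploit pointwise lower bounds on $\phi$ over the active interval. The conservative coefficients $1/200$ and $40$ in the statement are calibrated exactly so that, in every regime, at least one of the two tests is strong enough to carry the bound, and all accumulated constants from the mean value theorem estimates on the Gaussian density and CDF can be absorbed.
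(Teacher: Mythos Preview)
The paper does not prove this lemma at all: it is quoted as Theorem~1.3 of \citet{DMR23} and used as a black box (see the start of \Cref{sec:proofs-omitted-from-1}). So there is no proof in the paper to compare your attempt against, and your sketch already goes beyond what the paper supplies.

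That said, your approach is the natural one and essentially works. The upper bound via the triangle inequality through $\gaussian(\mu,1)$, handling the mean shift by the exact crossing computation and the variance shift by Pinsker plus the closed-form KL, is correct. For the lower bound, reducing to a separate mean inequality and variance inequality and testing each with a half-line or a symmetric interval is also standard. The one loose thread is precisely the interlocking you flag: when you write that ``for $\sigma^2$ outside $[1/2,2]$, $|1-\sigma^2|\geq 1/2$ already forces a stronger TV bound than the mean inequality requires'', be careful that if you are invoking the \emph{variance inequality} here, it only yields $\dtv\geq \min\{1,|1-\sigma^2|\}/200\geq 1/400$, whereas the mean inequality can demand up to $1/200$. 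You therefore need a direct argument that $|1-\sigma^2|\geq 1/2$ forces $\dtv$ above an absolute constant comfortably exceeding $1/200$ for \emph{every} $\mu$ (e.g.\ test an interval of fixed width centred at $\mu$ rather than at $0$), and symmetrically for the large-$|\mu|$ regime inside the variance inequality. With that small patch the argument closes.
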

\subsection{Proof of Lemma~\ref{lem:lower}}\label{app:lower}
\begin{proof}[Proof of \Cref{lem:lower}]
Using \Cref{lem:lower-one}, we know that $\Delta_i \leq \dtv \inp {X_i,Y_i}$. Furthermore, if $\max\{|1-\Sigma_{i,i}|, 40|\mu_i|\} \geq 1$, then $\Delta_i = 1/200$ and $10^4 \Delta_i > 1 \geq \dtv \inp{X_i,Y_i}$. If $\max\{|1-\Sigma_{i,i}|, 40|\mu_i|\}< 1$, we have
\begin{align*}
  10^4 \Delta_i \geq  \frac{25}{2} |1-\Sigma_{i,i}| + \frac{25}{2}\times 40|\mu_i| \geq \frac{3}{2}|1-\Sigma_{i,i}| + \frac{1}{2}|\mu_i| \geq \dtv \inp{X_i,Y_i}. 
\end{align*}
Hence, for any $1 \leq i \leq n$, we have 
\begin{align*}
    \Delta_i \leq \dtv \inp{X_i,Y_i} \leq 10^4\Delta_i.
\end{align*}
Recall $\Delta = \max_i \Delta_i$. We have $\Delta \leq \max_{i}\dtv
\inp{X_i,Y_i} \leq \dtv \inp{X,Y}$. In the other direction,  $ n 10^4\Delta \geq
10^4\sum_{i=1}^n\Delta_i \geq \sum_{i=1}^n\dtv \inp{X_i,Y_i} \geq
\dtv\inp{X,Y}$, where the last inequality holds due to $\sum_{i=1}^n \dtv
\inp{X_i,Y_i} \geq  \dtv \inp {X,Y}$. To verify the last inequality, consider a
coupling of $X,Y$ that couples each dimensional independently and optimally, and
the inequality follows from the coupling characterization of the total variation
distance.
\end{proof}

\end{document}